\def\Xrightarrow[#1]{\xrightarrow{#1}\!\!^*}
\def\XRightarrow[#1]{\xRightarrow{#1}\!\!^*}
\title{Static Analysis of Multithreaded Recursive Programs Communicating via Rendez-vous\thanks{This work was partially funded by the FUI project FREENIVI.}}
\author{Adrien Pommellet\inst{1} \and Tayssir Touili\inst{2}}
\institute{LIPN and Université Paris-Diderot, France
	\and LIPN, CNRS, and Université Paris 13, France}
\begin{document}

\maketitle

\begin{abstract}
	We present in this paper a generic framework for the analysis of multi-threaded programs with recursive procedure calls, synchronisation by rendez-vous between parallel threads, and dynamic creation of new threads. To this end, we consider a model called \emph{Synchronized Dynamic Pushdown Networks} (SDPNs) that can be seen as a network of pushdown processes executing synchronized transitions, spawning new pushdown processes, and performing internal pushdown actions. The reachability problem for this model is unfortunately undecidable. Therefore, we tackle this problem by introducing an abstraction framework based on Kleene algebras in order to compute an abstraction of the execution paths between two regular sets of configurations. We combine an automata theoretic saturation procedure with constraint solving in a finite domain. We then apply this framework to an iterative abstraction refinement scheme, using multiple abstractions of increasing complexity and precision.
	
	\keywords{dynamic pushdown networks, synchronization, execution \\ paths, Kleene abstractions}
\end{abstract}

The use of parallel programs has grown in popularity in the past fifteen years, but these remain nonetheless fickle and vulnerable to specific issues such as race conditions or deadlocks. Static analysis methods for this class of programs remain therefore more relevant than ever.

\emph{Pushdown Systems} (PDSs) are a natural model for programs with sequential, recursive procedure calls, as shown by Esparza et al. in \cite{EHRS-cav00}. Thus, networks of pushdown systems can be used to model multithreaded programs, where each PDS in the network models a sequential component of the whole program. In this context, \emph{Dynamic Pushdown Networks} (DPNs) were introduced by Bouajjani et al. in \cite{BMT-concur05}.

Intuitively, this class of automata consists of a network of pushdown systems running independently in parallel. Each member of a DPN can, after a transition, spawn a new PDS which is then introduced as a new member of the network. Thus, DPNs can be used to represent a network of threads where each thread can recursively call procedures, perform internal actions, or spawn a new thread.

However, this model cannot represent synchronization between different \linebreak threads or parallel components. In order to handle communication in multithreaded programs, Bouajjani et al. introduced in \cite{BET-popl03} \emph{communicating pushdown systems} (CPDSs), a model which consists of a tuple of pushdown systems synchronized by rendez-vous on execution paths. However, CPDSs have a constant number of processes and cannot therefore handle dynamic creation of new threads.

Hence, we introduce a more accurate model, namely, \emph{synchronized dynamic pushdown networks} (SDPNs) that combines DPNs with CPDSs in order to handle dynamic thread creation and communication at the same time.

A SDPN can be seen as a DPN where PDS processes can synchronize via rendez-vous by sending and receiving messages. In a SDPN, pushdown processes can apply internal actions labeled by a letter $\tau$ without synchronization, just like a DPN, but can also synchronize through channels.

To do so, we represent each channel by a pair of letters, as an example $a$ and $\overline{a}$, that can be used to label transitions. If one thread can execute an action labeled with a signal $a$, and another thread another action labeled with $\overline{a}$, then both threads can synchronize and execute their respective transitions simultaneously, in a single step labeled by $\tau$.

We consider the reachability problem for SDPNs, that is, finding if a critical configuration can be reached from the set of starting configurations of the program. An equivalent problem is to compute the set $Paths ( \mathcal{C}, \mathcal{C'} )$ of execution paths leading from a configuration in $\mathcal{C}$ to a configuration in $C'$ and check if it is empty. This problem unfortunately remains undecidable for synchronized pushdown systems, as proven by Ramalingam in \cite{R-acm00}.

Therefore, the set of execution paths $Paths ( C, C' )$ cannot be computed in an exact manner. To overcome this problem, we proceed in a manner similar to the method outlined in \cite{BET-popl03}: our approach is based on the computation of an abstraction $\alpha ( Paths ( \mathcal{C}, \mathcal{C'} ) )$ of the execution paths language. To this aim, we propose techniques based on:
\begin{itemize}
	\item the representation of regular sets of configurations of SDPNs with finite word automata;
	
	\item the use of these automata to determine a set of constraints whose least fixpoint characterizes the set of execution paths of the program; to compute this set of constraints, (1) we consider a relaxed semantics on SDPNs that allows partially synchronized runs, (2) we abstract sets of execution paths as functions in a Kleene algebra, instead of simple elements of the abstract domain, and (3) we use a shuffle product on abstract path expressions to represent the interleaving and potential synchronization of parallel executions;
	
	\item the resolution of this set of constraints in an abstract domain; we consider in particular the case where the abstract domain is finite; the set of constraints can then be solved using an iterative fixpoint computation.
\end{itemize}
Note that the main contribution of our approach with regards to the methods outlined \cite{BET-popl03, T-vissas05} is the introduction of functions to represent sets of abstracted path expressions and the use of a shuffle product to model the interleaving of threads. The abstraction framework as defined in these papers cannot be applied to SDPNs due to the presence of dynamic thread creation, hence, the need for functions and shuffling.

We can then apply this over-approximation framework for the reachability problem to an \emph{iterative abstraction refinement} scheme inspired by the work of Chaki et al. in \cite{CCKRT-etaps06}. The idea is the following: (1) we do a reachability analysis of the program, using a finite domain abstraction of order $n$ in our over-approximation framework; if the target set of configurations is not reachable by the abstract paths, it is not reachable by actual execution paths either; otherwise, we obtain a counter-example; (2) we check if the counter-example can be matched to an actual execution of the program; (3) if it does, then we have shown that the target set of configurations is actually reachable; (4) otherwise, we refine our abstraction and use instead a finite domain abstraction of order $n + 1$ in step (2). This scheme is then used to prove that a Windows driver first presented in \cite{QW-pldi04} can reach an erroneous configuration, using an abstraction of the original program. An updated version of this driver is then shown to be error-free.

This paper is a full, corrected version of \cite{PT-aplas17}.

\smallskip
\noindent
{\bf Paper outline.}
In Section $1$ of this paper, we define \emph{synchronized dynamic pushdown networks} (SDPNs). We study in Section $2$ the reachability problem for this class of automata. We introduce in Section $3$ an automata-theoretic representation of sets of paths, and prove in Section $4$ that the set of execution paths between two sets of configurations $C$ and $C'$ of a SDPN is the least solution of a set constraints. Since we can't solve these constraints, we present in Section $5$ an abstraction framework for paths based on Kleene algebras. In Section $6$, we apply this framework to over-approximate the reachability problem. In Section $7$, we present a \emph{iterative abstraction refinement} scheme that relies on our abstraction framework and apply it to a model of an actual program in section $8$. Finally, we describe the related work in Section $9$ and show our conclusion in Section $10$.

\section[Synchronized dynamic pushdown networks]{Synchronized dynamic pushdown networks}

\subsection{Dynamic pushdown networks}

We briefly introduce this class of automata:

\begin{definition}[Bouajjani et al. \cite{BMT-concur05}]
	A \emph{dynamic pushdown network} (DPN) is a triplet $M = ( P, \Gamma, \Delta )$ where $P$ is a finite set of control states, $\Gamma$ a finite stack alphabet disjoint from $P$, and $\Delta \subseteq ( P \Gamma \times P \Gamma^* ) \cup ( P \Gamma \times P \Gamma^* P \Gamma^* )$ a finite set of transition rules featuring:
	\begin{itemize}
		\item simple pushdown operations in $( P \Gamma \times P \Gamma^* )$ of the form $p \gamma \rightarrow p' w$;
		\item thread spawns in $( P \Gamma \times P \Gamma^* P \Gamma^* )$ of the form $p \gamma \rightarrow p_1 w_1 p_2 w_2$.
	\end{itemize}
\end{definition}

Let $Conf_M = \left(P \Gamma^* \right)^*$ be the set of configurations of a DPN $M$. A configuration $p_1 w_1 \ldots p_n w_n$ represents a network of $n$ processes where the $i$-th process is in control point $p_i$ and has stack content $w_i$, as shown in Figure \ref{fig:4-DPNconf} where a single word in $Conf_M$ is used to represent the state of three PDSs in a network.

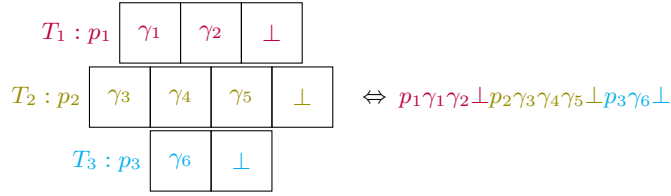
\begin{figure}
	\centering
	\begin{minipage}{.4\linewidth}
		\centering
		\begin{tikzpicture}
		% Paramètres		
		\edef\sizetape{0.8cm}
		\tikzstyle{tmtape}=[draw,minimum size=\sizetape]
		
		% Ruban
		\begin{scope}[start chain=1 going right,node distance=0mm, outer sep=0mm]
		\node [on chain=1,tmtape,draw=none]{$\textcolor{purple}{T_1 : p_1}$};
		\node [on chain=1,tmtape]{$\textcolor{purple}{\gamma_1}$};
		\node [on chain=1,tmtape]{$\textcolor{purple}{\gamma_2}$};
		\node [on chain=1,tmtape]{$\textcolor{purple}{\bot}$};
		\end{scope}
		\end{tikzpicture}
		\begin{tikzpicture}
		% Paramètres		
		\edef\sizetape{0.8cm}
		\tikzstyle{tmtape}=[draw,minimum size=\sizetape]
		
		% Ruban
		\begin{scope}[start chain=1 going right,node distance=0mm, outer sep=0mm]
		\node [on chain=1,tmtape,draw=none]{$\textcolor{olive}{T_2 : p_2}$};
		\node [on chain=1,tmtape]{$\textcolor{olive}{\gamma_3}$};
		\node [on chain=1,tmtape]{$\textcolor{olive}{\gamma_4}$};
		\node [on chain=1,tmtape]{$\textcolor{olive}{\gamma_5}$};
		\node [on chain=1,tmtape]{$\textcolor{olive}{\bot}$};
		\end{scope}
		\end{tikzpicture}
		\begin{tikzpicture}
		% Paramètres		
		\edef\sizetape{0.8cm}
		\tikzstyle{tmtape}=[draw,minimum size=\sizetape]
		
		% Ruban
		\begin{scope}[start chain=1 going right,node distance=0mm, outer sep=0mm]
		\node [on chain=1,tmtape,draw=none]{$\textcolor{cyan}{T_3 : p_3}$};
		\node [on chain=1,tmtape]{$\textcolor{cyan}{\gamma_6}$};
		\node [on chain=1,tmtape]{$\textcolor{cyan}{\bot}$};
		\end{scope}
		\end{tikzpicture}
	\end{minipage}
	\begin{minipage}{.5\linewidth}
		$\Leftrightarrow \medspace \textcolor{purple}{p_1 \gamma_1 \gamma_2 \bot} \textcolor{olive}{p_2 \gamma_3 \gamma_4 \gamma_5 \bot} \textcolor{cyan}{p_3 \gamma_6 \bot}$
	\end{minipage}
	\captionof{figure}{Representing configurations of a DPN.}
	\label{fig:4-DPNconf}
\end{figure}

We define an immediate successor relation $\rightarrow_{M}$ on $Conf_M$ according to the following semantics:
\begin{itemize}
	\item if $p \gamma \rightarrow p' w$ in $\Delta$, then $\forall u, v \in Conf_M$, $\forall w' \in \Gamma^*$, $u p \gamma w' v \rightarrow_{M} u p' w w' v$; a thread applies a pushdown operation on its own stack, as shown in Figure \ref{fig:4-DPNpushpop};
	
	\item if $p \gamma \rightarrow p_1 w_1 p_2 w_2$ in $\Delta$, then $\forall u, v \in Conf_M$, $\forall w' \in \Gamma^*$, $u p \gamma w' v \rightarrow_{M} u p_1 w_1 p_2 w_2 w' v$; a thread spawns a new son with its own stack and control state, as shown in Figure \ref{fig:4-DPNspawn}.
\end{itemize}

\begin{figure}
	\centering
	\begin{minipage}{.45\linewidth}
		\centering
		\begin{tikzpicture}
		% Paramètres		
		\edef\sizetape{0.8cm}
		\tikzstyle{tmtape}=[draw,minimum size=\sizetape]
		
		% Ruban
		\begin{scope}[start chain=1 going right,node distance=0mm, outer sep=0mm]
		\node [on chain=1,tmtape,draw=none]{$T_1 : p_1$};
		\node [on chain=1,tmtape]{$\gamma_1$};
		\node [on chain=1,tmtape]{$\gamma_2$};
		\end{scope}
		\end{tikzpicture}
		\begin{tikzpicture}
		% Paramètres		
		\edef\sizetape{0.8cm}
		\tikzstyle{tmtape}=[draw,minimum size=\sizetape]
		
		% Ruban
		\begin{scope}[start chain=1 going right,node distance=0mm, outer sep=0mm]
		\node [on chain=1,tmtape,draw=none]{$\textcolor{olive}{T_2 : p_2}$};
		\node [on chain=1,tmtape]{$\textcolor{olive}{\gamma_3}$};
		\node [on chain=1,tmtape]{$\gamma_4$};
		\node [on chain=1,tmtape]{$\gamma_5$};
		\end{scope}
		\end{tikzpicture}
		\begin{tikzpicture}
		% Paramètres		
		\edef\sizetape{0.8cm}
		\tikzstyle{tmtape}=[draw,minimum size=\sizetape]
		
		% Ruban
		\begin{scope}[start chain=1 going right,node distance=0mm, outer sep=0mm]
		\node [on chain=1,tmtape,draw=none]{$\textcolor{purple}{T_3 : p_3}$};
		\node [on chain=1,tmtape]{$\gamma_6$};
		\end{scope}
		\end{tikzpicture}
	\end{minipage}
	\hspace{.01\linewidth}
	\begin{minipage}{.45\linewidth}
		\centering
		\begin{tikzpicture}
		% Paramètres		
		\edef\sizetape{0.8cm}
		\tikzstyle{tmtape}=[draw,minimum size=\sizetape]
		
		% Ruban
		\begin{scope}[start chain=1 going right,node distance=0mm, outer sep=0mm]
		\node [on chain=1,tmtape,draw=none]{$T_1 : p_1$};
		\node [on chain=1,tmtape]{$\gamma_1$};
		\node [on chain=1,tmtape]{$\gamma_2$};
		\end{scope}
		\end{tikzpicture}
		\begin{tikzpicture}
		% Paramètres		
		\edef\sizetape{0.8cm}
		\tikzstyle{tmtape}=[draw,minimum size=\sizetape]
		
		% Ruban
		\begin{scope}[start chain=1 going right,node distance=0mm, outer sep=0mm]
		\node [on chain=1,tmtape,draw=none]{$\textcolor{olive}{T_2 : p'_2}$};
		\node [on chain=1,tmtape]{$\gamma_4$};
		\node [on chain=1,tmtape]{$\gamma_5$};
		\end{scope}
		\end{tikzpicture}
		\begin{tikzpicture}
		% Paramètres		
		\edef\sizetape{0.8cm}
		\tikzstyle{tmtape}=[draw,minimum size=\sizetape]
		
		% Ruban
		\begin{scope}[start chain=1 going right,node distance=0mm, outer sep=0mm]
		\node [on chain=1,tmtape,draw=none]{$\textcolor{purple}{T_3 : p'_3}$};
		\node [on chain=1,tmtape]{$\textcolor{purple}{\gamma_7}$};
		\node [on chain=1,tmtape]{$\gamma_6$};
		\end{scope}
		\end{tikzpicture}
	\end{minipage}
	\captionof{figure}{A DPN with 3 threads after a \textcolor{olive}{pop} from $T_2$ and a \textcolor{purple}{push} on $T_3$.}
	\label{fig:4-DPNpushpop}
\end{figure}
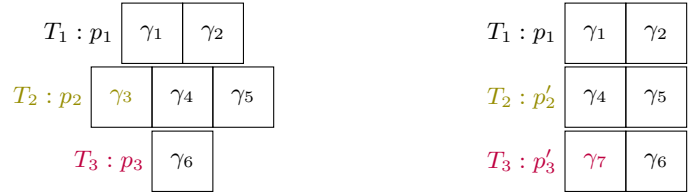

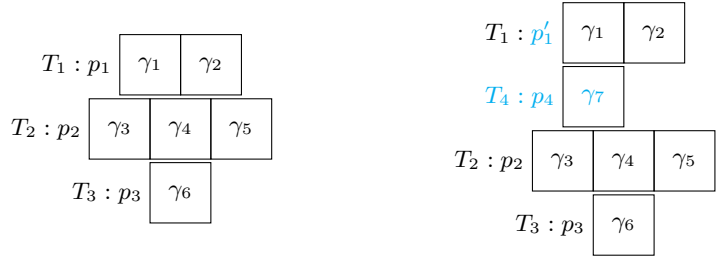
\begin{figure}
	\centering
	\begin{minipage}{.45\linewidth}
		\centering
		\begin{tikzpicture}
		% Paramètres		
		\edef\sizetape{0.8cm}
		\tikzstyle{tmtape}=[draw,minimum size=\sizetape]
		
		% Ruban
		\begin{scope}[start chain=1 going right,node distance=0mm, outer sep=0mm]
		\node [on chain=1,tmtape,draw=none]{$T_1 : p_1$};
		\node [on chain=1,tmtape]{$\gamma_1$};
		\node [on chain=1,tmtape]{$\gamma_2$};
		\end{scope}
		\end{tikzpicture}
		\begin{tikzpicture}
		% Paramètres		
		\edef\sizetape{0.8cm}
		\tikzstyle{tmtape}=[draw,minimum size=\sizetape]
		
		% Ruban
		\begin{scope}[start chain=1 going right,node distance=0mm, outer sep=0mm]
		\node [on chain=1,tmtape,draw=none]{$T_2 : p_2$};
		\node [on chain=1,tmtape]{$\gamma_3$};
		\node [on chain=1,tmtape]{$\gamma_4$};
		\node [on chain=1,tmtape]{$\gamma_5$};
		\end{scope}
		\end{tikzpicture}
		\begin{tikzpicture}
		% Paramètres		
		\edef\sizetape{0.8cm}
		\tikzstyle{tmtape}=[draw,minimum size=\sizetape]
		
		% Ruban
		\begin{scope}[start chain=1 going right,node distance=0mm, outer sep=0mm]
		\node [on chain=1,tmtape,draw=none]{$T_3 : p_3$};
		\node [on chain=1,tmtape]{$\gamma_6$};
		\end{scope}
		\end{tikzpicture}
	\end{minipage}
	\hspace{.01\linewidth}
	\begin{minipage}{.45\linewidth}
		\centering
		\begin{tikzpicture}
		% Paramètres		
		\edef\sizetape{0.8cm}
		\tikzstyle{tmtape}=[draw,minimum size=\sizetape]
		
		% Ruban
		\begin{scope}[start chain=1 going right,node distance=0mm, outer sep=0mm]
		\node [on chain=1,tmtape,draw=none]{$T_1 : \textcolor{cyan}{p'_1}$};
		\node [on chain=1,tmtape]{$\gamma_1$};
		\node [on chain=1,tmtape]{$\gamma_2$};
		\end{scope}
		\end{tikzpicture}
		\begin{tikzpicture}
		% Paramètres		
		\edef\sizetape{0.8cm}
		\tikzstyle{tmtape}=[draw,minimum size=\sizetape]
		
		% Ruban
		\begin{scope}[start chain=1 going right,node distance=0mm, outer sep=0mm]
		\node [on chain=1,tmtape,draw=none]{$\textcolor{cyan}{T_4 : p_4}$};
		\node [on chain=1,tmtape]{$\textcolor{cyan}{\gamma_7}$};
		\node [on chain=1,tmtape,draw=none]{};
		\end{scope}
		\end{tikzpicture}
		\begin{tikzpicture}
		% Paramètres		
		\edef\sizetape{0.8cm}
		\tikzstyle{tmtape}=[draw,minimum size=\sizetape]
		
		% Ruban
		\begin{scope}[start chain=1 going right,node distance=0mm, outer sep=0mm]
		\node [on chain=1,tmtape,draw=none]{$T_2 : p_2$};
		\node [on chain=1,tmtape]{$\gamma_3$};
		\node [on chain=1,tmtape]{$\gamma_4$};
		\node [on chain=1,tmtape]{$\gamma_5$};
		\end{scope}
		\end{tikzpicture}
		\begin{tikzpicture}
		% Paramètres		
		\edef\sizetape{0.8cm}
		\tikzstyle{tmtape}=[draw,minimum size=\sizetape]
		
		% Ruban
		\begin{scope}[start chain=1 going right,node distance=0mm, outer sep=0mm]
		\node [on chain=1,tmtape,draw=none]{$T_3 : p_3$};
		\node [on chain=1,tmtape]{$\gamma_6$};
		\end{scope}
		\end{tikzpicture}
	\end{minipage}
	\captionof{figure}{A DPN with 3 threads after thread $T_1$ \textcolor{cyan}{spawns} a new thread $T_4$.}
	\label{fig:4-DPNspawn}
\end{figure}

Let $\rightarrow^*_{M}$ be the transitive and reflexive closure of this relation. Given a set $\mathcal{C} \subseteq Conf_M$ of configurations, we introduce its set of \emph{predecessors} $pre^* ( M, \mathcal{C} ) = \lbrace c \in Conf_M \mid \exists c' \in \mathcal{C}, c \Rightarrow_\mathcal{M} c' \rbrace$. If $\mathcal{C}$ is regular, this set can be effectively computed:

\begin{theorem}[Bouajjani et al. \cite{BMT-concur05}]
	\label{thm_pre_DPN_regular}
	Given a DPN $M$ and a regular set of configurations $\mathcal{C} \subseteq Conf_M$, the set $pre^* ( M, \mathcal{C} )$ of predecessors is regular.
\end{theorem}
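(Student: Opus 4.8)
The plan is to adapt the classical saturation procedure for $pre^*$ on pushdown systems (à la Bouajjani--Esparza--Maler and Esparza et al.) to the dynamic setting, where a DPN configuration is a word over the alphabet $P \cup \Gamma$ that concatenates several stack contents. The key observation is that, because a spawned thread is inserted \emph{to the left} of (i.e.\ before) the spawning thread's remaining stack and never interferes with it afterwards, the backward reachability relation behaves ``locally'': a run from $u p \gamma w' v$ only rewrites the factor $p \gamma w'$ belonging to one thread, while $u$ and $v$ are untouched (except that $u$ itself may be produced by spawns). This suggests that $pre^*(M,\mathcal{C})$ can be recognized by a finite automaton obtained by saturating an automaton for $\mathcal{C}$.

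Concretely, first I would fix a finite automaton $\mathcal{A}$ with state set $Q$ recognizing $\mathcal{C} \subseteq (P\Gamma^*)^*$, and assume without loss of generality that $\mathcal{A}$ has no transitions into its initial states and that states are partitioned so that reading a control letter $p \in P$ moves to a designated ``stack-reading'' region. Then I would build $\mathcal{A}'$ by adding transitions according to saturation rules mirroring the two rule types: for a simple rule $p\gamma \to p' w$, whenever the automaton can read $p' w$ from some state $q_1$ to some state $q_2$ along a path that stays within a single thread's block, add a path reading $p\gamma$ from $q_1$ to $q_2$; for a spawn rule $p\gamma \to p_1 w_1 p_2 w_2$, whenever the automaton reads $p_1 w_1 p_2 w_2$ from $q_1$ to $q_2$ (again within what will become one thread's block, now possibly split), add a transition reading $p\gamma$ from $q_1$ to $q_2$. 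The subtlety is bookkeeping the ``thread boundary'' information inside the automaton's states so that the newly added $p\gamma$-segments are glued at the correct granularity; the standard trick is to enrich $Q$ with a bounded amount of control-state information (which $P$-symbol currently ``owns'' the segment being read). Since $P$, $\Gamma$ and $\Delta$ are finite and we only ever add transitions, the procedure terminates, yielding a finite automaton $\mathcal{A}'$.

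The correctness argument then splits into the two usual inclusions. For soundness, $L(\mathcal{A}') \subseteq pre^*(M,\mathcal{C})$: by induction on the number of saturation steps used to accept a word, each added transition reading $p\gamma$ corresponds, by the rule that generated it, to a single $\rightarrow_M$ step applied to the enclosing thread, and concatenating these gives a run to a configuration in $\mathcal{C}$; one must check that the local rewriting is compatible with the surrounding context $u,v$, which holds because the DPN semantics quantifies over all $u,v \in Conf_M$ and all $w' \in \Gamma^*$. For completeness, $pre^*(M,\mathcal{C}) \subseteq L(\mathcal{A}')$: by induction on the length of a run $c \Rightarrow^*_M c'$ with $c' \in \mathcal{C}$; if $c = c'$ then $c \in L(\mathcal{A}) \subseteq L(\mathcal{A}')$, and otherwise the first step uses some rule at some thread, and by induction hypothesis the successor configuration is accepted by $\mathcal{A}'$ along a path whose relevant sub-path reads the right-hand side of that rule within the corresponding block; the matching saturation transition then lets $\mathcal{A}'$ read the left-hand side, so $c \in L(\mathcal{A}')$.

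The main obstacle I anticipate is \emph{not} the termination or the high-level induction, but getting the automaton-state bookkeeping exactly right so that spawn rules compose correctly: a spawn rewrites one stack $\gamma w'$ into two adjacent stacks $w_1$ and $p_2 w_2 w'$, and a backward step must recognize a pattern spanning what is, in the forward direction, a thread and (part of) its parent. Ensuring that the saturation does not wrongly glue segments belonging to unrelated threads — i.e.\ enforcing that between the $q_1$ and $q_2$ endpoints the automaton really reads a contiguous piece that a single ancestor thread could have produced — is where the careful definition of the enriched state space and of the saturation rules is needed, and it is the crux of the proof. Everything else (finiteness, the two inclusions) is then routine, following the template of $pre^*$ computations for ordinary pushdown systems.
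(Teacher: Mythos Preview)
Your proposal is correct and follows essentially the same approach as the paper (which in turn defers to \cite{BMT-concur05}). The ``state bookkeeping'' you anticipate is realised concretely by the notion of $M$-automaton: states are partitioned into $S_C$ (expecting a control letter) and $S_S$ (reading stack letters), with a dedicated state $s_p$ for each pair $(s,p)$ with $s\in S_C$, $p\in P$; the saturation rules then add a single $\gamma$-transition out of $s_p$ whenever the current automaton already reads the right-hand side from $s$, which is exactly your rules $(R_1)$/$(R_2)$ once the bookkeeping is fixed. The two-inclusion correctness argument you sketch is the standard one.
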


The saturation algorithm used to compute $pre^* ( M, \mathcal{C} )$ is detailled in Section \ref{compute_preDPN}.

\subsection{The model and its semantics}

We introduce a new model:

\begin{definition}
	A \emph{synchronized dynamic pushdown Network} (SDPN) is a tuple $M = ( Act, P, \Gamma, \Delta )$ where $Act$ is a finite set of actions, $P$ a finite set of control states, $\Gamma$ a finite stack alphabet disjoint from $P$, and $\Delta \subseteq ( P \Gamma \times Act \times P \Gamma^* ) \cup ( P \Gamma \times Act \times P \Gamma^* P \Gamma^* )$ a finite set of transition rules.
\end{definition}

If $( p \gamma, a, w ) \in \Delta$, $p \in P$, $\gamma \in \Gamma$, $a \in Act$, and $w \in P \Gamma^* \cup P \Gamma^* P \Gamma^*$, we write that $p \gamma \xrightarrow{a} w \in \Delta$. There are two types of transition rules in a SDPN:
\begin{itemize}
	\item rules of the form $p \gamma \xrightarrow{a} p' w$ in $P \Gamma \times Act \times P$ allow a pushdown process in the network to pop a symbol $\gamma$ from its stack, push a word $w$, then move from state $p$ to $p'$; these rules are standard pushdown rules and model a thread calling or ending procedures while moving through its control flow;
	
	\item rules of the form $p \gamma \xrightarrow{a} p'' w' p' w$ in $P \Gamma \times Act \times P \Gamma^* P \Gamma^*$ allow a pushdown process in the network to pop a symbol $\gamma$ from its stack, push a word $w$, move from state $p$ to $p'$, then spawn a new pushdown process in state $p''$ and with initial stack $w'$; these rules model dynamic creation of new threads.
\end{itemize}

We assume that the set $Act$ contains a letter $\tau$ that represents internal or synchronized actions, and that other letters in $Lab = Act \setminus \{ \tau \}$ model synchronization signals. Moreover, to each synchronization signal $a$ in $Lab$, we can match an unique co-action $\overline{a} \in Lab$, such that $\overline{\overline{a}} = a$.

We introduce the set $Conf_M = \left(P \Gamma^* \right)^*$ of configurations of a SDPN $M$. In a manner similar to DPNs, a configuration $p_1 w_1 \ldots p_n w_n$ represents a network of $n$ processes where the $i$-th process is in control point $p_i$ and has stack content $w_i$.

\subsubsection{The strict semantics.}

We will model synchronization between threads as a form of \emph{communication by rendez-vous}: two pushdown processes can synchronize if one performs a transition labelled with $a$ and the other, a transition labelled with $\overline{a}$. Intuitively, one thread sends a signal over a channel and the other thread waits for a signal to be received along the same channel.

To this end, we define a strict transition relation $\dashrightarrow_{M}$ on configurations of $M$ according to the following \emph{strict semantics}:
\begin{description}
	\item[(1)] given a symbol $a \in Act$, two rules $p \gamma \xrightarrow{a} w_1$ and $p' \gamma' \xrightarrow{\overline{a}} w'_1$ in $\Delta$, and two configurations $u = u_1 p \gamma u_2 p' \gamma' u_3$ and $v=u_1 w_1 u_2 w'_1 u_3$ of $M$, we have $u \dashrightarrow_{M} v$; two synchronized processes perform a simultaneous action, as shown in Figure \ref{fig:4-strict1};
	
	\item[(2)] given a rule $p \gamma \xrightarrow{\tau} w_1$ in $\Delta$ and two configurations $u = u_1 p \gamma u_2$ and $v = u_1 w_1 u_2$ of $M$, we have $u \dashrightarrow_{M} v$; a process performs an internal action, as shown in Figure \ref{fig:4-strict2}.
\end{description}

\begin{figure}
	\centering
	\begin{tabular}{ccccc}
		$u_1$ & \textcolor{teal}{$p \gamma w$} & $u_2$ & \textcolor{olive}{$p' \gamma' w'$} & $u_3$ \\
		& $\downarrow a$ & & $\downarrow \overline{a}$ & \\
		$u_1$ & \textcolor{teal}{$w_1 w$} & $u_2$ & \textcolor{olive}{$w'_1 w'$} & $u_3$
	\end{tabular}
	\captionof{figure}{Semantics of synchronized actions.}
	\label{fig:4-strict1}
\end{figure}

\begin{figure}
	\centering
	\begin{tabular}{ccc}
		$u_1$ & \textcolor{purple}{$p \gamma w$} & $u_2$ \\
		& $\downarrow \tau$ & \\
		$u_1$ & \textcolor{purple}{$w_1 w$} & $u_2$
	\end{tabular}
	\captionof{figure}{Semantics of internal actions.}
	\label{fig:4-strict2}
\end{figure}

We say that $v$ is reachable from $u$ with regards to the strict semantics if $u \dashrightarrow^*_{M} v$, where $\dashrightarrow^*_{M}$ stands for the transitive closure of $\dashrightarrow_{M}$.

The strict semantics accurately model communication by rendez-vous. However, for technical matters, we also need to consider a relaxed semantics for SDPNs.

\subsubsection{The relaxed semantics.}

The \emph{relaxed semantics} on SDPNs allows partially synchronized executions on a SDPN: a process can perform a transition labelled with $a \in Lab$ even if doesn't synchronize with a matching process executing a transition labelled with $\overline{a}$.

We therefore introduce a relaxed transition relation $\rightarrow_{M}$ labelled in $Act$ on configurations of $M$:
\begin{description}
	\item[(1) \& (2)] given two configurations $u$ and $v$ of $M$, $u \dashrightarrow_{M} v$ if and only if $u \xrightarrow{\tau}_M v$; $\rightarrow_{M}$ features rules \textbf{(1)} and \textbf{(2)} of $\dashrightarrow_{M}$;
	
	\item[(3)] given a rule $p \gamma \xrightarrow{a} w_1$ in $\Delta$, a word $w_1 \in ( P \Gamma^* ) \cup ( P \Gamma^* )^2$, and two configurations $u = u_1 p \gamma u_2$ and $v = u_1 w_1 u_2$ of $M$, we have $u \xrightarrow{a}_M v$; a process performs an action but does not synchronize, as shown in Figure \ref{fig:4-relax}.
\end{description}
The restriction of the relaxed semantics to rules \textbf{(2)} and \textbf{(3)} yields the DPN semantics, as defined by Bouajjani et al. in \cite{BMT-concur05}.

\begin{figure}
	\centering
	\begin{tabular}{ccc}
		$u_1$ & \textcolor{purple}{$p \gamma w$} & $u_2$ \\
		& $\downarrow a$ & \\
		$u_1$ & \textcolor{purple}{$w_1 w$} & $u_2$
	\end{tabular}
	\captionof{figure}{Semantics of unsynchronized actions.}
	\label{fig:4-relax}
\end{figure}

For a given word $\sigma = a_1 \ldots a_n \in Act^*$ and two configurations $c$, $c'$ of $M$, we write that $c \Xrightarrow[\sigma]_{M} c'$ if there are $n$ configurations $c_1, \ldots, c_n$ of $M$ such that $c \xrightarrow{a_1}_M c_1 \xrightarrow{a_2}_M c_2 \ldots \xrightarrow{a_n}_M c_n$ and $c_n = c'$. We then say that $c'$ is reachable from $c$ with regards to the relaxed semantics. For a given set of configurations $C$, we introduce $pre^* ( M, C ) = \{ c' \mid \exists c \in C, \exists w \in \Gamma^*, c'\Xrightarrow[w]_M c \}$.

For two subsets $C$ and $C'$ of $Conf_M$, we define the set $Paths_M ( C, C' ) = \lbrace \sigma \in Act^* \mid \exists c \in C, \exists c' \in C', c \Xrightarrow[\sigma]_{M} c' \rbrace$ of all execution paths from $C$ to $C'$, including paths with non-synchronized actions labelled in $Lab$.

\subsection{From a program to a SDPN model}
\label{sdpn_program_model}

We can assume that the program is given by a \emph{control flow graph}, whose nodes represent control points of threads or procedures and whose edges are labelled by statements. These statements can be variable assignments, procedure calls or returns, spawns of new threads, or communications between threads through unidirectional point-to point channels, where a thread sends a value $x$ through a channel $c$ and another thread waits for this value then assigns it to a variable $y$.

Without loss of generality, we assume that threads share no global variables and instead can only synchronize through channels. We distinguish local variables that belong to a single procedure from thread-local variables that can be accessed by any procedure called by a given instance of a thread. We also consider that both local and global variables may only take a finite number of values.

Given a control flow graph, we define a corresponding SDPN. The set of states $P$ is the set of all possible valuations of thread-local variables. The stack alphabet $\Gamma$ is the set of all pairs $( n, l )$ where $n$ is a node of the flow graph and $l$ is a valuation of the local variables of the current procedure.

Channels can be used to send and receive values. For each channel $c$ and value $x$ that can be sent through $c$, a label $( c!, x ) $ and its co-action $( c?, x ) = \overline{( c!, x )}$ belong to $Act$. The internal action $\tau$ belongs to $Act$ as well.

For each statement $s$ labelling an edge of the flow graph between nodes $n_1$ and $n_2$, we introduce the following transition rules in the corresponding SDPN, where $g_1$ and $g_2$ (resp. $l_1$ and $l_2$) are the valuations of thread-local (resp. procedure-local) variables before and after the execution of the statement:
\begin{itemize}
	\item if $s$ is an assignment, rules of the form $g_1 ( n_1, l_1 ) \xrightarrow{\tau} g_2 ( n_2, l_2 )$ represent $s$; assigning new values to variables in $g_1$ and $l_1$ results in new valuations $g_2$ and $l_2$;
	
	\item if $s$ is a procedure call, rules of the form $g_1 ( n_1, l_1 ) \xrightarrow{\tau} g_2 ( f_0, l_0 ) ( n_2, l_2 )$ represent $s$, where $f_0$ is the starting node of the called procedure and $l_0$ the initial valuation of its local variables;
	
	\item if $s$ is a procedure return, it is represented by rules of the form $g_1 ( n_1, l_1 ) \xrightarrow{\tau} g_2 \varepsilon$; we simulate returns of values by introducing an additional thread-local variable and assigning the return value to it in the valuation $g_2$;
	
	\item if $s$ is a thread spawn, it is represented by rules of the form $g_1 ( n_1, l_1 ) \xrightarrow{\tau} g_0 ( n_0, l_0 ) g_2 ( n_2, l_2 )$, where $g_0$ and $l_0$ are respectively the initial valuations of the thread-local and procedure-local variables of the new thread, and $n_0$ its starting node;
	
	\item if $s$ is an assignment of a value $x$ carried through a channel $c$ to a variable $y$, it is represented by rules of the form $g_1 ( n_1, l_1 ) \xrightarrow{( c?, x )} g_2 ( n_2, l_2 )$ where $g_1$ and $g_2$ (resp. $l_1$ and $l_2$) are such that assigning the value $x$ to the variable $y$ in $g_1$ (resp. $l_1$) results in the new valuations $g_2$ (resp. $l_2$);
	
	\item if $s$ is an output through a channel $c$ of the value $x$ of a variable $y$, it is represented by rules of the form $g_1 ( n_1, l_1 ) \xrightarrow{( c!, x )} g_2 ( n_2, l_2 )$ such that the variable $y$ has value $x$ in either $g_1$ or $l_1$.
\end{itemize}

Finally, we consider the starting configuration $g_{init} ( n_{init}, l_{init} )$ where $g_{init}$ and $l_{init}$ are respectively the initial valuations of the thread-local and procedure-local variables of the main thread, and $n_{init}$ its starting node.

\section{The reachability problem}

As described previously in section \ref{sdpn_program_model}, we can model the behaviour of a real multi-threaded program with a SDPN. Many static analysis techniques rely on being able to determine whether a given critical state is reachable or not from the starting configuration of a program.

Since checking reachability in a real program amounts to checking reachability in its corresponding SDPN w.r.t to the strict semantics, we want to solve the following \emph{reachability problem}: given a SDPN $M$ and two sets of configuration $C$ and $C'$, is there a configuration in $C'$ that is reachable from $C$ with regards to the \emph{strict} semantics?

It has unfortunately been proven by Ramalingam in \cite{R-acm00} that, even if $C$ and $C'$ are regular, this problem is undecidable for synchronization sensitive pushdown systems, hence, SDPNs. Therefore, we reduce this problem to an execution path analysis of SDPNs with \emph{relaxed} semantics.

\subsection{From the strict to the relaxed semantics}

It is easy to see that the following theorem holds:
\begin{theorem}
	\label{thm_strict_relaxed}
	Let $M$ be a SDPN and $c$, $c'$ two configurations of $M$; $c \dashrightarrow^*_{M} c'$ if and only if $\exists n \geq 0$ such that $c \Xrightarrow[\tau^n]_{M} c'$.
\end{theorem}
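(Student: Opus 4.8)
The statement is essentially a bookkeeping equivalence between the strict relation $\dashrightarrow_M$ and the $\tau$-labelled restriction of the relaxed relation $\xrightarrow{\tau}_M$. The plan is to prove the single-step claim $c \dashrightarrow_M c'$ iff $c \xrightarrow{\tau}_M c'$, and then lift it to the reflexive-transitive closures by an easy induction on the number of steps. For the single-step claim I would simply inspect the two clauses of the strict semantics. Clause \textbf{(2)} of $\dashrightarrow_M$ (an internal $\tau$-action of a single process) is literally clause \textbf{(2)} of $\rightarrow_M$, which is labelled $\tau$; so these steps coincide. The only real content is clause \textbf{(1)}: a strict rendez-vous step $u_1 p\gamma u_2 p'\gamma' u_3 \dashrightarrow_M u_1 w_1 u_2 w_1' u_3$ using matching rules $p\gamma \xrightarrow{a} w_1$ and $p'\gamma' \xrightarrow{\overline a} w_1'$. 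I would show this is exactly the composition of two relaxed steps: first apply rule \textbf{(3)} of $\rightarrow_M$ with the rule $p\gamma \xrightarrow{a} w_1$ to obtain $u_1 w_1 u_2 p'\gamma' u_3$ (an $a$-labelled step), then apply rule \textbf{(3)} with $p'\gamma' \xrightarrow{\overline a} w_1'$ to reach $v$ (an $\overline a$-labelled step). Conversely, from $c \xrightarrow{\tau}_M c'$: by definition of the relaxed semantics the only $\tau$-labelled single steps are those coming from clause \textbf{(2)}, which is clause \textbf{(2)} of the strict semantics, so $c \dashrightarrow_M c'$ directly.

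Having this, the "only if" direction of the theorem follows: if $c \dashrightarrow_M^* c'$, decompose it into $k$ strict steps; each strict step is either a single $\tau$-step of $\rightarrow_M$ or, by the argument above, two consecutive relaxed steps labelled $a$ then $\overline a$. But in the latter case we want a $\tau$-labelled run, so instead of expanding into two $\rightarrow_M$ steps I should observe directly that each strict step is a $\xrightarrow{\tau}_M$ step (clause \textbf{(2)}), respectively that each strict step is a member of $\dashrightarrow_M = \xrightarrow{\tau}_M$ as single steps — wait, clause \textbf{(1)} strict steps are \emph{not} single $\tau$-steps of $\rightarrow_M$. Here is the cleaner route: the theorem as stated uses $\Xrightarrow[\tau^n]_M$, i.e. a relaxed run whose \emph{trace} is $\tau^n$, not $n$ single $\tau$-steps; reading the macro definition, $c \Xrightarrow[\tau^n]_M c'$ means there are configurations $c_1,\dots,c_n$ with $c \xrightarrow{\tau}_M c_1 \xrightarrow{\tau}_M \cdots \xrightarrow{\tau}_M c_n = c'$. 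So in fact I do need each strict step to be one $\tau$-labelled relaxed step, which holds for clause \textbf{(2)} but seemingly not clause \textbf{(1)}. The resolution is that the relaxed relation $\rightarrow_M$ retains clauses \textbf{(1)} and \textbf{(2)} of $\dashrightarrow_M$ \emph{as $\tau$-labelled steps} — indeed the excerpt's clause "\textbf{(1) \& (2)}" says precisely $u \dashrightarrow_M v$ iff $u \xrightarrow{\tau}_M v$. So a clause \textbf{(1)} rendez-vous \emph{is} a single $\tau$-labelled relaxed step by fiat, and the single-step equivalence $c \dashrightarrow_M c' \iff c \xrightarrow{\tau}_M c'$ is immediate from that defining clause plus the observation that no \emph{other} relaxed step is $\tau$-labelled (clause \textbf{(3)} steps carry labels in $Lab$, never $\tau$).

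With the single-step equivalence in hand, both directions are a trivial induction on $n$: $c \dashrightarrow_M^* c'$ holds iff there is a chain of single $\dashrightarrow_M$ steps, iff (replacing each by the corresponding single $\xrightarrow{\tau}_M$ step) there is a chain $c \xrightarrow{\tau}_M \cdots \xrightarrow{\tau}_M c'$ of some length $n \geq 0$, iff $c \Xrightarrow[\tau^n]_M c'$ for that $n$. The $n = 0$ base case is the reflexive case $c = c'$.

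The main obstacle is not mathematical depth — there is essentially none — but making sure the definitional conventions line up: one must read the relaxed-semantics clause "\textbf{(1) \& (2)}" as asserting that strict steps are imported verbatim but relabelled $\tau$, and read the $\Xrightarrow{}$ macro as a trace-labelled run rather than a count of atomic steps. Once those two readings are fixed, the proof is a one-line observation followed by a one-line induction, and I would present it at exactly that level of brevity.
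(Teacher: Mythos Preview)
Your proposal is correct and matches the paper's approach, which is essentially no proof at all: the paper simply says ``it is easy to see'' and then gives the one-sentence intuition that a relaxed path of shape $\tau^n$ uses only internal or synchronized actions. Your key observation---that the defining clause \textbf{(1) \& (2)} of the relaxed semantics asserts the single-step biconditional $u \dashrightarrow_M v \iff u \xrightarrow{\tau}_M v$ outright, after which the reflexive-transitive closure is a trivial induction on $n$---is exactly the content of the result. One small imprecision: you write that ``clause \textbf{(3)} steps carry labels in $Lab$, never $\tau$'', but the paper's clause \textbf{(3)} does not explicitly exclude $a = \tau$; however, a clause \textbf{(3)} step with $a = \tau$ is already a clause \textbf{(2)} step of $\dashrightarrow_M$, so the biconditional still holds and your argument goes through unchanged.
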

Intuitively, an execution path with regards to the relaxed semantics of the form $\tau^n$ only uses internal actions or synchronized actions between two threads: a synchronization signal $a$ is always paired with its co-action $\overline{a}$. Any configuration reachable using this path can be reached with regards to the strict semantics as well. Such a path is said to be \emph{perfectly synchronized}.

Therefore, the reachability problem amounts to determining whether:
\[Paths_M ( C, C' ) \cap \tau^* = \emptyset\]
that is, if there is an execution path from $C$ to $C'$ with regards to the relaxed semantics of the form $\tau^n$. Obviously, we can't always compute $Paths_M ( C, C' )$. Our idea is therefore to compute an abstraction (over-approximation) of the set $Paths_M ( C, C' )$ and check the emptiness of its intersection with $\tau^*$: if it is indeed empty, then $C'$ can't be reached from $C$ with regards to the strict semantics.

It is worth noting that a configuration $p'_1 w'_1 p'_2 w'_2$ reachable from $p_1 w_1 p_2 w_2$ with regards to the strict semantics by synchronizing two rules $p_1 w_1 \xrightarrow{a} p'_1 w'_1$ and $p_2 w_2 \xrightarrow{\overline{a}} p'_2 w'_2$ using the synchronization rule \textbf{(1)} can obviously be reached with regards to the relaxed semantics by applying these two rules sequentially, using rule \textbf{(3)} twice, although the resulting path would obviously not be perfectly synchronized. Hence, the following theorem holds:
\begin{theorem}
	\label{thm_sdpn_dpn}
	Let $M$ be a SDPN and $c$, $c'$ two configurations of $M$; $c'$ is reachable from $c$ w. r. t. the relaxed SDPN semantics if and only if it is reachable w. r. t. the DPN semantics.
\end{theorem}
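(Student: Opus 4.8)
The plan is to unfold the definitions on both sides and observe that the two semantics generate exactly the same immediate‑successor relation once the synchronization rule \textbf{(1)} is removed. Recall from the excerpt that the relaxed relation $\rightarrow_M$ is defined by rules \textbf{(1)}, \textbf{(2)}, \textbf{(3)}, while the DPN semantics of Bouajjani et al. is, by the remark in the excerpt, exactly the restriction of $\rightarrow_M$ to rules \textbf{(2)} and \textbf{(3)}. So the only difference is whether the synchronized step \textbf{(1)} is available. The key observation — already stated informally just before the theorem — is that any single application of rule \textbf{(1)}, which rewrites $u_1 p\gamma u_2 p'\gamma' u_3$ into $u_1 w_1 u_2 w'_1 u_3$ using $p\gamma \xrightarrow{a} w_1$ and $p'\gamma' \xrightarrow{\overline a} w'_1$, can be simulated by two consecutive applications of rule \textbf{(3)}: first fire $p\gamma \xrightarrow{a} w_1$ at the first occurrence, obtaining $u_1 w_1 u_2 p'\gamma' u_3$, then fire $p'\gamma' \xrightarrow{\overline a} w'_1$ at the (unchanged) second occurrence. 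Conversely, every rule \textbf{(2)} or \textbf{(3)} step is itself a relaxed step, so the DPN reachability relation is trivially contained in the relaxed one.

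The proof would then proceed in two directions. For the ``only if'' direction, I would take a relaxed run $c \Xrightarrow[\sigma]_M c'$ and induct on its length; each step is an instance of \textbf{(1)}, \textbf{(2)}, or \textbf{(3)}. Steps of type \textbf{(2)} and \textbf{(3)} are kept verbatim (they are DPN steps), and each step of type \textbf{(1)} is replaced by the two‑step gadget described above, yielding a DPN run from $c$ to $c'$. I should check that the contexts match: in rule \textbf{(1)} the two rewritten factors are disjoint occurrences $p\gamma$ and $p'\gamma'$ separated by $u_2$, so applying \textbf{(3)} to the first leaves the second factor and the surrounding context $u_1, u_2, u_3$ intact, and the second \textbf{(3)} application then produces precisely $u_1 w_1 u_2 w'_1 u_3$; this holds whether $w_1, w'_1$ are of the ``simple'' or ``spawn'' shape, since rule \textbf{(3)} is stated for $w_1 \in (P\Gamma^*) \cup (P\Gamma^*)^2$. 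For the ``if'' direction, a DPN run is already a relaxed run because the DPN semantics is a sub‑relation of $\rightarrow_M$; no work beyond citing the remark in the excerpt is needed.

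One subtlety worth a sentence in the write‑up: the theorem speaks of plain reachability (the configuration $c'$ is reachable, without constraining the label word), so I do not need the simulating run to preserve the label sequence $\sigma$ — only the endpoints $c$ and $c'$. This is exactly what makes the argument clean, since the two‑step simulation of a synchronized action changes a single $\tau$‑labelled (under the strict reading) step into two separately labelled steps; that relabelling is irrelevant here. There is essentially no main obstacle; the only thing to be careful about is the bookkeeping of the occurrence positions and the surrounding context words when translating a \textbf{(1)}‑step into two \textbf{(3)}‑steps, and making sure the induction is set up on run length rather than on some structural measure. I would therefore present it as a short, direct proof by induction on the length of the run, with the rule \textbf{(1)} case carrying the (trivial) content.
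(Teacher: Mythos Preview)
Your proposal is correct and matches the paper's approach exactly: the paper treats the theorem as an immediate consequence of the remark (stated just before the theorem) that a single synchronized step of rule \textbf{(1)} can be simulated by two sequential applications of rule \textbf{(3)}, while the converse inclusion is trivial since the DPN semantics is a sub-relation of the relaxed one. Your write-up is simply a more explicit version of this, with the induction on run length and the context bookkeeping spelled out.
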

It implies that, since we can compute $pre^* ( M, C )$ with regards to the DPN semantics thanks to Theorem \ref{thm_pre_DPN_regular}, we can compute it with regards to the relaxed SDPN semantics as well.

\subsection{Representing infinite sets of configurations}

In order to compute an abstraction of $Paths_M ( C, C' )$ we need to be able to finitely represent infinite sets of configurations of a SDPN $M$. To do so, we introduce a class of finite automata called $M$-automata:

\begin{definition}[Bouajjani et al. \cite{BMT-concur05}]
	Given a SDPN $M = ( Act, P, \linebreak \Gamma, \Delta )$, an $M$-automaton is a finite automaton $A = ( \Sigma, S, \delta, s_{init}, F)$ such that:
	\begin{itemize}
		\item $\Sigma = P \cup \Gamma$ is the input alphabet;
		\item the set of states $S = S_C \cup S_S$ can be partitioned in two disjoint sets $S_C$ and $S_S$;
		\item $\delta \subseteq S \times \Sigma \times S$ is the set of transitions;
		\item $\forall s \in S_C$ and $\forall p \in P$, there is at most a single state $s_p$ such that $( s, p, s_p ) \in \delta$; moreover, $s_p \in S_S$ and $s$ is the only predecessor of $s_p$; transitions from states in $S_C$ are always labelled with state symbols in $P$ and go to dedicated states in $S_S$;
		\item states in $S_C$ do not have exiting transitions labelled with letters in $\Gamma$;
		\item states in $S_S$ do not have exiting transitions labelled in $P$; transitions labelled with letters in $\Gamma$ always go to states in $S_S$;
		\item transitions from $S_S$ to $S_C$ are always labelled with $\varepsilon$; these are the only allowed $\varepsilon$-transitions in the $M$-automaton;
		\item $s_{init} \in S_C$ is the initial state;
		\item $F \subseteq S_C$ is the set of final states.
	\end{itemize}
\end{definition}

An $M$-automaton is designed in such a manner that every path accepting a configuration $p_1 w_1 \ldots p_n w_n$ is a sequence of sub-paths $s_i \xrightarrow{p_i}_\delta s_p \Xrightarrow[w_i]_\delta q \xrightarrow{\varepsilon}_\delta s_{i + 1}$ where $s_i \in S_C$, $s_{i+1} \in S_C$ and every state in the path $s_p \Xrightarrow[w_i]_\delta q$ is in $S_S$. Being a finite state automaton, an $M$-automaton accepts a regular language that is a subset of $Conf_M$. Any regular language in $( P \Gamma^* )^*$ can be accepted by an $M$-automaton, as shown in Figure \ref{fig:4-exMconf}.

\begin{figure}
	\centering
	\begin{tikzpicture}[shorten >=1pt, node distance=2.5cm, on grid, auto] 
	\node[state,initial] (s) {\textcolor{purple}{$s$}}; 
	\node[state] (sp1) [right=of s] {\textcolor{teal}{$s_{p_1}$}};
	\node[state] (q1) [right=of sp1] {\textcolor{olive}{$q_1$}};
	\node[state] (ss) [below=of s] {\textcolor{purple}{$s'$}}; 
	\node[state] (ssp2) [right=of ss] {\textcolor{teal}{$s'_{p_2}$}}; 
	\node[state] (q2) [right=of ssp2] {\textcolor{olive}{$q_2$}};
	\node[state] (qf) [right=of q2] {\textcolor{olive}{$q_F$}};
	\path[->] 
	(s) edge node {$p_1$} (sp1)
	(sp1) edge node {$\gamma_1$} (q1)
	(q1) edge[loop above] node {$\gamma_1$} (q1)
	(q1) edge[in=90, out=270] node {$\varepsilon$} (ss)
	(ss) edge node {$p_2$} (ssp2)
	(ssp2) edge node {$\gamma_2$} (q2)
	(q2) edge node {$\gamma_3$} (qf);
	\end{tikzpicture}
	\captionof{figure}{Accepting a regular set $p_1 \gamma_1^{+} p_2 \gamma_2 \gamma_3$ with an $M$-automaton.}
	\label{fig:4-exMconf}
\end{figure}
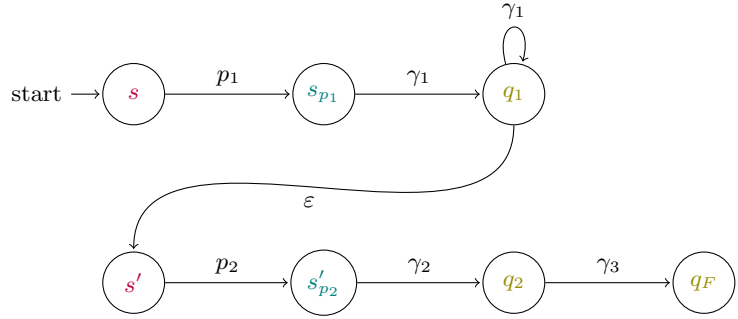

$M$-automata were introduced so that one could compute the set of predecessors of a DPN, hence, of a SDPN as well, by applying a saturation procedure to an $M$-automaton accepting the set of starting configurations, as shown by Bouajjani et al. in \cite{BMT-concur05}.

\section{Representing the set of paths}

In this section, we introduce an automata-theoretic representation of sets of synchronized paths by adding extra labels to $M$-automata.

\subsection{$\Pi$-configurations}

Let $\Pi = 2^{Act^*}$ be the set of all possible languages on $Act$. We define a $\Pi$-configuration of $M$ as a pair $(c, \pi) \in Conf_M \times Act^*$. We can extend the transition relation $\longrightarrow_{M}$ to $\Pi$-configurations with the following semantics: $\forall a \in Act$, if $c \xrightarrow{a}_M c'$, then $\forall \pi \in Act^*$, $( c, a \cdot \pi ) \longrightarrow_{M, \Pi} ( c', \pi )$. The configuration $( c, a \cdot \pi )$ is said to be an immediate $\Pi$-predecessor of $( c', \pi )$. The reachability relation $\leadsto_{M, \Pi}$ is the reflexive transitive closure of the relation $\longrightarrow_{M, \Pi}$.

Given a set of configurations $C$, we introduce the set of $\Pi$-predecessors $pre_{\Pi}^{*} ( M, C )$ of all $\Pi$-configurations $( c', \pi ) \in Conf_M \times Act^*$ such that $( c', \pi ) \linebreak \leadsto_{M, \Pi} ( c, \varepsilon )$ for $c \in C$. Obviously, we have:
\[ pre_{\Pi}^{*} ( M, C ) = \left\{ ( c', \pi ) \mid c' \in pre^* ( M, C ), \pi \in Paths_M ( \{ c' \}, C ) \right\}\]

Intuitively, $(c', \pi)$ is in $pre_{\Pi}^{*} ( M, C )$ if one can reach a configuration $c \in C$ from $c'$ by following a path $\pi$.

\subsection{The shuffle product}

Assuming we know the path languages of two different threads, we want to compute the path language of these two threads running in parallel.

Intuitively, this new language will be an interleaving of the two aforementioned sets, but can feature synchronized actions between the two threads as well.

To this end, we define inductively a shuffle operation $\shuffle : Act^* \times Act^* \rightarrow \Pi$ such that, given two paths, their shuffle product is the set of all possible interleaving (with synchronization) of these paths.

Let $w = a_1 \ldots a_n$ and $w' = b_1 \ldots b_m$ be two such paths:
\begin{itemize}
	\item $w \shuffle \varepsilon = \varepsilon \shuffle w = \{ w \}$;
	
	\item if $b_1 \neq \overline{a_1}$, then:
	$w \shuffle w' = a_1 \cdot [(a_2 \ldots a_n) \shuffle ( b_1 \ldots b_m )] \cup b_1 \cdot [(a_1 \ldots a_n) \shuffle (b_2 \ldots b_m)]$;
	
	\item if $b_1 = \overline{a_1}$, then:
	$w \shuffle w' = a_1 \cdot [(a_2 \ldots a_n) \shuffle ( b_1 \ldots b_m )] \cup b_1 \cdot [(a_1 \ldots a_n) \shuffle (b_2 \ldots b_m)] \cup \tau \cdot [(a_2 \ldots a_n) \shuffle (b_2 \ldots b_m)]$; two synchronized actions $a_1$ and $\overline{a_1}$ result in an internal action $\tau$, hence, there is a component $\tau \cdot ( w_{1} \shuffle w_{2} )$ of the shuffle product where the two paths synchronize.
\end{itemize}

The shuffle operation is obviously \emph{commutative} and \emph{associative}, as outlined in \cite{lothaire-1997}. We can extend naturally the operation $\shuffle$ to sets of paths: $\kappa_1 \shuffle \kappa 2 = \mathop{\bigcup} \limits_{\pi_1 \in \kappa_1, \pi_2 \in \kappa_2} (\pi_1 \shuffle \pi_2)$. It is still commutative and associative.

\subsection{$\Pi$-automata}

We represent sets of $\Pi$-configurations of a SDPN $M$ with a class of labelled $M$-automata, called $\Pi$-automata.

\begin{definition}
	Let $M = ( Act, P, \Gamma, \Delta )$ be a SDPN, a $\Pi$-automaton is a finite automaton $A = ( \Sigma, S, \delta, s_{init}, F )$ where $\Sigma = P \cup \Gamma$ is the input alphabet, $S = S_C \cup S_S$ is a finite set of control states with $S_C \cap S_S = \emptyset$, $\delta \subseteq ( S_C \times P \times S_S ) \cup ( S_S \times \Gamma \times \Pi^{\Pi} \times S_S ) \cup ( S_S \times \{ \varepsilon \} \times S_C )$ a finite set of transition rules (where $\Pi^\Pi$ is the set of functions from $\Pi$ to $\Pi$), $s_{init}$ an initial state, and $F$ a set of final states.
	
	Moreover, $A$ is such that, if we consider the projection $\delta_\Sigma$ of $\delta$ on $S \times \Sigma^* \times S$, ignoring labels in $\Pi^\Pi$, then $( \Sigma, S, \delta_\Sigma, s_{init}, F )$ is a $M$-automaton.
\end{definition}

Intuitively, a $\Pi$-automaton can be seen as an $M$-automaton whose transitions labelled by stack symbols in $\Gamma$ have been given an additional label in $\Pi^\Pi$. We can consider a simple $M$-automaton as a $\Pi$-automaton if we label each transition in $S_S \times \Gamma \times S_S$ with the identity function.

While it would be simpler to label transitions of a $M$-automaton with elements of $\Pi$, this representation would be flawed for the purpose of the algorithms outlined in Section \ref{labelling_constraints}. The intuition behind the use of functions in $\Pi^\Pi$ as labels is detailed there.

\subsubsection{The transition relation.}

Let $A$ be a $\Pi$-automaton. We define a simple transition relation $\longrightarrow_A$ according to the following semantics:
\begin{itemize}
	\item if $( s, p, s' ) \in \delta \cap ( S \times ( P \cup \{ \varepsilon \} ) \times S )$, then $s \xrightarrow{p}_A s'$;
	
	\item if $( s, \gamma, e, s' ) \in \delta \cap ( S_S \times \Gamma \times \Pi^{\Pi} \times S_S )$, then $s \xrightarrow{( \gamma, e )}_A s'$;
	
	\item if $s \xrightarrow{( w_{1}, e_{1} )}_A s_1$ and $s_1\xrightarrow{( w_{2}, e_{2} )}_A s'$, then $s \xrightarrow{( w_{1} w_2, e_{1} \circ e_{2} )}_A s'$, where $\circ$ is the composition operation on functions.
\end{itemize}

We then extend inductively this transition relation to a full path relation $\Longrightarrow_A \subseteq S \times \Sigma^* \times \Pi^{\Pi} \times S$:
\begin{itemize}
	\item for each $s \in S_S$, $s \xRightarrow{( \varepsilon, Id )}_A s$, where $Id$ stands for the identity function;
	
	\item if there is a sequence $s_{0} \xrightarrow{( \gamma_{1}, e_{1} )}_A s_{1} \ldots s_{n-1} \xrightarrow{( \gamma_{n}, e_{n} )}_A s_{n}$ with $s_0, \ldots, s_n \in S_S$, then $s_{0} \xRightarrow{( w, e )}_A s_{n}$, where $w = \gamma_{1} \ldots \gamma_{n}$ and $e = e_{1} \circ \ldots \circ e_{n}$; this is a simple sequence of actions along a single thread;
	
	\item if there is a sequence $s \xrightarrow{p_1}_A s_{p_1} \xRightarrow{( w_1, e_1 )}_A q \xrightarrow{\varepsilon}_A s' \xrightarrow{p_2}_A s'_{p_2} \xRightarrow{( w_2, e_2 )}_A q'$ such that $q', q \in S_S$ and $s, s' \in S_C$, then $s \xRightarrow{( w, e )}_A q$, where $w = p_1 w_1 p_2 w_2$ and $e: y \longrightarrow e_1 ( \{ \varepsilon \} ) \shuffle e_2 ( y )$; the automaton represents two parallel processes $p_1 w_1$ and $p_2 w_2$ whose abstract execution paths must be shuffled; moreover, since the first process will no longer be extended by further transitions of the automaton, we get rid of the variable of $e_1$ by considering $e_1 ( \{ \varepsilon \} )$ instead.
\end{itemize}
Note that this path relation is well-defined because $\shuffle$ is associative.

A path $s_{0} \xRightarrow{( c, e )}_A s_{n}$ is said to be an \emph{execution} of $A$ if $s_0 = s_{init}$. It is then said to be \emph{accepting} if $s_n \in F$. We then say that $A$ \emph{accepts} $( c, \pi )$ for all $\pi \in \Pi$ such that $\pi \in e ( \{ \varepsilon \} )$. This way, accepting execution paths in $\Pi$-automata can be used to represent whole sets of paths. We define the set $L_\Pi ( A )$ of all $\Pi$-configurations of $M$ accepted by $A$.

\section{Characterizing the set of paths}

Let $C$ be a regular set of configurations of a SDPN $M = ( Act, P, \Gamma, \Delta )$. We want to define a $\Pi$-automaton $A_{pre_{\Pi}^{*}}$ accepting $pre_{\Pi}^{*} ( M, C )$. Our intuition is to add extra labels in $\Pi^\Pi$ to the $M$-automaton accepting $pre^{*} ( M, C )$.

\subsection{Computing $pre^* ( M, C )$}
\label{compute_preDPN}

Given a SDPN $M$ and a regular set $C$ of configurations of $M$ accepted by an $M$-automaton $A$, we want to compute an $M$-automaton $A_{pre^*}$ accepting $pre^* ( M, C )$. Thanks to Theorem \ref{thm_sdpn_dpn}, we can apply the saturation procedure defined in \cite{BMT-concur05} to $A$. Let us remind this procedure. Initially, $A_{pre^*} = A$, then we apply the following rules until saturation to $A_{pre^*}$:
\begin{description}
	\item[$( R_1 )$] if $p \gamma \xrightarrow{a} p' w \in \Delta$ and $s \Xrightarrow[p'w]_{A_{pre^*}} s'$ for $s \in S_S$, $s' \in S$, then add $s_p \Xrightarrow[\gamma]_{A_{pre^*}} s'$;
	
	\item[$( R_2 )$] if $p \gamma \xrightarrow{a} p_1 \gamma_1 p_2 \gamma_2 \in \Delta$ and $s \xrightarrow{p_1 \gamma_1 p_2 \gamma_2}\!\!^*_{A_{pre^*}} s'$ for $s \in S_S$, $s' \in S$, then add $s_p \Xrightarrow[\gamma]_{A_{pre^*}} s'$.
\end{description}
$\longrightarrow^*_{A_{pre^*}}$ stands for the transitive closure of the transition relation on the finite state automaton $A_{pre^*}$. The initial and final states remain the same.

Let us remind the intuition of these rules. We consider a sub-path $s \Xrightarrow[p' w]_{A_{pre^*}} s'$. By design of an $M$-automaton, $s$ should be in $S_C$ and there should be a path $s \xrightarrow{p'} s_{p'} \Xrightarrow[w] s'$ in the automaton. If we apply the saturation rule $( R_1 )$, we add an edge $s_p \xrightarrow{p} s'$ to $A_{pre^*}$ and create a sub-path $s \Xrightarrow[p \gamma] s'$ in the automaton. Therefore, if $A_{pre^*}$ accepts a configuration $u_1 p' w' u_2$ with a path $s_{init} \Xrightarrow[u_1] s_{p'} \Xrightarrow[p' w] s' \Xrightarrow[u_2] q_F$, $q_F \in F$, then it will accept its predecessor $u_1 p \gamma u_2$ as well with a path $s_{init} \Xrightarrow[u_1] s_p \Xrightarrow[p \gamma] s' \Xrightarrow[u_2] q_F$. The role of $( R_2 )$ is similar.

Thus, when this saturation procedure ends, the $M$-automaton $A_{pre^*}$ accepts the regular set $pre^* ( M, C )$.

\subsection{From $pre^* ( M, C )$ to $pre^*_\Pi ( M, C )$}
\label{labelling_constraints}

Given a SDPN $M$ and a regular set $C$ of configurations of $M$ accepted by an $M$-automaton $A$, we want to compute a $\Pi$-automaton $A_{pre^*_\Pi}$ accepting $pre^*_\Pi ( M, C )$. To this end, we will add new labels to the $M$-automaton $A_{pre^*}$. Our intuition is the following: $A_{pre_{\Pi}^{*}}$ should be such that if we have $( c', \pi ) \leadsto_{M, \Pi} ( c, \varepsilon )$, $c \in C$, then $c$ can be reached from $c'$ by a path $\pi$.

In order to compute $A_{pre_{\Pi}^{\pi}}$, we proceed as follows: we first accept configurations in $C$ with the path $\varepsilon$, then, from there, set constraints on the labelling functions of transitions of $A_{pre^*}$ depending on the relationship between edges introduced by the previous saturation procedure. This way, we build iteratively a set of constraints whose least solution is the set of execution paths from $pre^*( M, C )$ to $C$.

To this end, each transition $t$ in $A_{pre^*}$ labelled in $\Gamma$ is given a second label $\lambda ( t ) \in \Pi^\Pi$. To do so, we compute a set of constraints whose smallest solution (according to the order $\subseteq$ of the language lattice) will be the labels $\lambda ( t )$. If $t = q_1 \xrightarrow{\gamma} q_2$, then we write $\lambda ( t ) = \lambda ( q_1, \gamma, q_2 )$.

\subsubsection{The need for functions.}

We will explain intuitively here why we label the automaton with functions and not with sets of paths in $\Pi$.

Let us consider a $M$-automaton labelled by sets of paths as shown in Figure \ref{fig:func1}. To a thread in configuration $p_1 \gamma_1$, we match a set $\{ a \}$ and to a thread in $p_2 \gamma_2 \gamma_3$, we match $\{ b, c \}$. We assume there is a rule $p \gamma \xrightarrow{d} p_1 \gamma_1 p_2 \gamma_2$ in $M$. By applying a saturation rule of the algorithm outlined in the previous section, we add a new dotted transition ${s_1}_p \xrightarrow{\gamma} q_2$. Intuitively, we label it with $d \cdot (\{ a \} \shuffle \{ b \}) = \{ dab, dba \}$: the label of the spawn action, followed by the synchronization of the paths matched to the two resulting threads.

\begin{figure}
		\centering
		\begin{tikzpicture}[shorten >=1pt, node distance=2.5cm and 2.5cm, on grid, auto] 
		\node[state,initial] (s1) {$s_1$}; 
		\node[state] (s1p1) [right=of s1] {${s_1}_{p_1}$};
		\node[state] (q1) [right=4cm of s1p1] {$q_1$};
		\node[state] (s2) [right=of q1] {$s2$}; 
		\node[state] (s2p2) [below=of s2] {${s_2}_{p_2}$}; 				
		\node[state] (q2) [below=of s2p2] {$q_2$};
		\node[state] (q3) [below=of q2] {$q_3$};
		\node[state] (s1p) [below=of s1] {${s_1}_{p_1}$}; 

		\path[->] (s1) edge node {$p_1$} (s1p1)
		(s1p1) edge node {$( \gamma_1, \{ a \} )$} (q1)
		(q1) edge node {$\varepsilon$} (s2)
		(s2) edge node {$p_2$} (s2p2)
		(s2p2) edge node[xshift=-2cm] {$( \gamma_2, \{ b \} )$} (q2)
		(q2) edge node[xshift=-2cm] {$( \gamma_3, \{ c \} )$} (q3)
		(s1) edge node {$p$} (s1p)
		(s1p) edge[out=270, in=180, densely dotted] node[xshift=-2cm, yshift=-1cm] {$(\gamma, d \cdot (\{ a \} \shuffle \{ b \}) = \{ dab, dba \} )$} (q2);
		\end{tikzpicture}
		\captionof{figure}{Using labels in $\Pi$.}
		\label{fig:func1}
\end{figure}
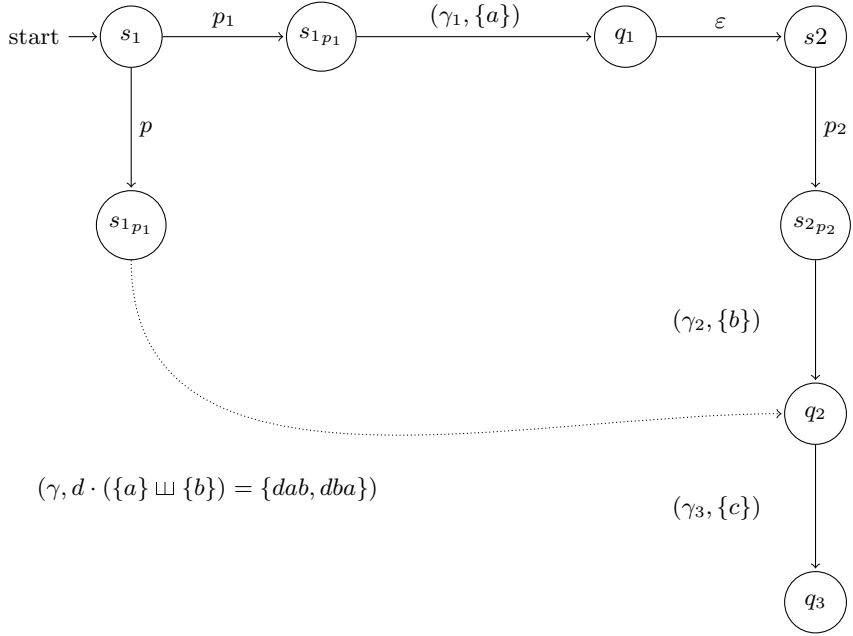

However, assuming semantics similar to $\Pi$-automata, the automaton in Figure \ref{fig:func1} would accept the configuration $(p \gamma \gamma_3, \{ dabc, dbac \})$ by going through states $s_1$, $p_1$, $q_2$, and $q_3$. But intuitively, we want to accept the set $\{ dabc, dbac, dbca \} = d \odot (\{ a \} \shuffle \{ bc \})$ instead, as the action $c$ can appear before $a$ in the interleaving of the execution paths matched to the two threads.

We can consider instead the $\Pi$-automaton shown in Figure \ref{fig:func2}. The new dotted transition is labelled by the function $x \rightarrow d \cdot [ a \shuffle (b \cdot x) ] = \{dabx, dbax, dbxa \}$. The variable $x$ stands for the end of the paths matched to the second thread (in this case, the action $c$) that are not examined by the saturation rule but would nonetheless have to be shuffled with the path $a$ of the first thread.

\begin{figure}
		\centering
		\begin{tikzpicture}[shorten >=1pt, node distance=2.5cm and 2.5cm, on grid, auto] 
		\node[state,initial] (s1) {$s_1$}; 
		\node[state] (s1p1) [right=of s1] {${s_1}_{p_1}$};
		\node[state] (q1) [right=4.5cm of s1p1] {$q_1$};
		\node[state] (s2) [right=of q1] {$s2$}; 
		\node[state] (s2p2) [below=of s2] {${s_2}_{p_2}$}; 				
		\node[state] (q2) [below=of s2p2] {$q_2$};
		\node[state] (q3) [below=of q2] {$q_3$};
		\node[state] (s1p) [below=of s1] {${s_1}_{p_1}$}; 

		\path[->] (s1) edge node {$p_1$} (s1p1)
		(s1p1) edge node {$( \gamma_1, x \rightarrow a \cdot x )$} (q1)
		(q1) edge node {$\varepsilon$} (s2)
		(s2) edge node {$p_2$} (s2p2)
		(s2p2) edge node[xshift=-3cm] {$( \gamma_2, x \rightarrow b \cdot x )$} (q2)
		(q2) edge node[xshift=-3cm] {$( \gamma_3, x \rightarrow c \cdot x )$} (q3)
		(s1) edge node {$p$} (s1p)
		(s1p) edge[out=270, in=180, densely dotted] node[xshift=-2cm, yshift=-1cm] {$(\gamma, x \rightarrow d \cdot [ a \shuffle (b \cdot x) ])$} (q2);
		\end{tikzpicture}
		\captionof{figure}{Using labels in $\Pi^\Pi$.}
		\label{fig:func2}
\end{figure}
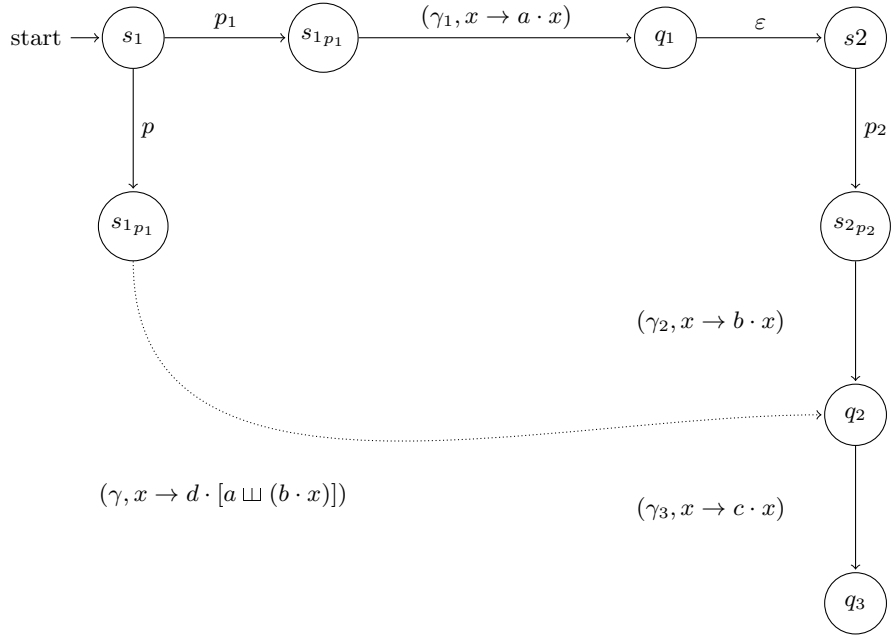

The automaton in Figure \ref{fig:func2} has an execution path labelled by $(p \gamma \gamma_3, f : x \rightarrow \{ dabcx, dbacx, dbcxa \})$ when it goes through states $s_1$, $p_1$, $q_2$, and $q_3$. Therefore, it accepts all the paths $f ( \{ \varepsilon \} ) = \{ dabc, dbac, dbca \}$.

\subsubsection{The constraints.}

For two functions in $\Pi^\Pi$, we write that $f \subseteq g$ if $\forall x \in \Pi$, $f ( x ) \subseteq g ( x )$. We now consider the following set of constraints on the labels of transitions of $A_{pre^*}$ in $S_S \times \Gamma \times S_S$, where $Q$ is the set of states of $A$:

\begin{description}
	\item[$( Z_1 )$] if $t$ belongs to $A$, then: \[ Id \subseteq \lambda ( t ) \]
	
	\item[$( Z_2 )$] for each rule $p \gamma \xrightarrow{a} p' \gamma' \in \Delta$, for each $q \in Q$, for each $s \in S_c$: \[ a \cdot \lambda ( s_{p'}, \gamma', q ) \subseteq \lambda ( s_{p}, \gamma, q ) \]
	
	\item[$( Z_3 )$] for each rule $p \gamma \xrightarrow{a} p' \varepsilon \in \Delta$, for each $s \in S_c$: \[a \cdot Id \subseteq \lambda ( s_{p}, \gamma, s_{p'} ) \]
	
	\item[$( Z_4 )$] for each rule $p \gamma \xrightarrow{a} p' \gamma_1 \gamma_2 \in \Delta$, for each $q \in Q$, for each $s \in S_c$: \[\mathop{\bigcup} \limits_{q' \in Q} a \cdot ( \lambda ( s_{p'}, \gamma_{1}, q' ) \circ \lambda ( q', \gamma_{2}, q ) ) \subseteq \lambda ( s_{p}, \gamma, q ) \] 
	
	\item[$( Z_5 )$] for each rule $p \gamma \xrightarrow{a} p_{2} \gamma_{2} p_{1} \gamma _{1} \in \Delta$, for each $q \in Q$, for each $s \in S_c$: \[\mathop{\bigcup} \limits_{s'' \xrightarrow{\varepsilon}_{A_{pre^*}} s'} a \cdot ( \lambda ( s_{p_2}, \gamma_{2}, s'' ) ( \{ \varepsilon \} ) \shuffle \lambda (s'_{p_1}, \gamma_{1} ,q ) ) \subseteq \lambda ( s_{p}, \gamma, q ) \]
\end{description}

In a manner similar to \cite{BET-infinity04}, we eventually define the labels of $A_{pre_{\Pi}^{*}}$ as the least solution of the set of constraints outlined above in the complete lattice of functions in $\Pi^\Pi$. By Tarski Theorem, this solution exists. The following theorem holds:
\begin{theorem}
	\label{thm_preK}
	Let $M$ be a SDPN and $A$ an $M$-automaton accepting a regular set of configurations $C$. Then the $\Pi$-automaton $A_{pre_{\Pi}^{*}}$ accepts the set $pre_{\Pi}^{*} ( M, C )$.
\end{theorem}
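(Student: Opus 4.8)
The plan is to prove the two inclusions $L_\Pi(A_{pre_\Pi^*}) \supseteq pre_\Pi^*(M,C)$ and $L_\Pi(A_{pre_\Pi^*}) \subseteq pre_\Pi^*(M,C)$ separately, leveraging the fact (from Section~\ref{compute_preDPN}) that the underlying $M$-automaton $A_{pre^*}$ already accepts exactly $pre^*(M,C)$. The heart of the argument is to show that, along any accepting execution $s_{init} \xRightarrow{(c,e)}_{A_{pre_\Pi^*}} q_F$, the set $e(\{\varepsilon\})$ is precisely $Paths_M(\{c\}, C)$ — that is, exactly the set of relaxed-semantics execution paths from $c$ into $C$. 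Combined with the characterization $pre_\Pi^*(M,C) = \{(c',\pi) \mid c' \in pre^*(M,C),\ \pi \in Paths_M(\{c'\},C)\}$ recalled just before the statement, this gives the result.

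For the soundness inclusion ($\subseteq$), I would first argue that the saturation rules $(R_1)$, $(R_2)$ introduce edges in a well-founded order, and that the constraint set $(Z_1)$--$(Z_5)$ mirrors this order exactly: every edge of $A_{pre^*}$ not already in $A$ is introduced because of some rule in $\Delta$ applied to a shorter accepting sub-path, and the corresponding constraint expresses $\lambda$ on the new edge in terms of $\lambda$ on the witnessing sub-path (prefixed by the action letter $a$, and, for spawn rules, shuffled with the labelling of the spawned branch via $(Z_4)$, $(Z_5)$). I would then show by induction on the least-fixpoint iteration computing the $\lambda$'s that whenever $q_1 \xRightarrow{(w,f)}_{A_{pre_\Pi^*}} q_2$ and $\pi \in f(\{\varepsilon\})$, then reading off the configuration fragment corresponding to this sub-path, $\pi$ labels an actual relaxed-semantics run from that fragment's predecessor down to the fragment itself; the base case is $(Z_1)$ with $Id$, and the inductive steps follow the five constraint shapes. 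The role of the function labels (rather than plain sets in $\Pi$), as explained around Figures~\ref{fig:func1}--\ref{fig:func2}, is crucial here: the variable in $\Pi^\Pi$ stores the yet-unseen suffix of a spawned thread's path so it can still be shuffled correctly once more transitions are appended.

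For completeness ($\supseteq$), I would proceed by induction on the length of a relaxed run $c' \XRightarrow[\pi]_M c$ with $c \in C$. Since $c' \in pre^*(M,C)$, the $M$-automaton $A_{pre^*}$ already has an accepting path for $c'$; the task is to verify that the accumulated function label $e$ along that accepting path satisfies $\pi \in e(\{\varepsilon\})$. The first transition of the run is either an internal/synchronized action (cases $(Z_2)$, $(Z_3)$, $(Z_4)$) or, after splitting at a spawn point, a spawn followed by interleaved continuations of parent and child (case $(Z_5)$, using the definition of $\shuffle$ and the path relation clause that sets $e : y \mapsto e_1(\{\varepsilon\}) \shuffle e_2(y)$). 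In each case the relevant constraint, being satisfied by the least solution, forces the label on the corresponding edge of $A_{pre_\Pi^*}$ to contain the appropriate one-step extension, so the inductive hypothesis applied to the shorter residual run closes the argument. One must be careful that the decomposition of $\pi$ matches the edge structure of $A_{pre_\Pi^*}$ and that interleavings arising from independent progress in parallel threads are all captured — this is exactly what associativity and commutativity of $\shuffle$ (invoked when the path relation $\Longrightarrow_A$ is declared well-defined) guarantee.

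The main obstacle I anticipate is the bookkeeping for the spawn case, i.e.\ reconciling constraint $(Z_5)$, the path-relation clause $e : y \mapsto e_1(\{\varepsilon\}) \shuffle e_2(y)$, and the operational semantics of a thread spawn under the relaxed semantics: one has to check that ``freezing'' the parent branch via $e_1(\{\varepsilon\})$ while keeping the child branch's variable open correctly accounts for every legal interleaving of the two subtrees of computation, and that this remains consistent as the saturation builds nested spawns. Getting the quantifier over the intermediate state $s''$ (with $s'' \xrightarrow{\varepsilon}_{A_{pre^*}} s'$) in $(Z_5)$ to line up with where the child thread's $M$-automaton path actually ends is the delicate point; everything else is a fairly mechanical induction following the two-way correspondence between saturation edges and reachability steps established in \cite{BMT-concur05}.
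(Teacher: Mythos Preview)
Your two-inclusion strategy matches the paper's, and your completeness direction (induction on the length of the relaxed run $c' \Xrightarrow[\pi]_M c$) is exactly the paper's Lemma~\ref{lemma_1}. The difference lies in the soundness direction. You propose induction on the least-fixpoint iteration computing the $\lambda$'s, maintaining an invariant over arbitrary sub-paths $q_1 \xRightarrow{(w,f)} q_2$. The paper instead proves soundness (its Lemma~\ref{lemma_2}) by induction on $|\pi|$: given an accepted $(c',\pi)$ with $\pi = a_1 a_2 \cdots a_k$, it peels off $a_1$, locates the thread index $i$ whose label contributed the leading $a_1$, identifies the rule $r$ responsible (via the constraint that put $a_1 \cdot (\cdots)$ into $\lambda_1$), applies $r$ to reach some $(c_1, a_2\cdots a_k)$ also accepted by $A_{pre_\Pi^*}$, and invokes the hypothesis.

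The paper's choice of induction variable is not incidental. The crucial case is $a_1 = \tau$ arising not from an internal $\tau$-rule but from the shuffle of an $a$ in one thread's label with an $\overline{a}$ in another's; here one must locate \emph{two} indices $i,j$ and two rules $r_1 : p'_i\gamma_1 \xrightarrow{a} \cdots$ and $r_2 : p'_j\alpha_1 \xrightarrow{\overline{a}} \cdots$, and fire them as a single synchronized step of $\dashrightarrow_M$. This case does not correspond to any single one of the five constraint shapes $(Z_1)$--$(Z_5)$, so an induction that ``follows the five constraint shapes'' will not surface it; it lives in the definition of $\shuffle$ and in the path-relation clause that shuffles labels across thread boundaries. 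Your fixpoint-iteration approach can in principle be made to work, but you would have to strengthen the invariant to speak about multi-thread sub-paths and argue separately that every $\tau$ appearing in a shuffle $e_i(\{\varepsilon\}) \shuffle e_j(\{\varepsilon\})$ is witnessed either by a $\tau$-rule in one thread or by a synchronizable pair of rules across two threads. The induction on $|\pi|$ makes this case split immediate and also sidesteps any concern about transfinite iteration in the (non-finite-chain) lattice $\Pi^\Pi$.
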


Note that it doesn't mean we can compute the labels: an iterative computation of the least solution may not terminate. We now explain intuitively the meaning of these constraints.

\subsubsection{The intuition.}

If $c$ is a configuration of $C$, then $A_{pre^*_\Pi}$ should accept $( c, \varepsilon )$. This is expressed by constraint $( Z_1 )$.

Let $c' = p' \gamma' w \in pre^* ( M, C )$. If $p \gamma \xrightarrow{a} p' \gamma' \in \Delta$ and $( c', \pi ) \in pre^*_\Pi ( M, C )$, then $c = p \gamma w \in pre^* ( M, C )$, $( c, a \cdot \pi ) \leadsto_{M, \Pi} ( c', \pi )$, and $ ( c, a \cdot \pi ) \in pre^*_\Pi ( M, C )$. Hence, if $A_{pre^*_\Pi}$ accepts $( c', \pi )$ and uses a transition $s_{p'} \xrightarrow{\gamma'} q$ while doing so, then it should accept $( c, a \cdot \pi )$ as well using a transition $s_{p} \xrightarrow{\gamma} q$, as shown in Figure \ref{fig:4-switch}. This is expressed by constraint $( Z_2 )$.

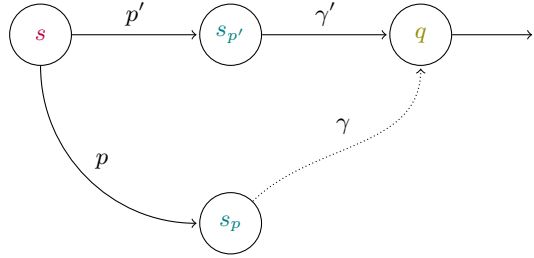
\begin{figure}
	\centering
	\begin{tikzpicture}[shorten >=1pt, node distance=2.5cm, on grid, auto] 
	\node[state] (s) {\textcolor{purple}{$s$}}; 
	\node[state] (spp) [right=of s] {\textcolor{teal}{$s_{p'}$}};
	\node[state] (q) [right=of spp] {\textcolor{olive}{$q$}};
	\node[state] (sp) [below=of spp] {\textcolor{teal}{$s_{p}$}}; 
	\path[->] 
	(s) edge node {$p'$} (spp)
	(spp) edge node {$\gamma'$} (q)
	(s) edge[out=270, in=180] node {$p$} (sp)
	(sp) edge[densely dotted, in=270] node {$\gamma$} (q)
	(q) edge node[above] {} +(1.5,0);
	\end{tikzpicture}
	\captionof{figure}{Case of a switch rule.}
	\label{fig:4-switch}
\end{figure}

Let $c' = p' w \in pre^* ( M, C )$. If $p \gamma \xrightarrow{a} p' \varepsilon \in \Delta$ and $( c', \pi ) \in pre^*_\Pi ( M, C )$, then $c = p \gamma w \in pre^* ( M, C )$, $( c, a \cdot \pi ) \leadsto_{M, \Pi} ( c', \pi )$, and $ ( c, a \cdot \pi ) \in pre^*_\Pi ( M, C )$. Hence, if $A_{pre^*_\Pi}$ accepts $( c', \pi )$, then it should accept $( c, a \cdot \pi )$ as well using a transition $s_{p} \xrightarrow{\gamma} s'_p$, as shown in Figure \ref{fig:4-pop}. This is expressed by constraint $( Z_3 )$.

\begin{figure}
	\centering
	\begin{tikzpicture}[shorten >=1pt, node distance=2.5cm, on grid, auto] 
	\node[state] (s) {\textcolor{purple}{$s$}}; 
	\node[state] (spp) [right=of s] {\textcolor{teal}{$s_{p'}$}};
	\node[state] (sp) [below=of spp] {\textcolor{teal}{$s_{p}$}};
	\path[->] 
	(s) edge node {$p'$} (spp)
	(s) edge[out=270, in=180] node {$p$} (sp)
	(sp) edge[densely dotted] node {$\gamma$} (spp)
	(spp) edge node[above] {} +(1.5,0);
	\end{tikzpicture}
	\captionof{figure}{Case of a pop rule.}
	\label{fig:4-pop}
\end{figure}
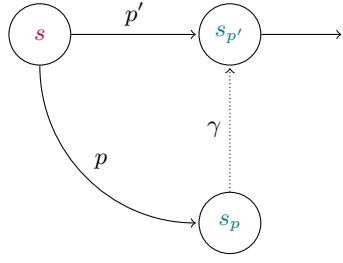

Let $c' = p' \gamma_1 \gamma_2 w \in pre^* ( M, C )$. If $p \gamma \xrightarrow{a} p' \gamma_1 \gamma_2 \in \Delta$ and also $( c', \pi ) \in pre^*_\Pi ( M, C )$, then $c = p \gamma w \in pre^* ( M, C )$, $( c, a \cdot \pi ) \leadsto_{M, \Pi} ( c', \pi )$, and $ ( c, a \cdot \pi ) \in pre^*_\Pi ( M, C )$. Hence, if $A_{pre^*_\Pi}$ accepts $( c', \pi )$ and uses two transition $s_{p'} \xrightarrow{\gamma_1} q'$ and $q' \xrightarrow{\gamma_2} q$ while doing so, then it should accept $( c, a \cdot \pi )$ as well using a transition $s_{p} \xrightarrow{\gamma} q$, as shown in Figure \ref{fig:4-push}. Moreover, there can be many possible intermediate states $q'$ between $s'_p$ and $q$ such that $s_{p'} \xrightarrow{\gamma_1} q'$ and $q' \xrightarrow{\gamma_2} q$. In the automaton $A_{pre^*_\Pi}$, the path $\pi$ should therefore be represented by the union for all possible intermediate state $q'$ of the concatenation of the two labelling functions $\lambda (s_{p'}, \gamma_1, q' )$ and $\lambda ( q', \gamma_2, q )$. This is expressed by constraint $( Z_4 )$.

\begin{figure}
	\centering
	\begin{tikzpicture}[shorten >=1pt, node distance=2.5cm, on grid, auto] 
	\node[state] (s) {\textcolor{purple}{$s$}}; 
	\node[state] (spp) [right=of s] {\textcolor{teal}{$s_{p'}$}};
	\node[state] (qq1) [right=of spp] {\textcolor{olive}{$q'_1$}};
	\node[state] (qq2) [below=of qq1] {\textcolor{olive}{$q'_2$}};
	\node[state] (q) [right=of qq2] {\textcolor{olive}{$q$}};
	\node[state] (sp) [below=of spp] {\textcolor{teal}{$s_{p}$}}; 
	\path[->] 
	(s) edge node {$p'$} (spp)
	(spp) edge node {$\gamma_1$} (qq1)
	(spp) edge[out=315, in=180] node {$\gamma_1$} (qq2)
	(qq1) edge node {$\gamma_2$} (q)
	(qq2) edge node {$\gamma_2$} (q)
	(s) edge[out=270, in=180] node {$p$} (sp)
	(sp) edge[densely dotted, out=315, in=270] node {$\gamma$} (q)
	(q) edge node[above] {} +(1.5,0);
	\end{tikzpicture}
	\captionof{figure}{Case of a push rule.}
	\label{fig:4-push}
\end{figure}
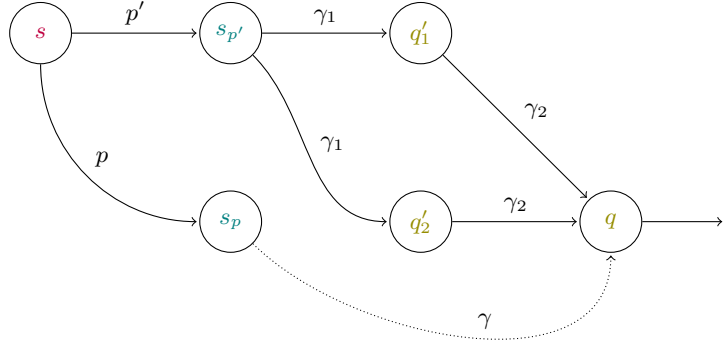

Let $c' = p_2 \gamma_2 p_1 \gamma_1 w \in pre^* ( M, C )$. If $p \gamma \xrightarrow{a} p_{2} \gamma_{2} p_{1} \gamma _{1} \in \Delta$ and $( c', \pi ) \in pre^*_\Pi ( M, C )$, then $c = p \gamma w \in pre^* ( M, C )$, $( c, a \cdot \pi ) \leadsto_{M, \Pi} ( c', \pi )$, and $ ( c, a \cdot \pi ) \in pre^*_\Pi ( M, C )$, as shown in Figure \ref{fig:4-synch}. The two processes $p_2 \gamma_2$ (thread 2 in \ref{fig:4-synch}) and $p_1 \gamma_1$ (thread 1 in \ref{fig:4-synch}) are interleaved, hence, their execution paths must be shuffled: if $\pi_1$ is an execution path associated to $p_1 \gamma_1$, and $\pi_2$, to $p_2 \gamma_2$, then an path $\pi' = \pi_2 \shuffle \pi_1$ should be associated to $p_2 \gamma_2 p_1 \gamma_1$. Moreover, if we consider a path $s_{p_2} \xrightarrow{\gamma_2} s'' \xrightarrow{\varepsilon} s' \xrightarrow{p_1} s'_{p_1} \xrightarrow{\gamma_1} q$ in the automaton $A_{pre^*_\Pi}$, then no path $\pi_2$ associated to $p_2 \gamma_2$ can be extended further, and should therefore be represented by $( \lambda ( s_{p_2}, \gamma_{2}, s' ) ( \{ \varepsilon \} ) )$. Again, we must also consider each possible intermediate state $s''$ in the previous path, hence, an union of functions. This is expressed by constraint $( Z_5 )$.

\begin{figure}
	\centering
	\begin{tikzpicture}[shorten >=1pt, node distance=2.5cm, on grid, auto] 
	\node[state] (s) {\textcolor{purple}{$s$}}; 
	\node[state, label={\textcolor{blue}{\small Thread 2}}] (sp2) [right=of s] {\textcolor{teal}{$s_{p_2}$}};
	\node[state] (q) [right=of sp2] {\textcolor{olive}{$q'$}};
	\node[state] (ss) [right=of q] {\textcolor{purple}{$s'$}};
	\node[state, label={\textcolor{green}{\small Thread 1}}] (ssp1) [right=of ss] {\textcolor{teal}{$s''_{p_1}$}};
	\node[state] (qq) [below=of ssp1] {\textcolor{olive}{$q$}};
	\node[state] (sp) [below=of spp] {\textcolor{teal}{$s_{p}$}};
	\path[->] 
	(s) edge node {\textcolor{blue}{$p_2$}} (sp2)
	(sp2) edge node {\textcolor{blue}{$\gamma_2$}} (q)
	(q) edge node {$\varepsilon$} (ss)
	(ss) edge node {\textcolor{green}{$p_1$}} (ssp1)
	(ssp1) edge node {\textcolor{green}{$\gamma_1$}} (qq)
	(s) edge[out=270, in=180] node {$p$} (sp)
	(sp) edge[densely dotted] node {$\gamma$} (qq)
	(qq) edge node[above] {} +(1.5,0);
	\end{tikzpicture}
	\captionof{figure}{Case of a spawn rule.}
	\label{fig:4-synch}
\end{figure}
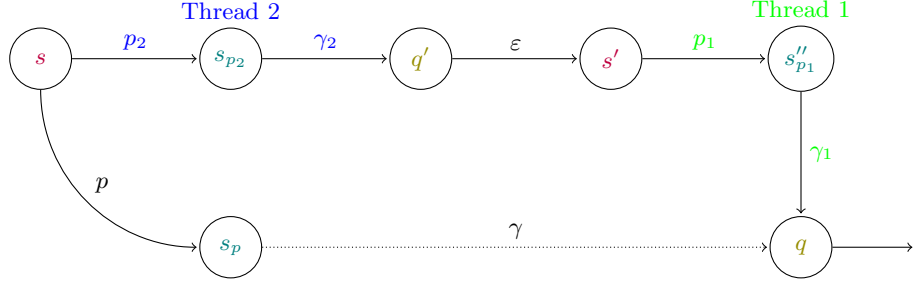

\subsection{Proof of Theorem \ref{thm_preK}}
\label{proof}

In order to prove Theorem \ref{thm_preK}, we first show that $A_{pre_{\Pi}^{*}}$ accepts every $\Pi$ predecessor of $C$.
\begin{lemma}
	\label{lemma_1}
	We consider $c = p_{1} v_{1} \ldots p_{n} v_{n} \in C$, $s_{init}$ the initial state of $A$, and $F$ its set of final states. If $( c'= ( p'_{1} w_{1} \ldots p'_{l} w_{l} ), \pi ) \leadsto_{M, \Pi} ( c, \varepsilon )$ for $\pi \in \Pi_{\Pi}$, then $\exists e \in \Pi^\Pi$ and $q_F \in F$ such that $\pi \le e ( \varepsilon )$ and $s_{init} \xRightarrow{( c', e )}_{A_{pre_\Pi^*}} q_{F}$.
\end{lemma}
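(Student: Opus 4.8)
The plan is to prove Lemma~\ref{lemma_1} by induction on the length of the $\Pi$-reachability derivation $( c', \pi ) \leadsto_{M, \Pi} ( c, \varepsilon )$, that is, on the number $k$ of steps of $\longrightarrow_{M,\Pi}$ used. For the base case $k = 0$, we have $c' = c \in C$ and $\pi = \varepsilon$; since $A$ accepts $c$, there is an accepting run $s_{init} \Xrightarrow[c]_A q_F$, and by constraint $( Z_1 )$ every $\Gamma$-labelled transition of $A$ carries a label above $Id$, so composing and shuffling the identity functions along this run yields a function $e$ with $\varepsilon \le e ( \varepsilon )$ and $s_{init} \xRightarrow{( c, e )}_{A_{pre_\Pi^*}} q_F$. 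Here I would be careful about the parallel-composition clause in the definition of $\Longrightarrow_A$: when the run of $A$ splits the configuration into several threads $p_i v_i$, the function $e$ is built by shuffling $e_1(\{\varepsilon\})$ with $e_2(\cdot)$, and I would check that starting from all-identity labels this still produces a function whose value at $\{\varepsilon\}$ contains $\varepsilon$ (indeed $\{\varepsilon\} \shuffle \{\varepsilon\} = \{\varepsilon\}$).

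For the inductive step, suppose $( c', a\cdot\pi ) \longrightarrow_{M,\Pi} ( c'', \pi ) \leadsto_{M,\Pi} ( c, \varepsilon )$ in $k$ steps total, where the first step applies some rule $\rho \in \Delta$ labelled $a$ at a designated position in $c'$. By the induction hypothesis there are $e'' \in \Pi^\Pi$ and $q_F \in F$ with $\pi \le e''(\{\varepsilon\})$ and $s_{init} \xRightarrow{( c'', e'' )}_{A_{pre_\Pi^*}} q_F$. I would then do a case analysis on the shape of $\rho$, mirroring exactly the five constraints: a switch rule $p\gamma \xrightarrow{a} p'\gamma'$ (use $( Z_2 )$), a pop rule $p\gamma\xrightarrow{a}p'\varepsilon$ (use $( Z_3 )$), a push rule $p\gamma\xrightarrow{a}p'\gamma_1\gamma_2$ (use $( Z_4 )$), and a spawn rule $p\gamma\xrightarrow{a}p_2\gamma_2 p_1\gamma_1$ (use $( Z_5 )$). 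In each case, $c''$ differs from $c'$ only locally: the accepting execution of $A_{pre_\Pi^*}$ on $c''$ decomposes as a prefix reaching some state $s \in S_C$, then the local portion reading the image of $\gamma$ under $\rho$ (e.g.\ reading $p'\gamma'$ from $s$ via $s_{p'}$, or reading $p'\gamma_1\gamma_2$, etc.), then a suffix; I replace the local portion by the single transition $s_p \xrightarrow{\gamma} q$ whose label $\lambda(s_p,\gamma,q)$ is, by the relevant constraint $( Z_i )$, at least $a$ times (the composition/shuffle of) the labels of the replaced portion. Propagating this inequality through the outer composition and shuffle operations that build the global function $e$ — which is monotone in each of its constituent labels with respect to the pointwise $\subseteq$ order — yields a function $e$ with $s_{init} \xRightarrow{( c', e )}_{A_{pre_\Pi^*}} q_F$ and $a\cdot\pi \le e(\{\varepsilon\})$.

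The main obstacle I anticipate is the spawn case and its interaction with the "functions, not sets" design. When $\rho$ is a spawn rule, the replaced sub-path of $A_{pre_\Pi^*}$ has the form $s_{p_2}\xrightarrow{\gamma_2}s''\xrightarrow{\varepsilon}s'\xrightarrow{p_1}s'_{p_1}\xrightarrow{\gamma_1}q$, spanning an $\varepsilon$-transition and a fresh control-state transition, and constraint $( Z_5 )$ replaces it by $s_p\xrightarrow{\gamma}q$ with label involving $\lambda(s_{p_2},\gamma_2,s'')(\{\varepsilon\}) \shuffle \lambda(s'_{p_1},\gamma_1,q)$. I must argue that the global accepting function $e''$ for $c''$ genuinely has the form (outer context) applied to $e_1(\{\varepsilon\}) \shuffle e_2(y)$ for the two sibling threads, so that collapsing the first thread's contribution to its value at $\{\varepsilon\}$ is sound — this is precisely the point of the third clause of $\Longrightarrow_A$ — and that after the collapse the remaining structure matches what $( Z_5 )$ provides, using associativity and commutativity of $\shuffle$ (as noted in the paper) to rearrange $\pi_2 \shuffle \pi_1$ appropriately and to merge the newly-introduced $a$ into the right place. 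A secondary subtlety is bookkeeping the "tail variable" $y$ correctly across the inductive step: the suffix of the run (reading $w$ after the modified position) is unaffected, but its contribution enters as the argument of the local function, so I would phrase the induction hypothesis carefully enough (keeping track of $e$ as a function, not just of $e(\{\varepsilon\})$) that the substitution goes through cleanly.
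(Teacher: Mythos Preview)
Your proposal is correct and follows essentially the same approach as the paper: induction on the number of $\longrightarrow_{M,\Pi}$ steps, base case via $(Z_1)$, and an inductive step that locates the sub-run of $A_{pre_\Pi^*}$ reading the right-hand side of the applied rule and replaces it by the single edge $s_p \xrightarrow{\gamma} q$ using the matching constraint. The paper collapses your switch/pop/push cases into a single ``no new process'' case handled jointly by $(Z_2)$--$(Z_4)$ and treats the spawn case separately via $(Z_5)$; it also makes explicit the step you flag as delicate, namely pushing the leading $a$ inside one component of the outer shuffle using $a\cdot(X\shuffle Y)\subseteq X\shuffle(a\cdot Y)$, which is exactly the monotonicity/associativity manipulation you anticipated.
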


We then prove that every configuration accepted by $A_{pre_{\Pi}^{*}}$ is a $\Pi$ predecessor of $C$.
\begin{lemma}
	\label{lemma_2}
	$\forall \pi \in Act^*$, if there is an accepting execution $s_{init} \xRightarrow{( c', e )}_{A_{pre_{\Pi}^{*}}} q_{F}$ such that $c' = ( p'_{1} w'_{1} \ldots p'_{n} w'_{n} )$ and $\pi \le e ( \{ \varepsilon \} )$, there is a configuration $c$ such that $s_{init} \overset{c}{\rightarrow}_A q_{F} \in F$ and $( c', \pi ) \leadsto_{M, \Pi} ( c, \varepsilon )$.
\end{lemma}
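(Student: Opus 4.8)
The plan is to prove Lemma~\ref{lemma_2} by induction on the length of the accepting execution $s_{init} \xRightarrow{(c', e)}_{A_{pre_{\Pi}^{*}}} q_F$, where the "length" is measured in terms of the number of saturation steps $(R_1)$, $(R_2)$ — equivalently, the number of applications of constraints $(Z_2)$–$(Z_5)$ — needed to produce the transitions and labels used along the execution. The base case is when no saturation has been applied: the execution uses only transitions of the original $M$-automaton $A$, whose $\Gamma$-labels are (by constraint $(Z_1)$ at the least fixpoint, restricted to this case) the identity function $Id$. Then $c' \in C$ itself, $e = Id$, $e(\{\varepsilon\}) = \{\varepsilon\}$, and the only candidate for $\pi$ is $\varepsilon$; we take $c = c'$, and $(c', \varepsilon) \leadsto_{M,\Pi} (c, \varepsilon)$ trivially. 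I would make this precise by arguing that the least solution of the constraint system is the union of an increasing chain of approximants $\lambda^{(0)} \subseteq \lambda^{(1)} \subseteq \cdots$, where $\lambda^{(k)}$ is obtained by $k$ rounds of applying $(Z_1)$–$(Z_5)$, and doing the induction on the least $k$ such that the path $\pi \le e(\{\varepsilon\})$ already appears using labels from $\lambda^{(k)}$.

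For the inductive step, I would isolate, along the execution $s_{init} \xRightarrow{(c',e)}_{A_{pre_{\Pi}^{*}}} q_F$, an occurrence of a label that was contributed at stage $k$ by one of the constraints $(Z_2)$–$(Z_5)$, and peel it off. Concretely, take a transition $s_p \xrightarrow{(\gamma, \lambda(s_p,\gamma,q))} q$ whose label value includes a component coming from, say, constraint $(Z_2)$ for a rule $p\gamma \xrightarrow{a} p'\gamma' \in \Delta$, so that $\lambda(s_p,\gamma,q) \supseteq a \cdot \lambda(s_{p'},\gamma',q)$ with the inner label already realized at stage $k-1$. One then writes $c' = u_1\, p\gamma\, w\, u_2$ where this transition is used to read the leading $p\gamma$ of the factor $p\gamma w$, and correspondingly builds a new configuration $\widehat{c'} = u_1\, p'\gamma'\, w\, u_2$ that is accepted by an execution of one lower stage, with a label $\widehat{e}$ such that $e(\{\varepsilon\})$ decomposes, via the shuffle/composition structure of $\Longrightarrow_A$, into $a$ prepended (possibly after shuffling with siblings) to something in $\widehat{e}(\{\varepsilon\})$. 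Applying the induction hypothesis to $\widehat{c'}$ yields a configuration $c \in C$ with $s_{init} \overset{c}{\rightarrow}_A q_F$ and $(\widehat{c'}, \widehat\pi) \leadsto_{M,\Pi} (c,\varepsilon)$ for the appropriate residual $\widehat\pi$; since $p\gamma \xrightarrow{a} p'\gamma'$ gives $u_1 p\gamma w u_2 \xrightarrow{a}_M u_1 p'\gamma' w u_2$, i.e. $\widehat{c'}$, we get $(c', a\cdot\widehat\pi) \leadsto_{M,\Pi} (c,\varepsilon)$, and one checks $\pi \le e(\{\varepsilon\})$ forces $\pi = a\cdot\widehat\pi$ (or $\pi$ lies in a shuffle that the argument tracks). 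The cases of $(Z_3)$ (pop, $p'\varepsilon$), $(Z_4)$ (push, $p'\gamma_1\gamma_2$, where the intermediate state $q'$ is the one splitting the $\gamma$-transition and the composition of labels $\lambda(s_{p'},\gamma_1,q')\circ\lambda(q',\gamma_2,q)$ mirrors function composition in $\Longrightarrow_A$), and $(Z_5)$ (spawn, $p_2\gamma_2\, p_1\gamma_1$, where the shuffle in the label matches the interleaving semantics of $\Longrightarrow_A$ and $\lambda(s_{p_2},\gamma_2,s'')(\{\varepsilon\})$ records that thread~2 is closed off) are analogous, each time using the matching relaxed-semantics transition rule to take one step from $c'$ back toward $c$.

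The main obstacle I anticipate is bookkeeping the interaction between (i) the functional labels in $\Pi^\Pi$ and their composition, (ii) the shuffle product occurring both in the $(Z_5)$ constraint and in the third clause of the path relation $\Longrightarrow_A$, and (iii) the inductive decomposition of the accepting execution into a "peeled" transition plus a shorter execution. The subtlety is that when we remove a transition near the front of some thread's stack, the label of that thread changes from $e_i$ to a residual, and because of the clause $e : y \mapsto e_1(\{\varepsilon\}) \shuffle e_2(y)$ for spawn-rooted subtrees, a change deep inside $e_2$ propagates through a shuffle with $e_1(\{\varepsilon\})$; one must verify that $\pi \le e(\{\varepsilon\})$ genuinely yields $\pi$ of the form (single action $a$) followed by an element of the residual language, which relies essentially on associativity and commutativity of $\shuffle$ (already noted in the paper) and on $\shuffle$ being monotone. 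I would handle this by proving a small auxiliary decomposition claim: if $a\cdot L \subseteq \lambda(s_p,\gamma,q)$ is the component used and the overall execution value is $e(\{\varepsilon\})$, then every $\pi \in e(\{\varepsilon\})$ that actually traverses this component can be written $\pi = a \cdot \pi'$ (up to the interleaving already accounted for by outer shuffles) with $\pi'$ in the corresponding value of the lower-stage execution. With that claim in hand the five cases become mechanical, and the conclusion $(c',\pi) \leadsto_{M,\Pi}(c,\varepsilon)$ follows by prepending the single relaxed-semantics step $c' \xrightarrow{a}_M \widehat{c'}$ to the reachability witness supplied by the induction hypothesis. Combining Lemma~\ref{lemma_1} and Lemma~\ref{lemma_2} then gives $L_\Pi(A_{pre_\Pi^*}) = pre_\Pi^*(M,C)$, i.e. Theorem~\ref{thm_preK}.
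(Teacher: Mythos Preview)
Your induction variable is not the paper's, and as stated it does not decrease. You propose to induct on the least stage $k$ with $\pi\in e^{(k)}(\{\varepsilon\})$ and claim that unfolding one stage-$k$ label yields ``an execution of one lower stage''. But the accepting run may traverse several transitions whose labels only stabilise at stage $k$; replacing one of them by its stage-$(k{-}1)$ ingredients leaves the others at stage $k$, so the index need not drop (and with $(Z_4)$ or $(Z_5)$ one transition is replaced by two, so a secondary count of stage-$k$ transitions is not obviously well-founded either). A related gap concerns \emph{which} transition you peel: the relation $(c',\pi)\rightarrow_{M,\Pi}(\widehat{c'},\widehat\pi)$ forces $\pi=a\cdot\widehat\pi$ where $a$ is the action of the rule just applied, so the transition must be chosen to match the \emph{first} letter of $\pi$, not by its saturation stage; your parenthetical ``or $\pi$ lies in a shuffle that the argument tracks'' is precisely the missing step. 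You also never treat the case where that first letter is a $\tau$ produced by synchronising an $a$ in one thread with an $\overline a$ in another---there one must peel two transitions simultaneously and take a single synchronised SDPN step.

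The paper avoids all this by inducting on $|\pi|$. Writing $\pi=a_1\cdots a_k$ and decomposing $e(\{\varepsilon\})=e_1(\{\varepsilon\})\shuffle\cdots\shuffle e_n(\{\varepsilon\})$ thread by thread, it locates the thread $i$ whose factor contributes the leading $a_1$ (or the pair $i,j$ when $a_1=\tau$ comes from synchronising $a$ with $\overline a$), reads off from $(Z_2)$--$(Z_5)$ the SDPN rule responsible, applies that rule to $c'$ to reach some $c_1$, and is left with $\pi'=a_2\cdots a_k$ lying in the residual shuffle. Now $|\pi'|<|\pi|$ gives the induction hypothesis directly, and the base case $|\pi|=0$ forces every label actually used to be the $Id$ contributed by $(Z_1)$, hence $c'\in C$.
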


\subsubsection{Proof of Lemma \ref{lemma_1}}

We can expand the labelling of functions to paths by composing labels along the paths, in a manner similar to the semantics of $\Pi$-automata. Hence, $\lambda(s, w, q)$ is such that $s \xRightarrow{( w, \lambda(s, w, q) )}_A q$.

We prove by induction on $k$ that if $c'= ( ( p'_{1} w'_{1} \ldots p'_{l} w'_{l} ), \pi ) \rightarrow^k_{M, \Pi} ( c, \varepsilon )$, then $\exists e \in \Pi^\Pi$ and $q_F \in F$ such that $\pi \in e ( \{ \varepsilon \} )$ and $s_{init} \xRightarrow{( c, e )}_{A_{pre_\Pi^*}} q_{F}$.
\begin{description}
	\item[Basis:] if $k = 0$, then $c' = c$, and $\pi = \varepsilon$. Because of constraint $( Z_1 )$, each transition $t$ that exists both in $A$ and $A_{pre_\Pi^*}$ is such that $Id \subseteq \lambda ( t )$. Hence, if we follow an accepting execution $s_{init} \xRightarrow{( c, e )}_{A_{pre_\Pi^*}} q_{F}$ only using transitions in $A$, it must be such that $\varepsilon \in e( \{ \varepsilon \} )$. Hence, $A_{pre_{\Pi}^{*}}$ accepts $(c, \varepsilon)$.
	
	\item[Induction step:] we consider $( c_1 = ( p''_{1} u_{1} \ldots p''_{j} u_{j} ), \pi' )$ such that $( c', \pi ) \leadsto_{M, \Pi} ( c_1, \pi' )$ and $( c_1, \pi' ) \leadsto^k_{M, \Pi} ( c, \varepsilon )$. We have the two following cases:
	\begin{itemize}
		\item if $j = l + 1$, a new process has been created from $c'$ to $c_1$ by a rule of the form $r= p'_{i} \gamma \xrightarrow{a} p''_{i} u_{i} p''_{i+1} u'_{i+1}$. Then, there are $t_{1}$, $t_{2}$ in $(P\Gamma^*)^*$ such that $c_{1} = t_{1} p''_{i} u_{i} p''_{i+1} u_{i+1} t_{2}$ and $c' = t_{1} p'_{i} w'_{i} t_{2}$, and there is $u \in \Gamma^*$ such that $w'_{i} = \gamma u$ and $u_{i+1} =u'_{i+1} u$.
		
		By induction, as shown in figure \ref{fig:lm1a}, there is a path in $A_{pre_{\Pi}^{*}}$ such that it accepts $( c_1, \pi')$ and $\pi' \in \lambda_1 ( \{ \varepsilon \} ) \shuffle \lambda_2 ( \{ \varepsilon \} ) \shuffle \lambda_3 \circ \lambda_4 ( \{ \varepsilon \} ) \shuffle \lambda_5 ( \{ \varepsilon \} )$, where $\lambda_1 = \lambda( s_{init}, t_1, s )$, $\lambda_2 = \lambda( s_1, p''_i u', s_2 )$, $\lambda_3 = \lambda( s_3, p''_{i+1} u'_{i+1}, s_4 )$, $\lambda_4 = \lambda( s_4, u, s_5 )$ and $\lambda_5 = \lambda(s_6, t_2, q_F)$. The saturation procedure creates a transition $( {s_1}_{p'_{i}}, \gamma, s_{4} ) = a \cdot ( \lambda_2 ( \{ \varepsilon \} ) \shuffle \lambda_3 )$. Hence:
		
		\begin{figure}
			\centering
			\begin{tikzpicture}[shorten >=1pt, node distance=2.5cm and 4.25cm, on grid, auto] 
			\node[state,initial] (si) {$s_{init}$}; 
			\node[state] (s) [above=of si] {$s$};
			\node[state] (s1) [right=of s] {$s_1$};
			\node[state] (s1pi) [below=of s1] {${s_1}_{p'_i}$}; 
			\node[state] (s2) [right=of s1] {$s_2$}; 				
			\node[state] (s4) [below=of s1pi] {$s_4$};
			\node[state] (s3) [right=of s4] {$s_3$}; 
			\node[state] (s5) [left=of s4] {$s_5$};
			\node[state] (s6) [below=of s5] {$s_6$};
			\node[state] (qf) [right=of s6] {$q_F$};
			\path[->] 
			(si) edge node {$( t_1, \lambda_1 )$} (s)
			(s) edge node {$\varepsilon$} (s1)
			(s1) edge node {$( p''_i u_i, \lambda_2 )$} (s2)
			(s2) edge[out=270, in=90] node[xshift=-20pt] {$\varepsilon$} (s3)
			(s3) edge node {$( p''_{i+1} u'_{i+1}, \lambda_3 )$} (s4)
			(s4) edge node {$( u, \lambda_4 )$} (s5)
			(s5) edge[out=270, in=90] node {$\varepsilon$} (s6)
			(s6) edge node {$( t_2, \lambda_5 )$} (qf)
			(s1) edge[densely dotted] node {$p'_i$} (s1pi)
			(s1pi) edge[densely dotted] node {$\gamma$} (s4);
			\end{tikzpicture}
			\captionof{figure}{Adding an edge if a new process is spawned.}
			\label{fig:lm1a}
		\end{figure}	
		
		\[\pi' \in \lambda_1 ( \{ \varepsilon \} ) \shuffle \lambda_2 ( \{ \varepsilon \} ) \shuffle \lambda_3 \circ \lambda_4 ( \{ \varepsilon \} ) \shuffle \lambda_5 ( \{ \varepsilon \} )
		\]
		
		\[\pi = a \cdot \pi' \in a\cdot( \lambda_1 ( \{ \varepsilon \} ) \shuffle \lambda_2 ( \{ \varepsilon \} ) \shuffle \lambda_3 \circ \lambda_4 ( \{ \varepsilon \} ) \shuffle \lambda_5 ( \{ \varepsilon \} ) )
		\]
		
		\[\pi \in \lambda_1 ( \{ \varepsilon \} ) \shuffle ( a \cdot ( \lambda_2 ( \{ \varepsilon \} ) \shuffle \lambda_3 ) \circ \lambda_4 ( \{ \varepsilon \} ) ) \shuffle \lambda_5 ( \{ \varepsilon \} )
		\]
		
		Moreover, $A_{pre^*}$ accepts $c'$ with the path $s_{init} \xrightarrow{ t_{1} p'_{i} \gamma u t_{2} }_{A_{pre^*}} q_{F}$. But we have $\lambda( s_{init}, t_1 p'_i \gamma u t_2, q_F) \supseteq \lambda_1 \shuffle ( a \cdot ( \lambda_2 ( \{ \varepsilon \} ) \shuffle \lambda_3 ) \circ \lambda_4 ) \shuffle \lambda_5$ because of constraint $(Z_5)$. $A_{pre_{\Pi}^*}$ therefore accepts $\pi$.
		
		\item if $j = l$, no new process has been created while moving from the configuration $c'$ to the configuration $c_1$. Let $p'_{i} \gamma \xrightarrow{a}_M p''_{i} u'$ be the transition used to move from $c'$ to $c_1$.
		
		Let $t_1$ and $t_2$ be two words in $( P \Gamma^* )^*$ such that $c_{1} = ( t_{1} p''_{i} u_{i} t_{2} )$ and $c' = ( t_{1} p'_{i} w'_{i} t_{2} )$. There is $u \in \Gamma^*$ such that $w'_{i} = \gamma u$ and $u_{i} = u' u$.
		
		By induction, as shown in figure \ref{fig:lm1b}, there is a path $\pi'$ such that $A_{pre^*_\Pi}$ accepts $( c_1, \pi')$ and $\pi' \in \lambda_1 ( \{ \varepsilon \} ) \shuffle \lambda_2 \circ \lambda_3 ( \{ \varepsilon \} ) \shuffle \lambda_4 ( \{ \varepsilon \} )$, where $\lambda_1 \linebreak = \lambda ( s_{init}, t_1, s)$, $\lambda_2 = \lambda ( s_1, p''_i u', s_2 )$, $\lambda_3 = \lambda ( s_2, u, s_3)$ and $\lambda_4 = \lambda ( s_4, t_2, \linebreak q_F )$.
		
		\begin{figure}
			\centering
			\begin{tikzpicture}[shorten >=1pt, node distance=2.5cm, on grid, auto] 
			\node[state,initial] (si) {$s_{init}$}; 
			\node[state] (s) [right=of si] {$s$};
			\node[state] (s1) [right=of s] {$s_1$};
			\node[state] (s1pi) [right=of s1] {${s_1}_{p'_i}$};
			\node[state] (s2) [below=of s1] {$s_2$}; 
			\node[state] (s3) [left=of s2] {$s_3$}; 
			\node[state] (s4) [below=of s3] {$s_4$};
			\node[state] (qf) [right=of s4] {$q_F$};
			\path[->] 
			(si) edge node {$t_1$} (s)
			(s) edge node {$\varepsilon$} (s1)
			(s1) edge node {$p''_i u'$} (s2)
			(s2) edge node {$u$} (s3)
			(s3) edge[out=270, in=90] node {$\varepsilon$} (s4)
			(s4) edge node {$t_2$} (qf)
			(s1) edge[densely dotted] node {$p'_i$} (s1pi)
			(s1pi) edge[densely dotted] node {$\gamma$} (s2);
			\end{tikzpicture}
			\captionof{figure}{Adding an edge if no new process is spawned.}
			\label{fig:lm1b}
		\end{figure}
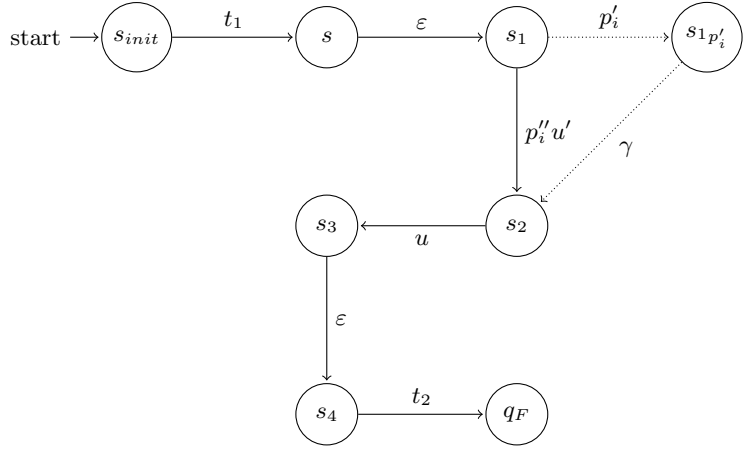
		
		The saturation procedures creates a transition $( {s_{1}}_{p'_{i}}, \gamma, s_{2} )$ such that we have $\lambda ( {s_{1}}_{p'_{i}}, \gamma, s_2 ) = a \cdot ( \lambda_2 )$. Moreover:
		
		\[\pi' \in \lambda_1( \{ \varepsilon \} ) \shuffle \lambda_2 \circ \lambda_3 ( \{ \varepsilon \} ) \shuffle \lambda_4 ( \{ \varepsilon \} )
		\]
		
		\[\pi = a \cdot \pi' \in a \cdot ( \lambda_1 ( \{ \varepsilon \}) \shuffle \lambda_2 \circ \lambda_3 ( \{ \varepsilon \} ) \shuffle \lambda_4 ( \{ \varepsilon \} ) )
		\]
		
		\[\pi \in \lambda_1 ( \{ \varepsilon \} ) \shuffle a \cdot \lambda_2 \circ \lambda_3 ( \{ \varepsilon \} ) \shuffle \lambda_4 ( \{ \varepsilon \} )
		\]
		
		The automaton $A_{pre^*}$ accepts $c'$ with a path $s_{init} \xrightarrow{ t_{1} p'_{i} \gamma u t_{2} }_{A_{pre^*}} q_{F}$ such that $\lambda( s_{init}, t_1 p'_i \gamma u t_2, q_F ) \supseteq \lambda_1 \shuffle a \cdot ( \lambda_2 \circ \lambda_3 ) \shuffle \lambda_4$ because of constraints $(Z_2, Z_3, Z_4)$. The automaton $A_{pre_{\Pi}^*}$ therefore accepts $\pi$.
	\end{itemize}
\end{description}

\subsubsection{Proof of Lemma \ref{lemma_2}}

We prove this lemma by induction the length $\left| \pi \right|$ of $\pi$.

\begin{description}
	\item[Basis:] if $\left| \pi \right| = 0$, then $\pi = \varepsilon$; because of constraints $(Z_2, Z_3, Z_4, Z_5)$, paths that use new transitions introduced by the saturation procedure are of length at least $1$ and $(c', \pi)$ can only be accepted by a sequence of transitions following constraint $( Z_1 )$, hence, transitions in $A$. Therefore, $c' \in C$ and the property holds.
	
	\item[Induction step:] if $\left| \pi \right| = k > 0$, let $\pi = {a_{1}} \cdot {a_{2}} \cdot \ldots \cdot {a_{k}}$. If $A_{pre_\Pi^*}$ accepts $(c', \pi)$, then there is an accepting execution $s_{init} = s_1 \xRightarrow{( p'_1 w'_1, e_1 )}_{A_{pre_\Pi^*}} s_2 \ldots s_n \linebreak \xRightarrow{( p'_n w'_n, e_1 )}_{A_{pre_\Pi^*}} s_{n+1} \in F$ such that $\pi \in e_1 ( \{ \varepsilon \} ) \shuffle e_2 ( \{ \varepsilon \} )\ldots \shuffle e_n ( \{ \varepsilon \} )$, with $e_i = \lambda ( s_i, p'_i w'_i, s_{i+1} )$.
	
	We have the two following cases:
	\begin{itemize}
		\item if $a_1 \neq \tau$, for each $i \in \{ 1, \ldots, n \}$, we split $e_i ( \{ \varepsilon \} )$ into two parts: ${a_1} \cdot S_i$, which represents the path expressions in $e_i$ starting with ${a_1}$, and $S'_i$, which stands for the path expressions starting with another symbol than ${a_1}$. We have $e_j ( \{ \varepsilon \} ) = ( {a_1} \cdot S_j ) \cup S'_j$ and $\pi = a_1 a_2 \ldots a_k \in ( ( {a_1} \cdot S_1 ) \cup S'_1 ) \shuffle \ldots ( ( {a_1} \cdot S_n ) \cup S'_n )$.
		
		Let $i$ be such that ${a_2} {a_3} \ldots {a_k} \in ( ( {a_1} \cdot S_1 ) \cup S'_1 ) \shuffle \ldots \shuffle ( S_i ) \shuffle \ldots ( ( {a_1} \cdot S_n ) \cup S'_n )$. The first symbol ${a_1}$ of $\pi$ must appear in one (let's say the $i$-th) of the $n$ sets $e_1 ( \{ \varepsilon \} ), \ldots, e_n ( \{ \varepsilon \} )$.
		
		We define $w'_i = \gamma_1 \gamma_2 \ldots \gamma_j$. We now consider the sub-sequence $s_i = s_{i_1} \linebreak \xRightarrow{( p'_i \gamma_1, \lambda_1)}_{A_{pre_\Pi^*}} s_{i_2} \ldots \xRightarrow{( \gamma_j, \lambda_j)} s_{i_{j+1}}$ of the accepting execution outlined earlier. We have $e_i = \lambda_1 \circ \lambda_2 \ldots \circ \lambda_j$.
		
		Since there are paths starting with ${a_1}$ in $e_i$, then there are words starting with ${a_1}$ in $\lambda_1$ as well. Therefore, there is a rule $r = p'_{i} \gamma_{1} \xrightarrow{a_1}_M q_{i} u$ in $M$ and a transition $( {s_i}_{q_{i}}, u, s_{i_2} )$ in $A_{pre^*}$, from which a transition $( {s_i}_{p'_{i}}, \gamma_{1}, s_{i_2} )$ labelled by $\lambda_1$ can be added by the saturation rules in such a manner that the inequality ${a_1} \cdot \lambda ( {s_i}_{q_i}, u, s_{i_2}) \subseteq \lambda_1$ holds.
		
		The automaton $A_{pre_{\Pi}^{*}}$ then has an accepting execution labelled by: $$c_{1} = ( p'_{1} w'_{1} \ldots p'_{i-1} w'_{i-1} q_{i} u \gamma_{2} \ldots \gamma_{j} p'_{i+1} w'_{i+1} \ldots p'_{n} w'_{n} )$$ and: $$( e_1 \shuffle \ldots e_{i-1} \shuffle ( \lambda' \circ \lambda_2 \ldots \circ \lambda_j ) \shuffle e_{i+1} \ldots \shuffle e_n )$$ where $\lambda' = \lambda ({s_i}_{q_i}, u, s_{i_2} )$.
		
		We have ${a_1} \cdot \lambda' ( {s_i}_{q_i}, u, s_{i_2}) \subseteq \lambda_1$ and ${a_1} \cdot \lambda' \circ \lambda_2\ \ldots \circ \lambda_j ( \{ \varepsilon \} ) = {a_1} \cdot S_i$, hence $\lambda' \circ \lambda_2 \ldots \circ \lambda_j ( \{ \varepsilon \} ) = S_i$.
		
		If we apply rule $r$ to $( c, \pi )$, we move to a configuration $( c_{1}, \pi')$, with $\pi' = {a_{2}} \ldots {a_{k}}$ and $\pi' \in ( ( {a_1} \cdot S_1 ) \cup S'_1 ) \shuffle \ldots \shuffle ( S_i ) \shuffle \ldots ( ( {a_1} \cdot S_n ) \cup S'_n ) = e_1 ( \{ \varepsilon \} ) \shuffle \ldots e_{i-1} ( \{ \varepsilon \} ) \shuffle ( \lambda' \circ \lambda_2 \ldots \circ \lambda_j ) ( \{ \varepsilon \} ) \shuffle e_{i+1} ( \{ \varepsilon \} ) \ldots \shuffle e_n ( \{ \varepsilon \} )$. Moreover, $\left| \pi' \right| < \left| \pi \right|$.
		
		If we apply the induction hypothesis, there is $c$ such that $s_{init} \xrightarrow{c}_A q_{F}$ and $( c_{1}, \pi') \leadsto_{M, \Pi} ( c, \varepsilon )$. Since $( c', \pi ) \leadsto_{M, \Pi} ( c_{1}, \pi' )$, $( c', \pi ) \leadsto_{M, \Pi} (c', \varepsilon )$.
		
		\item if $a_1= \tau$, from $c$, the automaton $M$ can either move to another configuration if two of its processes synchronize with an action $a$ or apply an internal action; we focus on the first case, the second case being similar to the previous unsynchronized action in terms of pushdown operations.
		
		For $i \in \{ 1, \ldots, n \}$, we split $e_i ( \{ \varepsilon \} )$ in three parts: the set ${a} \cdot S_i$ of path expressions starting by $a$, the set ${\overline{a}} \cdot S'_i$ of path expressions starting by $\overline{a}$, and the set $S''_i$ of path expressions starting with another symbol than $a$ or $\overline{a}$. We have $e_i ( \{ \varepsilon \} ) = ( {a} \cdot S_i ) \cup ( {\overline{a}} \cdot S'_i ) \cup S''_i$ and:
		\[\pi = \tau \cdot {a_2} \cdot \ldots {a_k} \in ( ( {a} \cdot S_1 ) \cup ( {\overline{a}} \cdot S'_1 ) \cup S''_1 ) \shuffle \ldots \]
		
		Let $i$ and $j$ be two integers such that ${a_2} \cdot \ldots {a_k} \in ( ( {a} \cdot S_1 ) \cup ( {\overline{a}} \cdot S'_1 ) \cup S''_1 ) \shuffle \ldots \shuffle \ldots ( S_i ) \shuffle \ldots ( S'_j ) \shuffle \ldots ( ( {a} \cdot S_n ) \cup ( {\overline{a}} \cdot S'_n ) \cup S''_n ))$. The symbol $\tau$ of $\pi$ comes from a shuffle of a word in ${a} \cdot S_i$, hence in $e_i ( \{ \varepsilon \} )$, and a word in ${\overline{a}} \cdot S'_j$, hence in $e_j ( \{ \varepsilon \} )$.
		
		We define $w'_i = \gamma_1 \gamma_2 \ldots \gamma_l$. We now consider the sub-sequence $s_i = s_{i_1} \linebreak \xRightarrow{( p'_i \gamma_1, \lambda_1)}_{A_{pre_\Pi^*}} s_{i_2} \ldots \xRightarrow{( \gamma_l, \lambda_l)} s_{i_{l+1}}$ of the accepting execution outlined earlier.
		
		We also define $w'_j = \alpha_1 \alpha_2 \ldots \alpha_m$ as well as the sub-sequence $s_j = s_{j_1} \linebreak \xRightarrow{( p'_j \alpha_1, \lambda'_1)}_{A_{pre_\Pi^*}} s_{j_2} \ldots \xRightarrow{( \alpha_m, \lambda'_m)} s_{j_{m+1}}$ of the accepting execution outlined earlier. 
		
		 We have $e_i = \lambda_1 \circ \lambda_2 \ldots \circ \lambda_l$. There are paths starting by $a$ in $e_i$, hence in $\lambda_1$ as well, and there is therefore a rule:
		\[r_{1} = p'_{i} \gamma _{1} \xrightarrow{a}_M q_{i} u\]
		
		By the saturation rules, from the transition $( {s_i}_{q_i}, u, s_{i_2} )$, we add a new transition $( {s_i}_{p'_i}, \gamma_1, s_{i_2} )$ labelled by $\lambda_1$ such that $a \cdot \lambda ( {s_i}_{q_i}, u, s_{i_2} ) \subseteq \lambda_1$.
		
		We have $e_j = \lambda'_1 \circ \lambda'_2 \ldots \circ \lambda'_m$. There are paths starting by $\overline{a}$ in $e_j$, hence in $\lambda'_1$ as well, and there is a rule:
		\[r_{2} = p'_{j} \alpha_{1} \xrightarrow{\overline{a}} q_{j} u'\]
		
		By the saturation rules, from the transition $( {s_j}_{q_j}, u', s_{j_2} )$, we add a new transition $( {s_j}_{p'_j}, \alpha_1, s_{j_2} )$ labelled by $\lambda'_1$ such that ${\overline{a}} \cdot \lambda ( {s_j}_{q_j}, u', s_{j_2} ) \subseteq \lambda'_1$.
		
		We define $v_1 = \gamma_2 \ldots \gamma_l$ and $v_2 = \alpha_2 \ldots \alpha_m$.
		
		The automaton $A_{pre_{\Pi}^{*}}$ then has an accepting execution labelled by $c_{1} = ( p'_{1} w'_{1} \ldots p'_{i-1} w'_{i-1} q_{i} u v_1 \ldots q_{j} u' v_{2} \ldots p'_{n} w'_{n} )$ and $( e_1 \shuffle \ldots e_{i-1} \shuffle ( \lambda \circ \lambda_2 \ldots \circ \lambda_l ) \shuffle \ldots ( \lambda' \circ \lambda'_2 \ldots \circ \lambda'_m ) \ldots \shuffle e_n )$, where $\lambda = \lambda ( {s_i}_{q_i}, u, s_{i_2})$ and $\lambda'= \lambda ( {s_j}_{q_j}, u', s_{j_2} )$.
		
		We have $a \cdot \lambda \subseteq \lambda_1$ and $a \cdot \lambda \circ \lambda_2 \ldots \circ \lambda_l ( \{ \varepsilon \} ) = a \cdot S_i$, hence $\lambda \circ \lambda_2 \ldots \circ \lambda_l ( \{ \varepsilon \} ) = S_i$. Moreover, ${\overline{a}} \cdot \lambda' \subseteq \lambda'_1$ and ${\overline{a}} \cdot \lambda' \circ \lambda'_2 \ldots \circ \lambda'_m (( \{ \varepsilon \} )) = {\overline{a}} \cdot S'_j$, hence $\lambda' \circ \lambda'_2 \ldots \circ \lambda'_m ( \{ \varepsilon \} ) = S'_j$.
		
		If we apply $r_1$ and $r_2$ in a synchronized manner to the configuration $(c, \pi)$, we move to another configuration $( c_{1}, \pi' )$, where $\pi' = {a_{2}} \ldots {a_{k}}$ and $\pi' \in ( ( {a} \cdot S_1 ) \cup ( {\overline{a}} \cdot S'_1 ) \cup S''_1 ) \shuffle \ldots ( S_i ) \shuffle \ldots ( S'_j ) \shuffle \ldots ( ( {a} \cdot S_n ) \cup ( {\overline{a}} \cdot S'_n ) \cup S''_n ) = e_1( \{ \varepsilon \} ) \shuffle \ldots e_{i-1} ( \{ \varepsilon \} ) \shuffle ( \lambda \circ \ldots \circ \lambda_l ) ( \{ \varepsilon \} ) \shuffle \ldots ( \lambda' \circ \lambda'_2 \ldots \circ \lambda'_m ) ( \{ \varepsilon \} ) \ldots \shuffle e_n ( \{ \varepsilon \} )$.
		
		Since $\left| \pi' \right| < \left| \pi \right|$, we can apply the induction hypothesis. There is a configuration $c$ such that $s_{init} \xrightarrow{c}_A q_{F}$ and $( c_{1}, \pi' ) \leadsto_{M, \Pi} ( c,( \varepsilon ))$. Since $ ( c', \pi ) \leadsto_{M, \Pi} ( c_{1}, \pi')$, it follows that $( c', \pi) \leadsto_{M, \Pi} ( c, \varepsilon )$.
	\end{itemize}
\end{description}

\section{An abstraction framework for paths}

We can't compute the exact set $Paths_M (C, C' )$, we will therefore over approximate it. To do so, we use the following mathematical framework, basing our technique on the approach presented by Bouajjani et al. in \cite{BET-popl03}.

\subsection{Abstractions and Galois connections}

Let $\mathcal{L} = (2^{Act^*}, \subseteq, \cup, \cap, \emptyset, Act^*)$ be the complete lattice of languages on $Act$.

Our abstraction of $\mathcal{L}$ requires a lattice $E = ( D, \le, \sqcup, \sqcap, \bot, \top)$, from now on called the \emph{abstract lattice}, where $D$ is a set called the abstract domain, as well as a pair of mappings $( \alpha, \beta )$ called a \emph{Galois connection}, where $\alpha : 2^{Act^*} \to D$ and $\beta : D \to 2^{Act^*}$ are such that $\forall x \in 2^{Act^*}$, $\forall y \in D$, $\alpha (x) \le y \Leftrightarrow x \subseteq \beta (y)$.

$\forall L \in \mathcal{L}$, given a Galois connection $(\alpha, \beta)$, we have $L \subseteq \beta (\alpha ( L ))$. Hence, the Galois connection can be used to overapproximate a language, such as the set of execution paths of a SDPN.

Moreover, it is easy to see that $\forall L_1, \forall L_2 \in \mathcal{L}$, $\alpha ( L_1 ) \sqcap \alpha ( L_2 ) = \bot$ if and only if $\beta (\alpha ( L )) \cap \beta (\alpha ( L )) = \emptyset$. We therefore only need to check if $\alpha ( Paths_M ( C, C' ) ) \sqcap \alpha ( \tau^* ) = \bot$. From then on, $\alpha ( Paths_M ( C, C' ) )$ will be called the \emph{abstraction} of $Paths_M ( C, C' )$, although technically the set $\beta ( \alpha ( Paths_M ( C, C' ) )$ is the actual over-approximation.

\subsection{Kleene algebras}

We want to define abstractions of $\mathcal{L}$ such that we can compute the abstract path language $\alpha ( Paths_M ( C', C ) )$, assuming the sets $C'$ and $C$ are regular. In order to do so, we consider a special class of abstractions, called \emph{Kleene abstractions}.

An idempotent semiring is a structure $K = ( A, \oplus, \odot, \overline{0}, \overline{1} )$, where $\oplus$ is an associative, commutative, and idempotent ($a \oplus a = a$) operation, and $\odot$ is an associative operation. $\overline{0}$ and $\overline{1}$ are neutral elements for $\oplus$ and $\odot$ respectively, $\overline{0}$ is an annihilator for $\odot$ ($a \odot \overline{0} = \overline{0} \odot a = \overline{0}$) and $\odot$ distributes over $\oplus$.

$K$ is an $Act$-semiring if it can be generated by $\overline{0}$, $\overline{1}$, and elements of the form $v_{a} \in A$, $\forall a \in Act$. A semiring is said to be closed if $\oplus$ can be extended to an operator over countably infinite sets while keeping the same properties as $\oplus$.

We define $a^{0} = \overline{1}$, $a^{n+1} = a \odot a^{n}$ and $a^{*} = \mathop{\bigoplus} \limits_{n \ge 0} a^{n}$. Adding the $*$ operation to an idempotent closed $Act$-semiring $K$ transforms it into a \emph{Kleene algebra}.

\subsection{Kleene abstractions}
\label{kleene_abstractions}

An abstract lattice $E = ( D, \le, \sqcup, \sqcap, \bot, \top)$ is said to be compatible with a Kleene algebra $K = ( A, \oplus, \odot, \overline{0}, \overline{1} )$ if $D = A$, $x \le y \Leftrightarrow x \oplus y = y$, $\bot = \overline{0}$ and $\sqcup = \oplus$.

A \emph{Kleene abstraction} is an abstraction such that the abstract lattice $E$ is compatible with the Kleene algebra and the Galois connection $\alpha: 2^{Act^*} \to D$ and $\beta: D\to 2^{Act^*}$ is defined by:
\[\alpha (L) = \mathop{\bigoplus} \limits_{a_{1} \ldots a_{n} \in L} v_{a_1} \odot \ldots \odot v_{a_n}\] 

\[\beta (x) = \left\{a_{1} \ldots a_{n} \in 2^{Act^*} \mid v_{a_1} \odot \ldots \odot v_{a_n} \le x\right\}\] 

Intuitively, a Kleene abstraction is such that the abstract operations $\oplus$, $\odot$, and $*$ can be matched to the union, the concatenation, and the Kleene closure of the languages of the lattice $\mathcal{L}$, $\overline{0}$ and $\overline{1}$ to the empty language and $\{ \varepsilon \}$, $v_a$ to the language $\{ a \}$, the upper bound $\top \in K$ to $Act^*$, and the operation $\sqcap$ to the intersection of languages in the lattice $\mathcal{L}$.

In order to compute $\alpha ( L )$ for a given language $L$, each word $a_{1} \ldots a_{n}$ in $L$ is matched to its abstraction $v_{a_1} \odot \ldots \odot v_{a_n}$, and we consider the sum of these abstractions.

We can check if $\alpha ( Paths_M ( C, C' ) ) \sqcap \mathop{\bigoplus} \limits_{n \geq 0} v_{\tau}^n = \bot$; if it is indeed the case, then $\beta ( \alpha ( Paths_M ( C, C' ) ) ) \cap \tau^* = \emptyset$, and since $\beta ( \alpha ( Paths_M ( C, C' ) ) )$ is an over-approximation $Paths_M ( C, C' )$, it follows that $Paths_M ( C, C' ) \cap \tau^* = \emptyset$.

A \emph{finite-chain} abstraction is an abstraction such that the lattice $( K, \oplus )$ has no infinite ascending chains. In this paper, we rely on a particular class of finite-chain abstractions, called \emph{finite-domain} abstractions, whose abstract domain $K$ is finite, such as the following examples:

\subsubsection{Prefix abstractions.}

Let $n$ be an integer and $W ( n ) = \left\{ w \in Act^* \mid \left| w \right| \le n \right\}$ be the set of words of length smaller than $n$. We define the $n$-th order \emph{prefix} abstraction $\alpha^{pref}_n$ as follows: the abstract lattice $A = 2^{W}$ is generated by the elements $v_a = \left\lbrace a \right\rbrace$, $a \in Act$; $\oplus = \cup$; $U \odot V = \{ \text{pref}_n ( uv ) \mid u \in U, v \in V \}$ where $\text{pref}_n ( w )$ stands for the prefix of $w$ of length $n$ (or lower if $w$ is of length smaller than $n$); $\overline{0} = \emptyset$; and $\overline{1} = \left\{ \varepsilon \right\}$. From there, we build an abstract lattice where $\top = W$, $\sqcap = \cap$, and $\leq = \subseteq$. This abstraction is accurate for the $n$-th first steps of a run, then approximates the other steps by $Act^*$.

We can apply a prefix abstraction of order 2 to the example shown in Figure \ref{fig:4-ex1}. For ease of representation, we show a control flow graph, although we could use the procedure outlined in section \ref{sdpn_program_model} to compute an equivalent SDPN. We also consider without loss of generality that spawns, calls, and returns are silent $\varepsilon$-transitions.

We check that, starting from an initial set of configurations $C$ with a single thread $M$ in state $m_0$, the set $C'$ where $M$ is in state $m_2$ can't be reached with regards to the strict semantics.

We have $\alpha^{pref}_2 ( Paths_M ( C, C' ) ) = \{ b, b \tau, b a, b \overline{a} \}$, $\alpha^{pref}_2 ( \tau^* ) = \{ \varepsilon, \tau, \tau \tau \}$, and $\alpha^{pref}_2 ( Paths_M ( C, C' ) ) \cap \alpha^{pref}_2 ( \tau^* ) = \emptyset$, hence, $C'$ can't be reached from $C$ with regards to the strict semantics. Intuitively, the transition labelled with $b$ in thread $M$ can't synchronize as there isn't any transition labelled with $\overline{b}$ in the whole program.

\begin{figure}
	\centering
	\begin{minipage}{.25\linewidth}
		\centering
		\begin{tikzpicture}[shorten >=1pt, node distance=2.5cm, on grid, auto] 
		\node[draw=none] (m0) {$m_0$};
		\node[draw=none] (m1) [below=of m0] {$m_1$};
		\node[draw=none] (m2) [below=of m1] {$m_2$};
		\path[->] 
		(m0) edge node {$b$} (m1)
		(m1) edge node {spawn $N$} (m2)
		(m2) edge [loop right] node {$a$} (m2);
		\end{tikzpicture}
		\textbf{Thread $M$}
	\end{minipage}
	\hspace{.01\linewidth}
	\begin{minipage}{.25\linewidth}
		\centering
		\begin{tikzpicture}[shorten >=1pt, node distance=2.5cm, on grid, auto] 
		\node[draw=none] (n0) {$n_0$};
		\node[draw=none] (n1) [below=of n0] {$n_1$};
		\node[draw=none] (n2) [below=of n1] {$n_2$};
		\path[->]
		(n0) edge node {call $F$} (n1)
		(n1) edge node {spawn $N$} (n2);
		\end{tikzpicture}
		\textbf{Thread $N$}
	\end{minipage}
	\hspace{.01\linewidth}
	\begin{minipage}{.35\linewidth}
		\centering
		\begin{tikzpicture}[shorten >=1pt, node distance=2.5cm, on grid, auto] 
		\node[draw=none] (f0) {$f_0$};
		\node[draw=none] (f1) [below right=of f0] {$f_1$};
		\node[draw=none] (f2) [below left=of f0] {$f_2$};
		\node[draw=none] (f3) [below left=of f1] {return};
		\path[->] 
		(f0) edge node {$a$} (f1)
		(f0) edge node {$\overline{a}$} (f2)
		(f1) edge node {} (f3)
		(f2) edge node {} (f3);
		\end{tikzpicture}
		\textbf{Procedure $F$}
	\end{minipage}
	\captionof{figure}{Applying a second order prefix abstraction to an example.}
	\label{fig:4-ex1}
\end{figure}
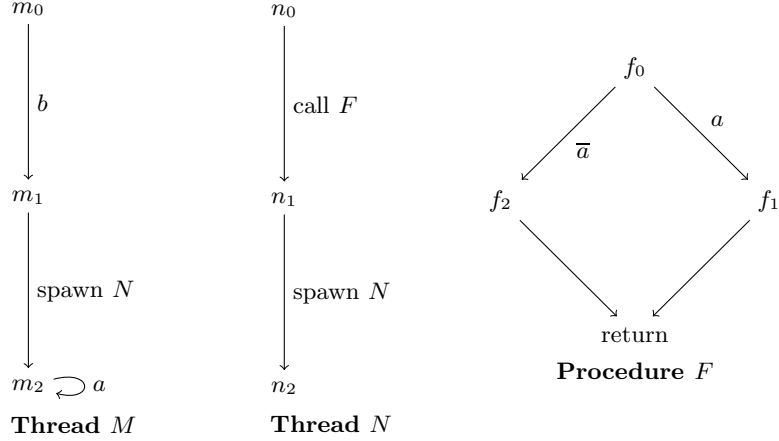

\subsubsection{Suffix abstractions.}

Let $W$ be the set of words of length smaller than $n$. We define the $n$-th order \emph{suffix} abstraction $\alpha^{suff}_n$ as follows: the abstract lattice $A = 2^{W}$ is generated by the elements $v_a = \left\lbrace a \right\rbrace$, $a \in Act$; $\oplus = \cup$; $U \odot V = \{ \text{suff}_n ( uv ) \mid u \in U, v \in V \}$ where $\text{suff}_n ( w )$ stands for the suffix of $w$ of length $n$ (or lower if $w$ is of length smaller than $n$); $\overline{0} = \emptyset$; and $\overline{1} = \left\{ \varepsilon \right\}$. From there, we build an abstract lattice where $\top = W$, $\sqcap = \cap$, and $\leq = \subseteq$. This abstraction is accurate for the $n$-th last steps of a run, then approximates the other steps by $Act^*$.

We apply a suffix abstraction of order $2$ to the example shown in Figure \ref{fig:4-ex2}. We check that, starting from an initial set of configurations $C$ with a single thread $M$ in state $m_0$, the set $C'$ where $M$ is in state $m_2$ can't be reached with regards to the strict semantics.

We have $\alpha^{suff}_2 ( Paths_M ( C, C' ) ) = \{ b, a b, \overline{a} b, \tau b \}$, $\alpha^{suff}_2 ( \tau^* ) = \{ \varepsilon, \tau, \tau \tau \}$, and $\alpha^{suff}_2 ( Paths_M ( C, C' ) ) \cap \alpha^{suff}_2 ( \tau^* ) = \emptyset$, hence, $C'$ can't be reached from $C$ with regards to the strict semantics. Intuitively, the transition labelled with $b$ in thread $M$ can't synchronize as there isn't any transition labelled with $\overline{b}$ in the whole program.

\begin{figure}
	\centering
	\begin{minipage}{.25\linewidth}
		\centering
		\begin{tikzpicture}[shorten >=1pt, node distance=2.5cm, on grid, auto] 
		\node[draw=none] (m0) {$m_0$};
		\node[draw=none] (m1) [below=of m0] {$m_1$};
		\node[draw=none] (m2) [below=of m1] {$m_2$};
		\path[->] 
		(m0) edge node {spawn $N$} (m1)
		(m1) edge [loop right] node {call $F$} (m1)
		(m1) edge node {$b$} (m2);
		\end{tikzpicture}
		\textbf{Thread $M$}
	\end{minipage}
	\hspace{.01\linewidth}
	\begin{minipage}{.25\linewidth}
		\centering
		\begin{tikzpicture}[shorten >=1pt, node distance=2.5cm, on grid, auto] 
		\node[draw=none] (n0) {$n_0$};
		\node[draw=none] (n1) [below=of n0] {$n_1$};
		\node[draw=none] (n2) [below=of n1] {$n_2$};
		\path[->]
		(n0) edge node {$\tau$} (n1)
		(n1) edge [loop right] node {call $F$} (n1)
		(n1) edge node {$a$} (n2);
		\end{tikzpicture}
		\textbf{Thread $N$}
	\end{minipage}
	\hspace{.01\linewidth}
	\begin{minipage}{.35\linewidth}
		\centering
		\begin{tikzpicture}[shorten >=1pt, node distance=2.5cm, on grid, auto] 
		\node[draw=none] (f0) {$f_0$};
		\node[draw=none] (f1) [below right=of f0] {$f_1$};
		\node[draw=none] (f2) [below left=of f0] {$f_2$};
		\node[draw=none] (f3) [below left=of f1] {return};
		\path[->] 
		(f0) edge node {$a$} (f1)
		(f0) edge node {$\overline{a}$} (f2)
		(f1) edge node {} (f3)
		(f2) edge node {} (f3);
		\end{tikzpicture}
		\textbf{Procedure $F$}
	\end{minipage}
	\captionof{figure}{Applying a second order suffix abstraction to an example.}
	\label{fig:4-ex2}
\end{figure}
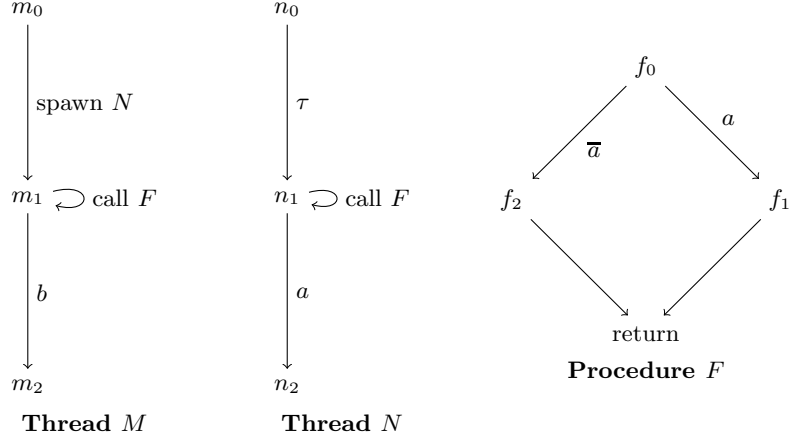

It is worth noting that the reachability problem in Example \ref{fig:4-ex2} can't be solved by a prefix abstraction, no matter its order. The reason is that $\forall n \geq 0$, there is an execution path $\tau^n b \in Paths_M ( C, C' )$, hence, $\tau^n \in \alpha^{pref}_n ( Paths_M ( C, C' ) )$. Intuitively, the two self-pointing loops of nodes $m_1$ and $n_1$ can synchronize.

Conversely, we can't use a suffix abstraction to solve the reachability problem in Example \ref{fig:4-ex1}. The reason is that $\forall n \geq 0$, there is an execution path $b \tau^n \in Paths_M ( C, C' )$, hence, $\tau^n \in \alpha^{suff}_n ( Paths_M ( C, C' ) )$. Intuitively, the self-pointing loop of node $m_2$ can synchronize with the self-spawning loop in thread $N$.

Thus, these two abstractions (prefix and suffix) complement each other.

\section{Abstracting the set of paths}

Since we can't compute the solution of the constraints outlined in \ref{labelling_constraints} on paths, our intuition now is to solve them in an abstract \emph{finite} domain defined by a Kleene abstraction where we can compute the least pre-fixpoint, in a manner similar to \cite{BET-infinity04}.

\subsection{From the language of paths to the Kleene abstraction}

We abstract the the complete lattice of languages $\mathcal{L} = (2^{Act^*}, \subseteq, \cup, \cap, \emptyset, Act^*)$ by a finite domain Kleene abstraction on an abstract lattice $E = ( D, \le, \sqcup, \sqcap, \bot, \top)$ and a Klenne algebra $K = ( A, \oplus, \odot, \overline{0}, \overline{1} )$, as defined in Section \ref{kleene_abstractions}.

Intuitively:
\begin{itemize}
	\item the set $\Pi$ is abstracted by $K$;
	\item the operator $\cdot$ is abstracted by $\odot$;
	\item the operator $\cup$ is abstracted by $\sqcup = \oplus$;
	\item the operator $\subseteq$ is abstracted by $\le$;
	\item the operator $\cap$ is abstracted by $\sqcap$;
	\item $\emptyset$ is abstracted by $\bot = \overline{0}$;
	\item $\{ \varepsilon \}$ is abstracted by $\overline{1}$;
	\item $Act^*$ is abstracted by the set of path expressions $\Pi_K$, that is, the smallest set such that:
	\begin{itemize}
		\item $\overline{1} \in \Pi_{K}$;
		
		\item if $\pi \in \Pi_{K}$, then $\forall a \in Act$, $v_a \odot \pi \in \Pi_{K}$.
	\end{itemize}
\end{itemize}

We can define $K$-configurations in $Conf_M \times \Pi_K$ and $K$-automata in a similar manner to $\Pi$-configurations and $\Pi$-automata. For a given set of configurations $C$, we introduce the set $pre_{K}^{*} ( M, C )$ of $K$-configurations $( c, \pi )$ such that $( c, \pi ) \leadsto_{M, K} ( c', \overline{1} )$ for $c' \in C$. The following property obviously holds:
\[ pre_{K}^{*} ( M, C ) = \left\{ ( c', \pi ) \mid c' \in pre^* ( M, C ), \pi \leq \alpha ( Paths_M ( \{ c' \}, C ) ) \right\} \]
The abstract path expression $\pi$ is meant to be the abstraction of an actual execution path from $c$ to $c'$.

To do so, we need to define the shuffle operation to paths expressions. However, it has to be \emph{well-defined}: given two representations $v_{a_1} \odot \ldots \odot v_{a_n} = v_{b_1} \odot \ldots \odot v_{b_m}$ of a same path expression, $\forall w \in \{v_a | a \in Act \}^*$, we must have $( v_{a_1}, \ldots, v_{a_n} ) \shuffle w = ( v_{b_1}, \ldots, v_{b_m} ) \shuffle w$.

To this end, we first inductively define a shuffle operation $\shuffle : (\{v_a | a \in Act \}^*)^2 \rightarrow K$ such that, given two sequences representing path expressions, their shuffle product is the set of all possible interleaving (with synchronization) of these sequences.

Let $w = (v_{a_1}, \ldots, v_{a_n})$ and $w' = (v_{b_1}, \ldots, v_{b_m})$ be two such sequences:
\begin{itemize}
	\item $(v_{a_1}, \ldots v_{a_n}) \shuffle (\varepsilon) = (\varepsilon) \shuffle (v_{a_1}, \ldots v_{a_n}) = \{ v_{a_1} \odot \ldots \odot v_{a_n} \}$;
	
	\item if $b_1 \neq \overline{a_1}$, then $( (v_{a_1}, \ldots, v_{a_n}) ) \shuffle ( v_{b_1}, \ldots, v_{b_m} ) = v_{a_1} \odot ( (v_{a_2}, \ldots, v_{a_n}) \shuffle ( v_{b_1}, \ldots, v_{b_m} ) ) \oplus v_{b_1} \odot ( (v_{a_1}, \ldots, v_{a_n}) \shuffle (v_{b_2}, \ldots, v_{b_m}) )$;
	
	\item if $b_1 = \overline{a_1}$, then $( (v_{a_1}, \ldots, v_{a_n}) ) \shuffle ( v_{b_1}, \ldots, v_{b_m} ) = v_{a_1} \odot ( (v_{a_2}, \ldots, v_{a_n}) \shuffle ( v_{b_1}, \ldots, v_{b_m} ) ) \oplus v_{b_1} \odot ( (v_{a_1}, \ldots v_{a_n}) \shuffle (v_{b_2}, \ldots, v_{b_m}) ) \oplus v_{\tau} \odot ( (v_{a_2}, \ldots, v_{a_n}) \shuffle (v_{b_2}, \ldots, v_{b_m}) )$; two synchronized actions $a_1$ and $\overline{a_1}$ result in an internal action $\tau$, hence, there is a component $v_\tau \odot ( w_{1} \shuffle w_{2} )$ of the shuffle product where the two paths synchronize.
\end{itemize}

We now that prove that the shuffle product is well-defined on path expressions for the prefix and suffix abstractions. We define the length $\left| \pi \right|$ of a path expression $\pi$ as the length $n$ of the smallest sequence $(v_{a_1}, \ldots, v_{a_n})$ such that to $\pi = v_{a_1} \odot \ldots \odot v_{a_n}$, length 0 meaning that $\pi = \overline{1}$.

Note that this sequence is unique for the prefix and suffix abstractions; we can therefore define a function $\theta ( \pi ) = (v_{a_1}, \ldots, v_{a_n})$ that matches to a path expression its smallest representation.

\begin{lemma}
	The shuffle product is well-defined for the prefix abstraction.
\end{lemma}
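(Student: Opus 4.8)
The plan is to first pin down what a path expression \emph{is} in the prefix abstraction of order $n$. Since $v_a=\{a\}$ and $U\odot V=\{\text{pref}_n(uv)\mid u\in U,\ v\in V\}$, a one-line induction gives $v_{a_1}\odot\cdots\odot v_{a_k}=\{\text{pref}_n(a_1\cdots a_k)\}$. Hence two sequences $(v_{a_1},\dots,v_{a_k})$ and $(v_{b_1},\dots,v_{b_l})$ denote the same path expression iff $\text{pref}_n(a_1\cdots a_k)=\text{pref}_n(b_1\cdots b_l)$, and, writing $m$ for the common length $\min(n,k)=\min(n,l)$ of that word, iff $a_i=b_i$ for all $i\le m$. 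Two cases arise: either $m<n$, which forces $k=l=m$ and the two sequences to be literally equal (nothing to prove), or $m=n$, in which case both sequences have length $\ge n$ and agree on their first $n$ letters. So the whole statement reduces to the single claim
\[
  k\ge n\ \Longrightarrow\ (v_{a_1},\dots,v_{a_k})\shuffle w=(v_{a_1},\dots,v_{a_n})\shuffle w
  \qquad\text{for every sequence }w,
\]
i.e.\ that truncating the first argument to length $n$ does not change the shuffle product; applying this to both representations and using that the two length-$n$ truncations coincide then closes the argument. By symmetry of the recursive definition of $\shuffle$, independence from the representation of the second argument will follow from commutativity.

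The heart of the proof — and the step I expect to be the real obstacle — is this truncation invariance, which amounts to showing that $\shuffle$ only ``reads'' the first $n$ letters of each argument. Unfolding the recursive definition, $(v_{c_1},\dots,v_{c_k})\shuffle(v_{d_1},\dots,v_{d_r})$ is the $\oplus$-sum, over all \emph{complete} interleavings of the two sequences (optionally collapsing an adjacent complementary pair $d_j=\overline{c_i}$ into $\tau$), of the terms $v_{e_1}\odot\cdots\odot v_{e_t}$; by the normal form each such term is the singleton $\{\text{pref}_n(e_1\cdots e_t)\}$, and since a complete interleaving consumes everything (if $s$ pairs synchronise, $t=k+r-s$ with $s\le\min(k,r)$, so $t\ge\max(k,r)\ge n$) this prefix has length exactly $n$. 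The key bookkeeping observation is that the first $n$ output letters of an interleaving are produced by consuming an initial segment $c_1\cdots c_i$ and an initial segment $d_1\cdots d_j$ with $i\le n$ and $j\le n$: over those $n$ steps the number of $c$'s consumed equals (number of synchronisations) $+$ (number of solo-$c$ steps) $\le n$, and symmetrically for $d$. Hence $\text{pref}_n(e_1\cdots e_t)$ depends only on $c_1\cdots c_{\min(n,k)}$, on $d_1\cdots d_{\min(n,r)}$, and on the order of the first $n$ steps. From this both inclusions of the displayed identity follow by an ``extend arbitrarily'' argument: any legal choice of the first $n$ steps against $(v_{a_1},\dots,v_{a_n})$ is also legal against $(v_{a_1},\dots,v_{a_k})$ (as $i\le n\le k$), and conversely, and either way it can be completed — say, with no further synchronisations — to a full interleaving with the same length-$n$ prefix. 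The delicate point to get right is precisely that a partial first-$n$ interleaving always extends to a complete one despite the coupling that synchronisations impose between the two sides; committing to the ``no more $\tau$'' completion sidesteps this cleanly.

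Finally I would assemble the pieces: the normal form reduces well-definedness to the truncation identity, the combinatorial step proves that identity, and commutativity of $\shuffle$ upgrades first-argument independence to independence in both arguments. Therefore $\shuffle$ factors through $\Pi_K\times\Pi_K\to K$ for the prefix abstraction, which is the assertion of the lemma. (The suffix case should follow the same outline, reading the last $n$ letters instead of the first, with the extra care that the ``suffix window'' of an interleaving is fed by \emph{suffixes} of the two arguments.)
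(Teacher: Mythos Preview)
Your argument is correct and takes a genuinely different route from the paper. The paper proves, by induction on the total length $m+n$ of the two input sequences, that $(v_{a_1},\dots,v_{a_n})\shuffle(v_{b_1},\dots,v_{b_m})=\theta(v_{a_1}\odot\cdots\odot v_{a_n})\shuffle\theta(v_{b_1}\odot\cdots\odot v_{b_m})$, where $\theta$ returns the unique shortest representation; the inductive step unfolds one level of the recursive definition of $\shuffle$, applies the induction hypothesis to the shorter tails, and then invokes the algebraic identity $v_{x_0}\odot\bigl((v_{x_1},\dots,v_{x_l})\shuffle w\bigr)=v_{x_0}\odot\bigl((v_{x_1},\dots,v_{x_{l-1}})\shuffle w\bigr)$ specific to the $l$-th prefix algebra (the last letter drops out of the prefix window once $v_{x_0}$ is prepended). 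You instead unfold the recursion completely to a closed form---a $\oplus$-sum of $\odot$-products indexed by complete interleavings with optional synchronisations---and argue combinatorially that the length-$n$ prefix of each summand depends only on the first $n$ letters of each input, because $n$ output steps consume at most $n$ letters from either side; the ``complete with no further $\tau$'' extension then gives both inclusions of the truncation identity. Your approach is more semantic and makes the dependency claim transparent; it also treats both arguments uniformly once commutativity of $\shuffle$ on sequences (an easy symmetric induction, which you should state explicitly) is in hand. The paper's approach stays syntactic and close to the recursive definition, which keeps the write-up short but requires spotting the right one-step identity in the prefix algebra.
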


\begin{proof}
	 We will show by induction on $m + n$ that, given two sequences $(v_{a_1}, \ldots, \linebreak v_{a_n})$ and $(v_{b_1}, \ldots, v_{b_m})$, we have $(v_{a_1}, \ldots, v_{a_n}) \shuffle (v_{b_1}, \ldots, v_{b_m}) = \theta (v_{a_1} \odot \ldots \odot v_{a_n}) \shuffle \theta (v_{b_1} \odot \ldots \odot v_{b_m})$, i.e., that the shuffle of two path sequences is equal to the shuffle of their smallest representations.
	% Préfixe
	\begin{itemize}
	 	\item if $n \leq l$ and $m \leq l$, $\theta (v_{a_1} \odot \ldots \odot v_{a_n}) = (v_{a_1}, \ldots, v_{a_n})$ and $\theta (v_{b_1} \odot \ldots \odot v_{b_m}) = (v_{b_1}, \ldots, v_{b_m})$, by definition of the $l$-th prefix abstraction: indeed, the smallest representation in this abstraction of a word of length smaller than $l$ is itself.
	 	
	 	\item if $n > l$ and $m > l$, note that $\theta (v_{a_1} \odot \ldots \odot v_{a_n}) = (v_{a_1}, \ldots, v_{a_l})$ and $\theta 	(v_{b_1} \odot \ldots \odot v_{b_m}) = (v_{b_1}, \ldots, v_{b_l})$ by definition of the $l$-th prefix. If we suppose that $b_1 = \overline{a_1}$:
	 	
	 	\[ (v_{a_1}, \ldots, v_{a_n}) \shuffle (v_{b_1}, \ldots, v_{b_m}) \]
	 	\[ = v_{a_1} \odot ( (v_{a_2}, \ldots, v_{a_n}) \shuffle (v_{b_1}, \ldots, v_{b_m}) ) \oplus v_{\overline{a_1}} \odot ( (v_{a_1}, \ldots, v_{a_n}) \shuffle \] 
	 	\[ (v_{b_2}, \ldots, v_{b_m}) ) \oplus v_{\tau} \odot ( (v_{a_2}, \ldots, v_{a_n}) \shuffle (v_{b_2}, \ldots, v_{b_m}) ) \]
	 	\[ = v_{a_1} \odot ( (v_{a_2}, \ldots, v_{a_{l+1}}) \shuffle (v_{b_1}, \ldots, v_{b_l}) ) \oplus v_{\overline{a_1}} \odot ( (v_{a_1}, \ldots, v_{a_l}) \shuffle \] 
	 	\[ (v_{b_2}, \ldots, v_{b_{l+1}}) ) \oplus v_{\tau} \odot ( (v_{a_2}, \ldots, v_{a_{l+1}}) \shuffle (v_{b_2}, \ldots, v_{b_{l+1}}) ) \]
	 	
	 	if we apply the induction hypothesis. However, by definition of the prefix abstraction, given two sequences $(v_{x_1}, \ldots, v_{x_l})$ and $w$ and $\forall x_0 \in Act$, $v_{x_0} \odot ((v_{x_1}, \ldots, v_{x_l}) \shuffle w) = v_{x_0} \odot ((v_{x_1}, \ldots, v_{x_{l-1}}) \shuffle w)$. Intuitively, the symbols $v_{x_{1}}, \ldots, v_{x_{l-1}}$ have to be inserted after $v_{x_{0}}$ but before $v_{x_{l}}$, and $v_{x_{l}}$ will therefore be cut out of the prefix when we concatenate $v_{x_{0}}$. Hence:
	 	\[ (v_{a_1}, \ldots, v_{a_n}) \shuffle (v_{b_1}, \ldots, v_{b_m}) \]
	 	\[ = v_{a_1} \odot ( (v_{a_2}, \ldots, v_{a_{l}}) \shuffle (v_{b_1}, \ldots, v_{b_l}) ) \oplus v_{\overline{a_1}} \odot ( (v_{a_1}, \ldots, v_{a_l}) \shuffle \] 
	 	\[ (v_{b_2}, \ldots, v_{b_{l}}) ) \oplus v_{\tau} \odot ( (v_{a_2}, \ldots, v_{a_{l}}) \shuffle (v_{b_2}, \ldots, v_{b_{l}}) ) \]
	 	\[ = (v_{a_1}, \ldots, v_{a_l}) \shuffle (v_{b_1}, \ldots, v_{b_l}) \]
	 	The case $b_1 \neq \overline{a_1}$ is similar. Hence, the induction holds.
 	\end{itemize}
 	
 	As a consequence, the shuffle product is well-defined: if $v_{a_1} \odot \ldots \odot v_{a_n} = v_{b_1} \odot \ldots \odot v_{b_m} = \pi$, then for all sequences $w$, $(v_{a_1}, \ldots, v_{a_n}) \shuffle w = \theta(\pi) \shuffle w = (v_{b_1}, \ldots, v_{b_m}) \shuffle w$.	
\end{proof}

\begin{lemma}
	The shuffle product is well-defined for the suffix abstraction.
\end{lemma}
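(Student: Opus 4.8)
The plan is to mirror the proof of the preceding lemma: prove by induction on $m+n$ that for all sequences $(v_{a_1}, \ldots, v_{a_n})$ and $(v_{b_1}, \ldots, v_{b_m})$,
\[ (v_{a_1}, \ldots, v_{a_n}) \shuffle (v_{b_1}, \ldots, v_{b_m}) = \theta(v_{a_1} \odot \ldots \odot v_{a_n}) \shuffle \theta(v_{b_1} \odot \ldots \odot v_{b_m}), \]
and then deduce well-definedness exactly as there. The catch is that, for the suffix abstraction, the combinatorics no longer line up with the recursion of $\shuffle$ the way they did for prefixes: there $\theta$ keeps the \emph{first} $l$ symbols of a sequence, so the truncated sequence has the same head symbol as the original one and the case split ``$b_1=\overline{a_1}$ or not'' in the definition of $\shuffle$ transfers verbatim; here $\theta$ keeps the \emph{last} $l$ symbols, so the head — which governs that case split — is destroyed, and the term-by-term mimicry of the prefix proof breaks down.

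The replacement for the prefix identity $v_{x_0} \odot ((v_{x_1}, \ldots, v_{x_l}) \shuffle w) = v_{x_0} \odot ((v_{x_1}, \ldots, v_{x_{l-1}}) \shuffle w)$ is the following observation: in the $l$-th suffix abstraction, if at least one of two sequences $u_1,u_2$ has length $\geq l$, then $v_x \odot (u_1 \shuffle u_2) = u_1 \shuffle u_2$ in $K$, for every $x \in Act$. Indeed, every interleaving-with-synchronization of $u_1$ and $u_2$ has at least $\max(|u_1|,|u_2|) \geq l$ letters, since a synchronization merges one letter of each side into a single $\tau$ and there are at most $\min(|u_1|,|u_2|)$ of them; after $\odot$ has truncated such a word to its last $l$ letters, prepending $v_x$ and retruncating changes nothing. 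The whole proof is then organised so that this single fact absorbs every leading symbol, and no bookkeeping on the synchronization pattern is ever needed.

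With this in hand, the base case $n,m \leq l$ is immediate ($\theta$ is the identity there), and for the inductive step one may assume $n>l$, treating the symmetric case $m>l,\,n\leq l$ via commutativity of $\shuffle$. Unfolding one step of the definition writes the left-hand side as a sum of two or three terms of the form $v_x \odot \big((v_{a_2}, \ldots, v_{a_n}) \shuffle (v_{b_1}, \ldots, v_{b_m})\big)$, $v_x \odot \big((v_{a_1}, \ldots, v_{a_n}) \shuffle (v_{b_2}, \ldots, v_{b_m})\big)$, and possibly $v_\tau \odot \big((v_{a_2}, \ldots, v_{a_n}) \shuffle (v_{b_2}, \ldots, v_{b_m})\big)$; in each, one inner sequence has length $\geq l$ (it contains all or all-but-one of $a_1,\ldots,a_n$, and $n>l$), so the observation strips the leading $v_x$, and the induction hypothesis rewrites each inner shuffle through $\theta$. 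One then checks that $\theta(v_{a_2} \odot \ldots \odot v_{a_n}) = \theta(v_{a_1} \odot \ldots \odot v_{a_n})$ — both keep the last $l$ letters of $a_1\ldots a_n$, since $n-1\geq l$ — and, in the subcase $m\leq l$ where $\theta(v_{b_1} \odot \ldots \odot v_{b_m})$ and $\theta(v_{b_2} \odot \ldots \odot v_{b_m})$ genuinely differ, that $\theta(v_{a_1} \odot \ldots \odot v_{a_n}) \shuffle \theta(v_{b_2} \odot \ldots \odot v_{b_m}) \subseteq \theta(v_{a_1} \odot \ldots \odot v_{a_n}) \shuffle \theta(v_{b_1} \odot \ldots \odot v_{b_m})$ — one more application of the observation, to the first unfolding step of $\theta(v_{b_1} \odot \ldots \odot v_{b_m}) \shuffle \theta(v_{a_1} \odot \ldots \odot v_{a_n})$, which exhibits the smaller term as a summand. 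Idempotence of $\oplus$ then collapses the sum to the desired right-hand side.

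The main obstacle is exactly the head/tail mismatch highlighted above; it is dealt with by pushing everything onto the length-$\geq l$ observation and letting $\oplus$-idempotence swallow the leftover discrepancy between the two $b$-tails. I would pay particular attention to the boundary cases $n=l+1$ and $m=l$, where a tail sequence has length exactly $l$ even though no truncation has taken place, since these are where the identity $\theta(v_{a_2}\odot\ldots) = \theta(v_{a_1}\odot\ldots)$ and the $b$-tail inclusion are least transparent; and to the degenerate case $m=1$, where one of the unfolded shuffles is against the empty sequence.
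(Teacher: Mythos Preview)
Your plan is correct and follows the same overall strategy as the paper: induction on $m+n$, with the key observation that in the $l$-th suffix abstraction $v_{x}\odot(u_1\shuffle u_2)=u_1\shuffle u_2$ whenever one of $u_1,u_2$ has length at least $l$. The paper states exactly this fact and uses it in the same way to collapse the unfolded shuffle back to $\theta(\cdot)\shuffle\theta(\cdot)$.

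Where you differ is in the case analysis. The paper's inductive step only treats the case $n>l$ \emph{and} $m>l$; the mixed case $n>l,\ m\le l$ (or symmetrically) is never addressed there, even though it is needed for the induction to go through. You handle it explicitly: after stripping the leading symbol and applying the induction hypothesis, the term coming from the $b$-side yields $\theta(a)\shuffle(v_{b_2},\ldots,v_{b_m})$, and you show this is $\le\theta(a)\shuffle(v_{b_1},\ldots,v_{b_m})$ by unfolding the latter once and using the observation again (since $|\theta(a)|=l$). Idempotence then collapses the sum. This extra step is exactly what the paper is missing, so your argument is in fact more complete than the published one while remaining in the same framework. Your flagged boundary cases ($n=l+1$, $m=l$, $m=1$) are the right ones to watch; they all go through with the argument you sketch.
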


\begin{proof}
	We will show again by induction on $m + n$ that, given two sequences $(v_{a_1}, \ldots, v_{a_n})$ and $(v_{b_1}, \ldots, v_{b_m})$, we have $(v_{a_1}, \ldots, v_{a_n}) \shuffle (v_{b_1}, \ldots, v_{b_m}) = \theta (v_{a_1} \odot \ldots \odot v_{a_n}) \shuffle \theta (v_{b_1} \odot \ldots \odot v_{b_m})$, i.e., that the shuffle of two path sequences is equal to the shuffle of their smallest representations.
	% Suffixe
	\begin{itemize}
	 	\item if $n \leq l$ and $m \leq l$, $\theta (v_{a_1} \odot \ldots \odot v_{a_n}) = (v_{a_1}, \ldots, v_{a_n})$ and $\theta (v_{b_1} \odot \ldots \odot v_{b_m}) = (v_{b_1}, \ldots, v_{b_m})$, by definition of the $l$-th suffix abstraction: indeed, the smallest representation in this abstraction of a word of length smaller than $l$ is itself.
	 	
	 	\item if $n > l$ and $m > l$, note that $\theta (v_{a_1} \odot \ldots \odot v_{a_n}) = (v_{a_{n-l+1}}, \ldots, v_{a_l})$ and $\theta (v_{b_1} \odot \ldots \odot v_{b_m}) = (v_{b_{m-l+1}}, \ldots, v_{b_l})$ by definition of the $l$-th suffix. If we suppose that $b_1 = \overline{a_1}$:
	 	
	 	\[ (v_{a_1}, \ldots, v_{a_n}) \shuffle (v_{b_1}, \ldots, v_{b_m}) \]
	 	\[ = v_{a_1} \odot ( (v_{a_2}, \ldots, v_{a_n}) \shuffle (v_{b_1}, \ldots, v_{b_m}) ) \oplus v_{\overline{a_1}} \odot ( (v_{a_1}, \ldots, v_{a_n}) \shuffle \] 
	 	\[ (v_{b_2}, \ldots, v_{b_m}) ) \oplus v_{\tau} \odot ( (v_{a_2}, \ldots, v_{a_n}) \shuffle (v_{b_2}, \ldots, v_{b_m}) ) \]
	 	\[ = v_{a_1} \odot ( (v_{a_{n-l+1}}, \ldots, v_{a_{n}}) \shuffle (v_{b_{m-l+1}}, \ldots, v_{b_m}) ) \oplus v_{\overline{a_1}} \odot \] 
	 	\[ ( (v_{a_{n-l+1}}, \ldots, v_{a_n}) \shuffle (v_{b_{m-l+1}}, \ldots, v_{b_{m}}) ) \oplus v_{\tau} \odot ( (v_{a_{n-l+1}}, \ldots, v_{a_{n}}) \shuffle \]
	 	\[ (v_{b_{m-l+1}}, \ldots, v_{b_{m}}) ) \]
	 	
	 	if we apply the induction hypothesis. However, by definition of the suffix abstraction, given two sequences $(v_{x_1}, \ldots, v_{x_l})$ and $w$ and $\forall x_0 \in Act$, $v_{x_0} \odot ((v_{x_1}, \ldots, v_{x_l}) \shuffle w) = ((v_{x_1}, \ldots, v_{x_{l}}) \shuffle w)$. Intuitively, the symbol $v_{x_{1}}$ is concatenated to shuffled paths that are already of length greater than $l$, hence, will be cut out of the suffix. Therefore:
	 	\[ (v_{a_1}, \ldots, v_{a_n}) \shuffle (v_{b_1}, \ldots, v_{b_m}) \]
	 	\[ = (v_{a_{n-l+1}}, \ldots, v_{a_n}) \shuffle (v_{b_{m-l+1}}, \ldots, v_{b_{m}}) \]
	 	The case $b_1 \neq \overline{a_1}$ is similar. Hence, the induction holds.
	\end{itemize}
		
	As a consequence, the shuffle product is well-defined: if $v_{a_1} \odot \ldots \odot v_{a_n} = v_{b_1} \odot \ldots \odot v_{b_m} = \pi$, then for all sequences $w$, $(v_{a_1}, \ldots, v_{a_n}) \shuffle w = \theta(\pi) \shuffle w = (v_{b_1}, \ldots, v_{b_m}) \shuffle w$.	
\end{proof}

From now on, we consider that $\alpha$ is either the prefix or suffix abstraction of rank $l$.

\subsection{Computing $pre^*_K ( M, C )$}

Given a SDPN $M$ and a regular set $C$ of configurations of $M$ accepted by an $M$-automaton $A$, we want to compute a $K$-automaton $A_{pre^*_K}$ accepting $pre^*_K ( M, C )$. To this end, we will add new labels in $K^K$ to the $M$-automaton $A_{pre^*}$.

Let $Q$ be the set of states of $A$, hence, of $A_{pre^*}$ as well. We now consider the following set of constraints in the abstract domain on the labels of transitions of $A_{pre^*}$ in $S_S \times \Gamma \times S_S$:

\begin{description}
	\item[$( Y_1 )$] if $t$ belongs to $A$, then: \[ Id \le \lambda ( t ) \]
	
	\item[$( Y_2 )$] for each rule $p \gamma \xrightarrow{a} p' \gamma' \in \Delta$, for each $q \in Q$, for each $s \in S_c$: \[ v_{a} \odot \lambda ( s_{p'}, \gamma', q ) \le \lambda ( s_{p}, \gamma, q ) \]
	
	\item[$( Y_3 )$] for each rule $p \gamma \xrightarrow{a} p' \varepsilon \in \Delta$, for each $s \in S_c$: \[v_{a} \odot Id \le \lambda ( s_{p}, \gamma, s_{p'} ) \]
	
	\item[$( Y_4 )$] for each rule $p \gamma \xrightarrow{a} p' \gamma_1 \gamma_2 \in \Delta$, for each $q \in Q$, for each $s \in S_c$: \[\mathop{\bigoplus} \limits_{q' \in Q} v_{a} \odot ( \lambda ( s_{p'}, \gamma_{1}, q' ) \circ \lambda ( q', \gamma_{2}, q ) ) \le \lambda ( s_{p}, \gamma, q ) \] 
	
	\item[$( Y_5 )$] for each rule $p \gamma \xrightarrow{a} p_{2} \gamma_{2} p_{1} \gamma _{1} \in \Delta$, for each $q \in Q$, for each $s \in S_c$: \[\mathop{\bigoplus} \limits_{s'' \xrightarrow{\varepsilon}_{A_{pre^*}} s'} v_{a} \odot ( \lambda ( s_{p_2}, \gamma_{2}, s'' ) ( \overline{1} ) \shuffle \lambda (s'_{p_1}, \gamma_{1} ,q ) ) \le \lambda ( s_{p}, \gamma, q ) \]
\end{description}

Since $\alpha$ is a finite-domain abstraction, the set $K^K$ of functions in $K$ is finite as well. Let $t_1, \ldots, t_m$ be an arbitrary numbering of the transitions of $A_{pre_{K}^{*}}$ labelled with functions in the abstract domain and let $k_1, \ldots, k_n$ be an enumeration of the elements of the finite domain $K$ ($n = | K |$). The labelling constraints of section \ref{labelling_constraints} define a system of inequalities on $m * n$ variables $x_1, \ldots, x_{m n}$ such that its smallest solution is $t_1 ( k_1 ), \ldots, t_1 ( k_n )$, $t_2 ( k_1 ), \ldots, t_m ( k_n )$. It is worth noting that we can replace two different inequalities $e_1 ( x ) \leq t_i ( x )$ and $e_2 ( x ) \leq t_i ( x )$ by a single inequality $e_1 ( x ) \oplus e_2 ( x ) \leq t_i ( x )$. We therefore end up with a system of the form:
\[ f_i ( x_1, \ldots, x_{m n} ) \leq x_i, \text{ for } i = 1, \ldots, m n \]
where the functions $f_i$ are monomials in $K [ x_1, \ldots, x_{m n} ]$. Finding the least solution of this system of inequalities amounts to finding the least pre-fixpoint of the monotonic and continuous function:
\[ F ( x_1, \ldots, x_{m n} ) = ( f_1 ( x_1, \ldots , x_{m n} ), \ldots, f_{m n} ( x_1, \ldots , x_{m n} ) ) \]
By Tarski's theorem, this fixpoint exists and is equal to $\bigoplus \limits_{i \geq 0} F^i ( \overline{0} )$.

In a finite-domain, this iterative computation always terminates in a number of steps bounded by the length of the longest ascending chain in $K$, hence, $l$ for a prefix or suffix abstraction of order $l$. There are $m n$ functions $f_i$, each with a number of $\oplus$, $\odot$, and $\shuffle$ operations in $O ( |\Delta| \cdot |Q| )$. Moreover, according to \cite{BMT-concur05}, the size of the automaton $A_{pre^*}$ is $m = O ( | Q |^2 \cdot | \Delta | )$ . Each iteration step therefore features $O ( n \cdot |\Delta|^{2} \cdot | Q |^3 )$ operations, and the whole procedure, $O ( l \cdot n \cdot | \Delta |^{2} \cdot | Q |^3 )$ operations. For a prefix or suffix abstraction of order $l$, $n = 2^{|Act|^l}$, hence, a total of $O ( l \cdot 2^{|Act|^l} \cdot | \Delta |^{2} \cdot | Q |^3 )$ operations.

\subsection{Finding the abstraction}

We can compute an automaton $A_{pre_{K}^{*}}$ that accepts the set $pre_{K}^{*} ( M, C' )$. We then want to find a $K$-automaton $A'$ that accepts $pre_{K}^{*} ( M, C' ) \cap C \times \Pi_{K}$.

To do so, we define the intersection $A' = ( \Sigma, S', \delta', s'_{init}, F' )$ of the automaton $A_{pre^{*}} = ( \Sigma, S, \delta, s_{init}, F )$ with an $M$-automaton accepting $C$ called $A_1 = ( \Sigma, S_1, \delta_1, s_{1, init}, F_1 )$ accepting $C$, where $S' = S \times S_1$, $s'_{init} = ( s_{init}, s_{1, init} )$, $F = F \times F_1$, and $\delta = \{ ( q, q_1 ) \xrightarrow{a} ( q', q'_1 ) \mid q \xrightarrow{a} q' \in \delta, q \xrightarrow{a} q' \in \delta_1 \}$. Moreover, we label $A'$ with abstract functions in such a manner that $\lambda ( ( q, q_1 ), a, ( q', q'_1 ) ) = \lambda ( q , a, q' )$.

The $K)$-automaton $A'$ then obviously accepts $pre_{K}^{*} ( M, C' ) \cap C \times \Pi_{K}$. Eventually, our abstraction is $\alpha ( Paths_M ( C, C' ) ) = \mathop{\bigoplus} \{ \pi \mid ( c, \pi ) \in L_K ( A' ) \}$.

\section{Using our framework in a iterative abstraction refinement scheme}
\label{sdpn_cegar}

Following the work of Chaki et al. in \cite{CCKRT-etaps06}, we propose a semi-decision procedure that, in case of termination, allows us to answer exactly whether $Paths_M ( C, C' ) \cap \tau^* = \emptyset$.

We first model a program as a SDPN $M$, as shown in section \ref{sdpn_program_model}, its starting configurations as a regular set $C$, and a set of critical configurations whose reachability we need to study as another regular set $C'$.

We then introduce an iterative abstraction refinement scheme based on the finite-domain abstraction framework detailed previously, starting from $n = 1$.

\begin{description}
	\item[Abstraction:] we compute abstractions $\alpha ( Paths_M ( C, C' ) )$ of the set of executions paths for $\alpha = \alpha^{pref}_n$ and $\alpha = \alpha^{suff}_n$;
	
	\item[Verification:] for $\alpha = \alpha^{pref}_n$ and $\alpha = \alpha^{suff}_n$, we check if $\alpha ( Paths_M ( C, C' ) ) \sqcap \alpha ( \tau^* ) = \bot$; if it is indeed true, then we conclude that $C'$ can't be reached from $C$ using only internal or synchronized actions;
	
	\item[Counter-example validation:] if there is such a path, we then check if our abstraction introduced a spurious counter-example; this can be done in a finite number of steps by checking if this counter-example can be reached within the $n$-th first or last execution steps of the program, depending on which abstraction (prefix or suffix) provided us with a counter-example; if the counter-example is not spurious, then we conclude that $C'$ is reachable from $C$ w.r.t. the strict semantics;
	
	\item[Refinement:] if the counter-example was spurious, we go back to the first step, but use this time finite-domain abstractions of order $n + 1$.
\end{description}
If this procedure ends, we can decide the reachability problem.

\section{A case study}

We use an iterative abstraction refinement scheme to find an error in a Bluetooth driver for Windows NT. We consider here an abstracted version of a driver found in \cite{QW-pldi04} that nonetheless keeps the erroneous trace, in a manner similar to \cite{CCKRT-etaps06} and \cite{PST-cav07}. We then fix the driver by changing one of its routines, then use the abstraction scheme to prove that this new version of the driver is correct.

We model the network of processes in the driver as a SDPN. New requests for the driver are represented by thread spawns, and the driver's counter of active requests, by a counter on the stack, hence, a recursive process, making full use of our model's features.

We were able to discover the bug by applying our finite-domain abstraction in an iterative abstraction refinement scheme: we start from abstractions of order 1 and increment the order until we deduce that the erroneous configuration is reachable using a prefix abstraction of size 12. We then correct one of the program's subroutines accordingly and apply our iterative abstraction refinement scheme to prove it's now error-free.

Note that this bug was also discovered in \cite{CCKRT-etaps06,QW-pldi04,PST-cav07}. However, our approach is more complete and precise than these works: \cite{QW-pldi04} can only discover errors, whereas our scheme can also prove that the modified version of the driver is correct; \cite{CCKRT-etaps06} does not handle dynamic thread creation, and thus had to guess the number of threads for which the error arises; and \cite{PST-cav07} models thread creation as parallel calls (not as spawns), where the father process waits for its children to terminate in order to resume its execution.

\subsection{The program}

The driver consists of a certain number of processes running in parallel. Amongst these processes is an arbitrary number of requests meant to be handled by the driver. An internal counter records the number of requests currently handled by the driver: it is incremented when a request starts handling a task, and decremented once the request terminates. At any time, a special process in the driver may send a 'stop' signal; if it does, the driver switches an internal 'stopping' flag to true. The driver, however, can't stop yet and must wait until all requests have been processed.

Once the 'stopping' flag is switched on, requests may no longer perform their tasks and must instead end while decrementing the counter of active requests. Once the counter reaches zero, an internal 'stopping event' is switched to true, and the driver frees its allocated resources. If a request tries to perform a task after the resources have been released, it must abort and send an error.

Our intuition is that the 'stop' signal can interrupt a request while the latter has already started handling its task after being told the driver was still running; the driver will then free its allocated resources and let the request resume its execution, leading to an error state where the request must abort because the resources it needs are now missing.

\subsection{From the driver to the SDPN model}

We model this network of processes as a SDPN. To do so, we consider that each thread has no internal variables and a single control state, as we switch the handling of the control flow to the stack by storing the current control point of each thread on its stack. The threads can communicate and handle flags and counters by using synchronized actions: as an example, two threads can synchronize over an action $stop$ in order to switch the 'stopping' flag (represented by a control point) to true; a function {\em Increment} can also synchronize with a counter over an action $incr$ in order to increase this counter.

The driver uses the following processes:
\begin{description}
	\item[COUNTER:] this process counts the number of requests the driver receives plus the thread STOP-D; this number is set to 1 initially, is incremented when the driver receives a new request, and is decremented when a request ends;
	
	\item[STOP-D:] this process may issue a request to stop the driver at any time; it has then to wait until all the other requests have finished their work, then, when it receives a signal sent by the function {\em Decrement}, can stop the driver and free its allocated resources;
	
	\item[STOPPING-FLAG:] this process is either in state FALSE-STOP-FLAG (from then on FSF) or state TRUE-STOP-FLAG (TSF), depending on whether STOP-D is trying to stop the driver or not; it is initially in state FSF, and moves to state TSF if it receives a message from STOP-D; no new thread can enter the driver if this process is in TSF;
	
	\item[STOPPING-EVENT:] it is either in state TRUE-STOP-EVENT (TSE) or FALSE-STOP-EVENT (FSE); this process enters state TSE if the driver stops, i.e. when the number of running REQUESTs reaches 0;
	
	\item[GEN-REQ:] this process can spawn new requests as long as the driver is not trying to stop, that is, if STOPPING FLAG isn't in state TSF;
	
	\item[REQUEST:] when a new REQUEST enters the driver, it has to increment the number stored in COUNTER, perform a task, then decrement this number before exiting the driver; it uses two functions {\em Increment} and {\em Decrement} to do so.
\end{description}

If a REQUEST tries to perform its task but the allocated resources of the driver have been released, the program reaches an error state. We will check the reachability of this state.

Each process can be modelled by a SDPN as follows:

\subsubsection{The process COUNTER.}

Let $p_0$ be its unique state. The number of threads is represented by a stack. Its stack alphabet is $\{ 0, 1 \}$. Initially, the stack contains the word $1 0$, meaning that the number of request is zero and only STOP-D is running. It can then contain any word in $1^* 0$. The number of 1's in the stack corresponds to the number of running requests minus 1. The incrementation and decrementation procedures are done by receiving {\em incr} and {\em decr} actions from the functions {\em Increment} and {\em Decrement}.

COUNTER is represented by the following SDPN rules:
\begin{description}
	\item[$(r_{1a})$] $p_0 1 \xrightarrow{\overline{incr}} p_0 11$ and $(r_{1b})$ $p_0 0 \xrightarrow{\overline{incr}} p_0 1 0$; these rules increment the counter when the process is asked to do so;
	
	\item[$(r_2)$] $p_0 1 \xrightarrow{\overline{decr}} p_0 \varepsilon$; this rule decrements the counter when the process is asked to do so;
	
	\item[$(r_3a)$] $p_0 1 \xrightarrow{not-zero} p_0 1$ and $(r_3b)$ $p_0 0 \xrightarrow{is-zero} p_0 0$; these rules test whether the counter is 0 or not and send this information to other threads.
\end{description}

\subsubsection{The process STOPPING-FLAG.}

Let $p_1$ be its unique state. The process has two control points $\mbox{FSF}$ and $\mbox{TSF}$. STOPPING-FLAG is represented by the following SDPN rules:
\begin{description}
	\item[$(r_4)$] $p_1 \mbox{FSF} \xrightarrow{\overline{stop}} p_1 \mbox{TSF}$; the process receives a 'stop' request from STOP-D and changes its flag;
	
	\item[$(r_5)$] $p_1 \mbox{TSF} \xrightarrow{stopR} p_1 \mbox{TSF}$; the process sends a 'stop' message to the incoming REQUESTs;	
	
	\item[$(r_6)$] $p_1 \mbox{FSF} \xrightarrow{not-stopR} p_1 \mbox{FSF}$; the process sends a 'non-stop' request to the incoming REQUESTs.
\end{description}

\subsubsection{The process STOPPING-EVENT.}

Let $p_2$ be its unique state. The process has two control points $\mbox{FSE}$ and $\mbox{TSE}$. STOPPING-EVENT is represented by the following SDPN rules:
\begin{description}
	\item[$(r_7)$] $p_2 \mbox{FSE} \xrightarrow{\overline{has-stopped}} p_2 \mbox{TSE}$; the process receives an 'has-stopped' message and knows that the driver has stopped;
	
	\item[$(r_8)$] $p_2 \mbox{TSE} \xrightarrow{has-stopped} p_2 \mbox{TSE}$; once the driver has stopped, it keeps sending the 'has-stopped' message;
	
	\item[$(r_9)$] $p_2 \mbox{FSE} \xrightarrow{not-stopped} p_2 \mbox{FSE}$; the process sends a 'not-stopped' message if the driver is still running.	
\end{description}

\subsubsection{The process STOP-D.}

Let $p_3$ be its unique state. It has three control points $s_0$, $s_1$, and $R$, the last one standing for 'release resources'. STOP-D is represented by the following SDPN rules:

\begin{description}
	\item[$(r_{10})$] $p_3 s_0 \xrightarrow{stop} p_3 f_{Decrement} s_1$; STOP-D sends a 'stop' request to the process STOPPING-FLAG, and calls the function {\em Decrement};
	
	\item[$(r_{11})$] $p_3 s_1 \xrightarrow{\overline{has-stopped}} p_3 R$; if the driver has stopped, the allocated resources are released.
\end{description}

\subsubsection{The process REQUEST.}

The process REQUEST executes the following instructions:
\begin{itemize}
	\item it starts by calling a function {\em Increment}; this function returns -1 (stack symbol $a_{-1}$) if the STOPPING-FLAG is set to TRUE, otherwise, it increments the counter, and returns 0 (stack symbol $a_{0}$);
	\item if {\em Increment} returns 0, then REQUEST performs its task if it can assert that STOPPING-EVENT is in state FSE (i.e., that the driver is still running);
	\item it calls afterwards a function {\em Decrement} that decrements the counter; if this counter has reached 0, it sends a message to inform STOPPING-EVENT that the driver has stopped since there are no more requests running.
\end{itemize}

The process REQUEST has three control points $r_0$, $r_{Work}$, $r_{End-Work}$, and $A$, the last one standing for 'abort', and an unique state $p_4$. It can be modelled by the following SDPN rules:

\begin{description}
	\item[$(r_{12})$] $p_4 r_0 \xrightarrow{\tau} p_4 f_{Increment}$; first, the function {\em Increment} is called;
	
	\item[$(r_{13a})$] $p_4 a_0 \xrightarrow{\tau} p_4 r_{Work} r_{End-Work}$ and $(r_{13b})$ $p_4 r_{Work} \xrightarrow{\tau} \varepsilon$; if the function {\em Increment} returns 0, then REQUEST can perform its (abstracted) work;
	
	\item[$(r_{14})$] $p_4 r_{End-Work} \xrightarrow{\overline{non-stopped}} p_4 f_{Decrement}$; once the work is finished, the process checks if the driver is still running, i.e. that process STOPPING-EVENT is in FSE;
	
	\item[$(r_{15})$] $p_4 r_{End-Work}\xrightarrow{\overline{has-stopped}} p_4 A$; if it is not the case, the program has reached an erroneous configuration and aborts.
\end{description}

\subsubsection{The process GEN-REQ.}

Let $p_5$ be its unique state, and $g_0$ its unique control point. GEN-REQ is represented by the following SDPN rule:
\begin{description}
	\item[$(r_{16})$] $p_5 g_0 \xrightarrow{\overline{non-stopped}} p_4 r_0 p_5 g_0$; the process can spawn new requests as long as the driver is running.
\end{description}

\subsubsection{The function Increment.}

It has three control points $i_0$, $a_0$, and $a_{-1}$. The function {\em Increment} is represented by the following SDPN rules, as only REQUEST calls this function:

\begin{description}
	\item[$(r_{17})$] $p_4 f_{Increment} \xrightarrow{\overline{stopR}} p_4 a_{-1}$; if STOPPING-FLAG is in TSF, the function returns $-1$;
	
	\item[$(r_{18a})$] $p_4 f_{Increment} \xrightarrow{\overline{not-stopR}} p_4 i_0$ and $(r_{18b})$ $p_4 i_0 \xrightarrow{incr} p_4 a_0$; otherwise, it returns 0 and increments the counter.
\end{description}

\subsubsection{The function Decrement.}

It has two control points $d_0$ and $d_1$. The function {\em Decrement} is represented by the following SDPN rules, where $p$ stands for either $p_3$ or $p_4$, as only REQUEST and STOP-D call this function:

\begin{description}
	\item[$(r_{19})$] $p f_{Decrement} \xrightarrow{decr} p d_0$; the counter is decremented;
	
	\item[$(r_{20})$] $p d_0 \xrightarrow{\overline{not-zero}} p \varepsilon$; then, if it has not reached 0, the function terminates;
	
	\item[$(r_{21a})$] $p d_0 \xrightarrow{\overline{is-zero}} p d_1$ and $(r_{21b})$ $p d_1 \xrightarrow{has-stopped} \varepsilon$; otherwise, a message 'has-stopped' is sent to STOPPING-EVENT.
\end{description}

\medskip
We therefore model the program as a SDPN $M = ( Act, P, \Gamma, \Delta )$, with:
\begin{itemize}
	\item a set of control states $P = \{ p_0, p_1, p_2, p_3, p_4, p_5 \}$;
	
	\item a set of stack symbols $\Gamma = \{ 0, 1, \mbox{FSF}, \mbox{TSF}, \mbox{FSE}, \mbox{TSE}, s_0, s_1, R, r_0, r_{Work}, \linebreak r_{End-Work}, A, g_0, f_{Increment}, i_0, a_0, a_{-1}, f_{Decrement}, d_0, d_1 \}$;
	
	\item a set of actions $Act = \{ \tau, \} \cup L \cup \overline{L}$, where $L = \{ incr, decr, is-zero, not-zero, stopR, not-stopR, has-stopped, non-stopped, stop \}$;
	
	\item a set of transitions $\Delta = \{ r_{1a}, r_{1b}, \ldots, r_{21a}, r_{21b} \}$.
\end{itemize}

In the initial configuration, the counter is set to one, the flags in the processes STOPPING-FLAG and STOPPING-EVENT to FALSE, and all processes but REQUESTs (that will later be spawned by GEN-REQ) are running. We then need to check if the SDPN model of the program can reach with perfect synchronization a configuration where the process COUNTER has released its ressources and reached a control point $R$, while a process REQUEST has aborted its task and reached a control point $A$.

Our goal is therefore to check if, from the initial configuration $c_0 = p_0 10 \cdot p_1 FSF \cdot p_2 FSE \cdot p_3 s_0 \cdot p_5 g_0$, a configuration in the forbidden set of configurations: $$C' = ( P \Gamma^* )^* p_3 R ( P \Gamma^* )^* p_4 A \Gamma^* ( P \Gamma^* )^*$$ is reachable.

\subsection{An erroneous execution path}

We write $(r_i) \leftrightarrow (r_j)$ if we apply two rules that synchronize. The erroneous execution path is the following, starting from configuration $c_0$
\[ p_0 10 \cdot p_1 FSF \cdot p_2 FSE \cdot p_3 s_0 \cdot p_5 g_0 \]

$( r_{16} )$ GEN-REQ spawns a REQUEST;
\[ p_0 10 \cdot p_1 FSF \cdot p_2 FSE \cdot p_3 s_0 \cdot p_4 r_0 \cdot p_5 g_0 \]

$( r_{12} )$ REQUEST calls {\em Increment};
\[ p_0 10 \cdot p_1 FSF \cdot p_2 FSE \cdot p_3 s_0 \cdot p_4 f_{Increment} \cdot p_5 g_0 \]

$(r_{6}) \leftrightarrow (r_{18a})$ STOPPING-FLAG sends $not-stopR$ to {\em Increment};
\[ p_0 10 \cdot p_1 FSF \cdot p_2 FSE \cdot p_3 s_0 \cdot p_4 i_0 \cdot p_5 g_0 \]

$(r_{10}) \leftrightarrow (r_{4})$ STOP-D sends $stop$ to STOPPING-FLAG;
\[ p_0 10 \cdot p_1 TSF \cdot p_2 FSE \cdot p_3 f_{Decrement} s_1 \cdot p_4 i_0 \cdot p_5 g_0 \]

$(r_{19}) \leftrightarrow (r_{2})$ {\em Decrement} called by STOP-D sends $decr$ to COUNTER;
\[ p_0 0 \cdot p_1 TSF \cdot p_2 FSE \cdot p_3 d_0 s_1 \cdot p_4 i_0 \cdot p_5 g_0 \]

$(r_{3b}) \leftrightarrow (r_{21a})$ COUNTER sends $is-zero$ to {\em Decrement} called by STOP-D;
\[ p_0 0 \cdot p_1 TSF \cdot p_2 FSE \cdot p_3 d_1 s_1 \cdot p_4 i_0 \cdot p_5 g_0 \]

$(r_{18b}) \leftrightarrow (r_{1b})$ {\em Increment} called by REQUEST resume its execution, returns $0$, and sends $incr$ to COUNTER;
\[ p_0 10 \cdot p_1 TSF \cdot p_2 FSE \cdot p_3 d_1 s_1 \cdot p_4 a_0 \cdot p_5 g_0 \]

$(r_{21a}) \leftrightarrow (r_{8})$ {\em Decrement} called by STOP-D sends a signal $has-stopped$ to procedure STOPPING-EVENT;
\[ p_0 10 \cdot p_1 TSF \cdot p_2 TSE \cdot p_3 s_1 \cdot p_4 a_0 \cdot p_5 g_0 \]

$(r_{8}) \leftrightarrow (r_{11})$ {\em Decrement} STOPPING-EVENT sends $has-stopped$ to STOP-D that releases resources;
\[ p_0 10 \cdot p_1 TSF \cdot p_2 TSE \cdot p_3 R \cdot p_4 a_0 \cdot p_5 g_0 \]

$(r_{13a})$ REQUEST starts its task;
\[ p_0 10 \cdot p_1 TSF \cdot p_2 TSE \cdot p_3 R \cdot p_4 r_{Work} r_{End-Work} \cdot p_5 g_0 \]

$(r_{13b})$ REQUEST performs its work;
\[ p_0 10 \cdot p_1 TSF \cdot p_2 TSE \cdot p_3 R \cdot p_4 r_{End-Work} \cdot p_5 g_0 \]

$(r_{8}) \leftrightarrow (r_{15})$ STOPPING-EVENT sends $has-stopped$ to REQUEST that aborts;
\[ p_0 10 \cdot p_1 TSF \cdot p_2 TSE \cdot p_3 R \cdot p_4 A \cdot p_5 g_0 \]

This is an erroneous configuration reachable in 12 steps. We can find it using a prefix abstraction of order 12.

\section{Related work}

Wenner introduced in \cite{W-esop10} a model of \emph{weighted dynamic pushdown networks} (WDPNs), extending the work of Reps et al. on \emph{weighted pushdown systems} in \cite{RSJM-sas03} to DPNs. WDPNs share some similarities with our abstraction framework on SDPNs: each transition is labelled by a weight in a bounded idempotent semiring, these weights can be composed along execution paths, and the sum of the weights of all execution paths between two sets of configurations can be computed, provided that a simple extension of the original semiring to an abstract set of execution hedges can be found. WDPNs, however, do not feature simultaneous, synchronized actions between pushdown processes. Moreover, in order to be efficient, the extensions of the abstract domain have to be chosen on a case-by-case basis in order to label tree automata, whereas our framework works for every finite-domain abstraction and only uses finite state automata.

\emph{Multi-stack pushdown systems} (MPDSs) are pushdown systems with two or more stacks, and can be used to model synchronized parallel programs. Qadeer et al. introduced in \cite{QR-tacas05} the notion of context, that is, a part of an execution path during which only one stack of the automaton can be modified. The reachability problem within a bounded number of context switches is decidable for MPDSs. However, MPDSs have a bounded number of stacks and, unlike SDPNs, cannot therefore handle the dynamic creation of new threads.

Bouajjani et al. introduced in \cite{BESS-fsttcs05} \emph{asynchronous dynamic pushdown networks}, or ADPNs. This model extends DPNs by adding a global control state to the whole network as a mean of communication between processes; each pushdown process can then apply rules either by reading its own local state or the global state of the network. The reachability problem within a bounded number of context switches is decidable for ADPNs, where a context here stands for a part of an execution path during which transitions altering global variables are all executed by the same process. This is an under-approximation of the actual reachability problem for synchronized parallel programs, whereas we compute in this paper an over-approximation of the same problem. The former can be used to find errors in a program but, unlike the latter, does not allow one to check that a program is free from errors.

Concurrent program with recursive procedures can be modeled as a network of synchronized pushdown systems. This model, called \emph{communicating pushdown systems} (CPDSs), was introduced by Bouajjani et al. in \cite{BET-popl03}. The reachability problem being undecidable for this class of automata, a Kleene algebra framework was designed in order to find an over-approximation of the answer. Extensions of the abstraction framework of \cite{BET-popl03} were defined in \cite{BET-infinity04, T-vissas05} to compute abstractions of execution paths of multi-threaded recursive programs communicating via rendez-vous. However, unlike SDPNs, the models considered in these articles cannot describe thread spawns, where the father of a new thread can resume its execution independently of its children.

\section{Conclusion}

Our first contribution in this paper is a new pushdown system model that can handle both synchronization by rendez-vous between parallel threads and thread spawns. The reachability problem being undecidable for this class of automata, we seek to over-approximate it by abstracting the set of paths between two regular sets of configurations $C$ and $C'$.

To this end, we introduce relaxed semantics with weaker synchronization constraints and extend the Kleene algebra abstraction framework shown in \cite{BET-popl03}. We label an automaton accepting the set of predecessors of $C'$ with functions on a finite Kleene abstraction. These functions depend on a set of constraints computed according to the pushdown rules used during the saturation procedure.

This over-approximation allows us to define an iterative abstraction refinement scheme. We then apply it to find an erroneous execution trace in a Windows Bluetooth driver.

\bibliography{references_correction}

\end{document}